\documentclass[a4paper]{amsart}
\usepackage{fullpage}

	\usepackage{amsfonts,amssymb,amsmath,amsthm,gensymb,color,epic,eepic,float}
	\usepackage[mathscr]{euscript}
	\allowdisplaybreaks 


			
			\newtheorem{myTheorem}{Theorem}
			
			\newtheorem{myLemma}{Lemma}
			\newcommand{\input{diagrams/default.tex}}{\input{diagrams/default.tex}}

\def\psivec{|\Psi^{(2)}_{\alpha_1\ldots\alpha_{\ell}}\rangle}
\def\vac{|0\rangle}
\def\avac{|\Uparrow_{\alpha}\rangle}

\def\li{\lambda_{\mbox{\rm\tiny I}}}
\def\lii{\lambda_{\mbox{\rm\tiny II}}}
\def\mi{\mu_{\mbox{\rm\tiny I}}}
\def\mii{\mu_{\mbox{\rm\tiny II}}}
\def\lc{\lambda^{C}}
\def\lci{\lambda^{C}_{\mbox{\rm\tiny I}}}
\def\lcii{\lambda^{C}_{\mbox{\rm\tiny II}}}

\def\lb{\lambda^{B}}
\def\lbi{\lambda^{B}_{\mbox{\rm\tiny I}}}
\def\lbii{\lambda^{B}_{\mbox{\rm\tiny II}}}

\def\mc{\mu^{C}}
\def\mci{\mu^{C}_{\mbox{\rm\tiny I}}}
\def\mcii{\mu^{C}_{\mbox{\rm\tiny II}}}

\def\mb{\mu^{B}}
\def\mbi{\mu^{B}_{\mbox{\rm\tiny I}}}
\def\mbii{\mu^{B}_{\mbox{\rm\tiny II}}}

\begin{document}

\title{Scalar products in generalized models with $SU(3)$-symmetry}

\author{M Wheeler}

\address{Laboratoire de Physique Th\'eorique et Hautes Energies, CNRS UMR 7589 and Universit\'e Pierre et Marie Curie (Paris 6), 4 place Jussieu, 75252 Paris cedex 05, France}
\email{mwheeler@lpthe.jussieu.fr}

\keywords{$SU(2)$ and $SU(3)$-invariant models. Nested Bethe Ansatz. Scalar products.}

\begin{abstract}
We consider a generalized model with $SU(3)$-invariant $R$-matrix, and review the nested Bethe Ansatz for constructing eigenvectors of the transfer matrix. A sum formula for the scalar product between generic Bethe vectors, originally obtained by Reshetikhin, is discussed. This formula depends on a certain partition function $Z(\{\lambda\},\{\mu\}|\{w\},\{v\})$, which we evaluate explicitly. In the limit when the variables $\{\mu\}$ or $\{v\} \rightarrow \infty$, this object reduces to the domain wall partition function of the six-vertex model $Z(\{\lambda\}|\{w\})$. Using this fact, we obtain a new expression for the off-shell scalar product (between a generic Bethe vector and a Bethe eigenvector), in the case when one set of Bethe variables tends to infinity. The expression obtained is a product of determinants, one of which is the Slavnov determinant from $SU(2)$ theory. 
\end{abstract} 

\maketitle

\setcounter{section}{0}

\section{Introduction}
\label{s:intro}

The calculation of scalar products between generic Bethe states is an extremely important area of study in models solvable by the Bethe Ansatz. On the one hand, the scalar product reduces to the norm-squared of a Bethe eigenvector in the limit where the states become on-shell ({\it i.e.}\/ when both states are parametrized by the same set of roots of the Bethe equations). On the other hand, off-shell scalar products (which in this work always means a scalar product between a generic Bethe vector and a Bethe eigenvector) play a key role in the study of correlation functions in such models. To have any chance of studying asymptotics of correlation functions and related quantities, it is therefore essential to have some manageable expression for the scalar products which are their building-blocks.

In two-dimensional models based on the $SU(2)$-invariant $R$-matrix, the theory of scalar products is well developed. There is a sum formula for the generic scalar product due to Izergin and Korepin (for the original result, see \cite{kor,ik}; for a more detailed derivation based on these earlier papers, see \cite{kbi} and Appendix {\bf A} of the present paper), a determinant formula for the on-shell scalar product proposed by Gaudin \cite{gau} and proved by Korepin \cite{kor}, and a determinant formula for the off-shell scalar product obtained by Slavnov \cite{sla}. The latter representation proved to be very helpful in the algebraic Bethe Ansatz approach to correlation functions of the XXX and XXZ models (see \cite{kmt} and the review article \cite{kmst}).

The subject is not so well understood in models based on higher-rank quantum algebras\footnote{Apart from the results discussed in the remainder of this paragraph, we also mention {\bf 1.} In the models based on $\mathcal{U}_q(\widehat{sl_3})$, multiple-integral formulae for generic scalar products were obtained in \cite{bpr}, {\bf 2.} Determinant formulae for the norm-squared in the $\mathcal{Y}(sl_n)$ and $\mathcal{U}_q(\widehat{sl_n})$ models, which generalize the result of Gaudin \cite{gau}, were obtained in \cite{tv2}. An analogous formula was conjectured in \cite{egsv} for spin chains based on Lie (super) algebras of arbitrarily high rank.}. In the case of models with the $SU(3)$-invariant $R$-matrix, there is a sum formula for the generic scalar product and a determinant formula for the on-shell scalar product, both obtained by Reshetikhin \cite{res} (see also Appendix {\bf B} of the present paper for an amplified discussion of the method used in \cite{res}). More recently, generalizing the work of \cite{res}, the scalar product between a Bethe eigenvector and a twisted Bethe eigenvector was expressed in determinant form \cite{bprs2}. Notably, no determinant formula is known for the off-shell scalar product in these models. The present paper aims to deal with precisely this problem, by evaluating the off-shell scalar product in a limiting case of its variables.

The generic $SU(3)$ scalar product is a function of four sets of variables 
$\{\lambda^C\},\{\lambda^B\},\{\mu^C\},\{\mu^B\}$ and the pseudo-vacuum eigenfunctions $a_{1},a_{2},a_{3}$. Let us denote it by 
$\langle \{\mu^C\},\{\lambda^C\} | \{\lambda^B\},\{\mu^B\} \rangle$, and assume that  $\{\lambda^B\},\{\mu^B\}$ satisfy the nested Bethe equations. Our approach is as follows. {\bf 1.} We take the sum expression for $\langle \{\mu^C\},\{\lambda^C\} | \{\lambda^B\},\{\mu^B\} \rangle$, as given in \cite{res}, as our starting point. This formula contains a certain function, $Z(\{\lambda\},\{\mu\} | \{w\},\{v\})$, which is expressed as the partition function of a lattice with particular boundary conditions. {\bf 2.} We calculate $Z(\{\lambda\},\{\mu\} | \{w\},\{v\})$ explicitly, and find that it is itself given by a sum. {\bf 3.} We list formulae for $Z(\{\lambda\},\{\mu\} | \{w\},\{v\})$ as one of its sets of variables tends to infinity. In this case, we find that it behaves as a domain wall partition function of the six-vertex model. {\bf 4.} Using the results of {\bf 1} and {\bf 3}, we obtain a sum formula for $\langle \{\mu^C\},\{\lambda^C\} | \{\infty\},\{\mu^B\} \rangle$ and $\langle \{\mu^C\},\{\lambda^C\} | \{\lambda^B\},\{\infty\} \rangle$, which denote the limiting cases $\{\lambda^B\} \rightarrow \infty$ and $\{\mu^B\} \rightarrow \infty$ of the off-shell scalar product. {\bf 5.} The summation in {\bf 4} factorizes into two parts, both of which can be evaluated as determinants using results from $SU(2)$ theory. Hence we obtain both $\langle \{\mu^C\},\{\lambda^C\} | \{\infty\},\{\mu^B\} \rangle$ and $\langle \{\mu^C\},\{\lambda^C\} | \{\lambda^B\},\{\infty\} \rangle$ as a product of two determinants.

Our result extends, and was partially motivated by a result of J~Caetano \cite{cae} in the context of {\bf 1.} An XXX spin chain with $SU(3)$-symmetry (\textit{i.e.} a spin chain constructed as an $L$-fold tensor product of fundamental representations of $\mathcal{Y}(sl_3)$), which is a special case of the generalized model presented in this paper (\textit{i.e.} we deal exclusively with operators acting on a Hilbert space $\mathcal{H}$ and never require to write it as $\mathcal{H} = \otimes_{i=1}^{L} \mathcal{H}_i$, or to prescribe $\mathcal{H}$ explicitly), and 
{\bf 2.} In the limit where both sets of Bethe roots $\{\lb\},\{\mb\} \rightarrow \infty$ simultaneously.

The paper is organized as follows. Section {\bf 2} and {\bf 3} review the algebraic and nested Bethe Ans\"atze for models with an $SU(2)$ and $SU(3)$-invariant $R$-matrix, respectively. These sections are designed to fix notation related to the models, such as their transfer matrices, Bethe eigenvectors and Bethe equations. The reader with familiarity of these subjects can skip these sections. Section {\bf 4} reviews results related to scalar products of the $SU(2)$ models. We list the sum formula for the generic scalar product \cite{kor,ik}, determinant formulae for the domain wall partition function \cite{ize,kos1,kos2}, and a determinant formula for an object which appeared recently in \cite{fw}, the {\it partial} domain wall partition function. We also give the determinant formula for the off-shell scalar product, discovered in \cite{sla}. Section {\bf 5} contains our new results relating to the scalar products of the $SU(3)$ models, and proceeds along the lines described in the previous paragraph. Section {\bf 6} contains concluding remarks. Appendices {\bf\ref{app:1}} and {\bf\ref{app:2}} contain detailed derivations of the sum formulae for the scalar product in $SU(2)$ and $SU(3)$-invariant models, respectively.

\section{Algebraic Bethe Ansatz for $SU(2)$-invariant models}

In this section we review the algebraic Bethe Ansatz for models with the $R$-matrix (\ref{Rmat-sl2}). For more details, see the seminal paper \cite{fst} on the Quantum Inverse Scattering Method, and the standard references \cite{fad} and \cite{kbi}.

\subsection{$SU(2)$-invariant $R$-matrix}

Let $V_{\alpha}, V_{\beta}$ be two copies of the vector space $\mathbb{C}^2$. The $SU(2)$-invariant $R$-matrix is given by
\begin{align}
R_{\alpha\beta}(\lambda,\mu)
=
\left(
\begin{array}{cc|cc}
f(\lambda,\mu) & 0 & 0 & 0 \\
0 & 1 & g(\lambda,\mu) & 0 \\ \hline
0 & g(\lambda,\mu) & 1 & 0 \\
0 & 0 & 0 & f(\lambda,\mu)
\end{array}
\right)_{\alpha\beta}
\label{Rmat-sl2}
\end{align}
where the subscript indicates that the $R$-matrix is an element of 
${\rm End}(V_{\alpha} \otimes V_{\beta})$. The entries $f(\lambda,\mu)$, $g(\lambda,\mu)$ are the simple rational functions\footnote{The $R$-matrix (\ref{Rmat-sl2}) occurs in the context of spin chains based on representations of $\mathcal{Y}(sl_2)$. An alternative solution of the Yang-Baxter equation, related to $\mathcal{U}_q(\widehat{sl_2})$, has the weights $f(\lambda,\mu)$, $g(\lambda,\mu)$ parametrized multiplicatively \cite{jim2} or in terms of trigonometric functions. For simplicity, in this paper we restrict our attention to rational models, but our results could be extended to the trigonometric case without difficulty.}
\begin{align}
	f(\lambda,\mu) = \frac{\lambda-\mu+1}{\lambda-\mu}, 
	\quad
	g(\lambda,\mu) = \frac{1}{\lambda-\mu}.
	\label{rat-wt} 
\end{align}
For later purposes, it is useful to represent the components of the $R$-matrix as vertices, as shown in Figure {\bf \ref{vert1}}. This is the well known connection with the six-vertex model of statistical mechanics \cite{bax1}.
\begin{figure}[H]

\begin{center}
\begin{minipage}{4.3in}

\setlength{\unitlength}{0.0005cm}
\begin{picture}(5000,6000)(3000,16500)

\put(6000,20000)
{$
\Big[R_{\alpha\beta}(\lambda,\mu)\Big]^{i_{\alpha} j_{\alpha}}_{i_{\beta} j_{\beta}}
=
$}


\path(14000,20000)(18000,20000) \put(12500,20000){$\lambda$}
\put(13500,20300){$i_{\alpha}$}
\put(18000,20300){$j_{\alpha}$} 


\path(16000,18000)(16000,22000) \put(15500,16500){$\mu$}
\put(16000,17500){$i_{\beta}$}
\put(16000,22500){$j_{\beta}$}

\end{picture}

\end{minipage}
\end{center}

\caption{Representing the components of the $R$-matrix as vertices. The indices $i_{\alpha},j_{\alpha} \in \{1,2\}$ denote block $(i_{\alpha},j_{\alpha})$ of (\ref{Rmat-sl2}), while $i_{\beta},j_{\beta} \in \{1,2\}$ denote the $(i_{\beta},j_{\beta})$-th component within that block.}

\label{vert1}

\end{figure}

\subsection{Generalized $SU(2)$ models}
\label{ssec:gen-su2}

We consider a general $SU(2)$-invariant model with the $2\times 2$ monodromy matrix
\begin{align}
T_{\alpha}(\lambda)
=
\left(
\begin{array}{cc}
A(\lambda) & B(\lambda)
\\
C(\lambda) & D(\lambda)
\end{array}
\right)_{\alpha}.
\label{mon-sl2}
\end{align}
The entries of (\ref{mon-sl2}) are operators which satisfy the Yang-Baxter algebra
\begin{align}
R_{\alpha\beta}(\lambda,\mu)
T_{\alpha}(\lambda)
T_{\beta}(\mu)
=
T_{\beta}(\mu)
T_{\alpha}(\lambda)
R_{\alpha\beta}(\lambda,\mu)
\label{int-sl2}
\end{align}
where $R_{\alpha\beta}(\lambda,\mu)$ is the $SU(2)$-invariant $R$-matrix (\ref{Rmat-sl2}). Such a model is solvable by the algebraic Bethe Ansatz provided one can find {\it pseudo-vacuum} states $|0\rangle,\langle 0|$ on which the operator entries of (\ref{mon-sl2}) act according to the rules
\begin{align}
A(\lambda) |0\rangle = a(\lambda) |0\rangle,
\quad
D(\lambda) |0\rangle = d(\lambda) |0\rangle,
\quad
C(\lambda) |0\rangle = 0,
\quad
B(\lambda) |0\rangle \not= 0
\label{ps-vac1}
\\
\langle 0| A(\lambda) = a(\lambda) \langle 0|,
\quad
\langle 0| D(\lambda) = d(\lambda) \langle 0|,
\quad
\langle 0| C(\lambda) \not= 0,
\quad
\langle 0| B(\lambda) = 0
\label{ps-vac2}
\end{align}
where $a(\lambda),d(\lambda)$ are rational functions of $\lambda$. In the sequel we let $\mathcal{H}$ denote the Hilbert space generated by the action of $B(\lambda)$ on $|0\rangle$. Similarly, we let $\mathcal{H}^{*}$ denote the Hilbert space generated by the action of $C(\lambda)$ on $\langle 0|$. In practice, $\mathcal{H}$ is a highest weight vector space of finite or infinite dimension, and $|0\rangle$ is its highest weight vector.

In this work we will not concern ourselves with the various technical questions which arise in the context of these models, such as {\bf 1.} The existence of a suitable $|0\rangle$, given $T_{\alpha}(\lambda)$ satisfying (\ref{int-sl2}), {\bf 2.} The converse problem, of constructing $T_{\alpha}(\lambda)$, given a pair of eigenfunctions $a(\lambda)$ and $d(\lambda)$, {\bf 3.} A complete characterization of the vector space $\mathcal{H}$. For more details on these points, we refer to \cite{kor2,tar1,tar2}. Some physical models which are special cases of the model described above include, for example, the XXX Heisenberg spin chain \cite{ft} and the quantum nonlinear Schr\"odinger equation. 

\subsection{Fundamental commutation relations}
\label{com-rel1}

To perform the algebraic Bethe Ansatz, one needs three commutation relations which can be extracted as individual components of (\ref{int-sl2}). Firstly, the $B$-operators commute,
\begin{align}
B(\lambda) B(\mu)
=
B(\mu) B(\lambda).
\label{2bb}
\end{align}
Secondly, we obtain the following relation between the $A$ and $B$-operators,
\begin{align}
A(\mu) B(\lambda)
=
f(\lambda,\mu)
B(\lambda) A(\mu)
-
g(\lambda,\mu)
B(\mu) A(\lambda).
\label{2ab}
\end{align}
Thirdly, we obtain the following relation between the $D$ and $B$-operators,
\begin{align}
D(\lambda) B(\mu)
=
f(\lambda,\mu)
B(\mu) D(\lambda)
-
g(\lambda,\mu)
B(\lambda) D(\mu).
\label{2db}
\end{align}

\subsection{Bethe Ansatz for eigenvectors}

The transfer matrix is the trace of the monodromy matrix (\ref{mon-sl2}) on $V_{\alpha}$,
\begin{align}
\mathcal{T}(x)
=
A(x)+D(x)
\end{align}
and it is the goal of the Bethe Ansatz to find states $|\Psi\rangle \in \mathcal{H}$ which are eigenvectors of $\mathcal{T}(x)$, satisfying
\begin{align}
\mathcal{T}(x) |\Psi\rangle
=
\Lambda(x) |\Psi\rangle.
\label{trans-sl2}
\end{align}
The Ansatz for the eigenvectors is a string of $B$-operators acting on the pseudo-vacuum,
\begin{align}
|\Psi\rangle
=
B(\lambda_1)
\dots
B(\lambda_{\ell})
|0\rangle
\label{bethe-state}
\end{align}
where $\lambda_i \not= \lambda_j$ for all $i\not= j$. Such vectors span $\mathcal{H}$, by definition. In the case where $\mathcal{H}$ is finite dimensional, $|\Psi\rangle$ vanishes for $\ell$ sufficiently large.
 
This choice for $|\Psi\rangle$ gives a solution of the equation (\ref{trans-sl2}), provided that the variables $\{\lambda_1,\dots,\lambda_{\ell}\}$ satisfy the Bethe equations\footnote{When the Bethe equations apply to the variables $\{\lambda_1,\dots,\lambda_{\ell}\}$ we will call the object (\ref{bethe-state}) a {\it Bethe eigenvector}. When the variables $\{\lambda_1,\dots,\lambda_{\ell}\}$ are free we call it a {\it generic Bethe vector.}}. We give the details in the following subsections.

\subsection{Action of $A(x)$ on $|\Psi\rangle$}
\label{ssec:actionA}

Using the relation (\ref{2ab}) and the commutativity (\ref{2bb}) of the $B$-operators we obtain the formula
\begin{multline}
A(x)
B(\lambda_1)
\dots
B(\lambda_{\ell})
|0\rangle
=
\left[
\prod_{i=1}^{\ell}
f(\lambda_i,x)
B(\lambda_i)
\right]
A(x)
\vac
\\
-\sum_{i=1}^{\ell}
g(\lambda_i,x)
B(x)
\left[
\prod_{\substack{j\not=i \\ j=1}}^{\ell}
f(\lambda_j,\lambda_i)
B(\lambda_j)
\right]
A(\lambda_i)
\vac
\end{multline}
which allows us to move the $A$-operator entirely to the right of all $B$-operators, so that it acts on the pseudo-vacuum. In view of the defining properties (\ref{ps-vac1}) of the pseudo-vacuum, we have 
$A(x) |0\rangle = a(x) |0\rangle$ for all parameters $x$, which gives
\begin{multline}
A(x)
B(\lambda_1)
\dots
B(\lambda_{\ell})
|0\rangle
=
a(x)
\left[
\prod_{i=1}^{\ell}
f(\lambda_i,x)
B(\lambda_i)
\right]
\vac
\label{abbb}
\\
-\sum_{i=1}^{\ell}
a(\lambda_i)
g(\lambda_i,x)
B(x)
\left[
\prod_{\substack{j\not=i \\ j=1}}^{\ell}
f(\lambda_j,\lambda_i)
B(\lambda_j)
\right]
\vac .
\end{multline}
Following \cite{fad}, we will refer to the terms in the summation of (\ref{abbb}) as \textit{unwanted terms,} since one ultimately needs to eliminate these to obtain a genuine eigenvector of the transfer matrix. 

\subsection{Action of $D(x)$ on $|\Psi\rangle$}
\label{ssec:actionD}

Similarly, using the relation (\ref{2db}) and the commutativity (\ref{2bb}) of the $B$-operators we obtain the formula
\begin{multline}
D(x)
B(\lambda_1) \ldots B(\lambda_{\ell})
\vac
=
\left[
\prod_{i=1}^{\ell}
f(x,\lambda_i)
B(\lambda_i)
\right]
D(x)
\vac
\\
-\sum_{i=1}^{\ell}
g(x,\lambda_i)
B(x)
\left[
\prod_{\substack{j\not=i \\ j=1}}^{\ell}
f(\lambda_i,\lambda_j)
B(\lambda_j)
\right]
D(\lambda_i)
\vac
\end{multline}
which allows us to move the $D$-operator entirely to the right of all $B$-operators, so that it acts on the pseudo-vacuum. Recalling from (\ref{ps-vac1}) that $D(x) |0\rangle = d(x) |0\rangle$ for all $x$, we obtain
\begin{multline}
D(x)
B(\lambda_1) \ldots B(\lambda_{\ell})
\vac
=
d(x)
\left[
\prod_{i=1}^{\ell}
f(x,\lambda_i)
B(\lambda_i)
\right]
\vac
\label{dbbb}
\\
-\sum_{i=1}^{\ell}
d(\lambda_i)
g(x,\lambda_i)
B(x)
\left[
\prod_{\substack{j\not=i \\ j=1}}^{\ell}
f(\lambda_i,\lambda_j)
B(\lambda_j)
\right]
\vac .
\end{multline}
Once again, the unwanted terms are all those in the summation of (\ref{dbbb}).

\subsection{Expression for $\Lambda(x)$ and Bethe equations}

The action of the transfer matrix on $|\Psi\rangle$ is given by summing (\ref{abbb}) and (\ref{dbbb}). We see that
\begin{align}
\mathcal{T}(x) |\Psi\rangle
=
\left[
a(x) \prod_{i=1}^{\ell} f(\lambda_i,x)
+
d(x) \prod_{i=1}^{\ell} f(x,\lambda_i)
\right]
|\Psi\rangle
\label{lambda-sl2}
\end{align}
provided that the unwanted terms in (\ref{abbb}) and (\ref{dbbb}) cancel. This is achieved by assuming that the variables $\{\lambda_1,\dots,\lambda_{\ell}\}$ satisfy the Bethe equations
\begin{align}
a(\lambda_i)
\prod_{j\not=i}^{\ell}
f(\lambda_j,\lambda_i)
-
d(\lambda_i)
\prod_{j\not=i}^{\ell}
f(\lambda_i,\lambda_j)
=
0,
\quad\quad
\forall\ 1 \leq i \leq \ell.
\end{align}
We will find it convenient to rearrange these equations, and write them in the form

\begin{align}
r(\lambda_i)
\equiv
\frac{a(\lambda_i)}{d(\lambda_i)}
=
-
\prod_{j=1}^{\ell}
\frac{\lambda_i-\lambda_j+1}{\lambda_i-\lambda_j-1},
\quad\quad
\forall\ 1 \leq i \leq \ell .
\label{bethe-sl2}
\end{align}
This conludes the construction of the eigenvectors of $\mathcal{T}(x)$, and the eigenvalues $\Lambda(x)$ can be read as the coefficient of the right hand side of (\ref{lambda-sl2}).

\subsection{Dual Bethe eigenvectors}

The above procedure can also be applied to finding states $\langle \Psi| \in \mathcal{H}^{*}$ which are eigenvectors of $\mathcal{T}(x)$, satisfying
\begin{align}
\langle \Psi| \mathcal{T}(x)
=
\Lambda(x)
\langle \Psi|.
\label{dual-eig}
\end{align}
Following similar steps to those already outlined one can show that
\begin{align}
\langle \Psi|
=
\langle 0|
C(\lambda_1)
\dots
C(\lambda_{\ell})
\label{dual-bethe}
\end{align}
satisfies (\ref{dual-eig}), with the same eigenvalue as that calculated in the last subsection, provided that the variables $\{\lambda_1,\dots,\lambda_{\ell}\}$ obey the equations (\ref{bethe-sl2}). For brevity, we omit these details.

\section{Nested Bethe Ansatz for $SU(3)$-invariant models}
\label{s:nba}

In this section we present a fairly detailed outline of the nested Bethe Ansatz \cite{kr}, for generic models based on the $SU(3)$-invariant $R$-matrix. Our exposition closely follows \cite{res,br}. An alternative formulation of the method, leading to combinatorial formulae for the Bethe vectors, can be found in \cite{tv1}.

\subsection{$SU(3)$-invariant $R$-matrix and related definitions}

Let $V_{\alpha}, V_{\beta}$ be two copies of the vector space $\mathbb{C}^3$. The $SU(3)$-invariant $R$-matrix is given by
\begin{align}
R_{\alpha\beta}^{(1)}(\lambda,\mu)
=
\left(
\begin{array}{ccc|ccc|ccc}
f(\lambda,\mu) & 0 & 0 & 0 & 0 & 0 & 0 & 0 & 0 \\
0 & 1 & 0 & g(\lambda,\mu) & 0 & 0 & 0 & 0 & 0 \\
0 & 0 & 1 & 0 & 0 & 0 & g(\lambda,\mu) & 0 & 0 \\ \hline
0 & g(\lambda,\mu) & 0 & 1 & 0 & 0 & 0 & 0 & 0 \\
0 & 0 & 0 & 0 & f(\lambda,\mu) & 0 & 0 & 0 & 0 \\
0 & 0 & 0 & 0 & 0 & 1 & 0 & g(\lambda,\mu) & 0 \\ \hline
0 & 0 & g(\lambda,\mu) & 0 & 0 & 0 & 1 & 0 & 0 \\
0 & 0 & 0 & 0 & 0 & g(\lambda,\mu) & 0 & 1 & 0 \\
0 & 0 & 0 & 0 & 0 & 0 & 0 & 0 & f(\lambda,\mu) 
\end{array}
\right)_{\alpha\beta}
\label{Rmat-sl3}
\end{align}
where the subscript indicates that the $R$-matrix is an element of 
${\rm End}(V_{\alpha} \otimes V_{\beta})$. Define\footnote{The reason for introducing the second type of $R$-matrix (\ref{Rmat-sl3t}) is explained in \cite{res}. For now we just remark that it is necessary to give a description of $SU(3)$ scalar products in complete generality, as we will see in Subsection {\bf\ref{pf-resh}}.}
\begin{align}
R_{\alpha\beta}^{*(1)}(\lambda,\mu)
=
\Big(
R_{\alpha\beta}^{(1)}(-\lambda,-\mu)
\Big)^{{\rm t}_{\beta}}
\end{align}
where ${\rm t}_{\beta}$ stands for transposition on the space $V_{\beta}$. More explicitly, we write $R_{\alpha\beta}^{*(1)}(\lambda,\mu)$ in the matrix form
\begin{align}
R_{\alpha\beta}^{*(1)}(\lambda,\mu)
=
\left(
\begin{array}{ccc|ccc|ccc}
f(-\lambda,-\mu) & 0 & 0 & 0 & g(-\lambda,-\mu) & 0 & 0 & 0 & g(-\lambda,-\mu) \\
0 & 1 & 0 & 0 & 0 & 0 & 0 & 0 & 0 \\
0 & 0 & 1 & 0 & 0 & 0 & 0 & 0 & 0 \\ \hline
0 & 0 & 0 & 1 & 0 & 0 & 0 & 0 & 0 \\
g(-\lambda,-\mu) & 0 & 0 & 0 & f(-\lambda,-\mu) & 0 & 0 & 0 & g(-\lambda,-\mu) \\
0 & 0 & 0 & 0 & 0 & 1 & 0 & 0 & 0 \\ \hline
0 & 0 & 0 & 0 & 0 & 0 & 1 & 0 & 0 \\
0 & 0 & 0 & 0 & 0 & 0 & 0 & 1 & 0 \\
g(-\lambda,-\mu) & 0 & 0 & 0 & g(-\lambda,-\mu) & 0 & 0 & 0 & f(-\lambda,-\mu) 
\end{array}
\right)_{\alpha\beta}
\label{Rmat-sl3t}
\end{align}
and represent its components with the vertex shown in Figure {\bf\ref{vert2}}.

\begin{figure}[H]

\begin{center}
\begin{minipage}{4.3in}

\setlength{\unitlength}{0.0005cm}
\begin{picture}(10000,6000)(7000,16500)

\put(5000,20000)
{$
\Big[R_{\alpha\beta}^{*(1)}(\lambda,\mu)\Big]^{i_{\alpha} j_{\alpha}}_{i_{\beta} j_{\beta}}
=
$}



\path(14000,20000)(18000,20000) \put(12500,20000){$\lambda$}
\put(13500,20300){$i_{\alpha}$}
\put(18000,20300){$j_{\alpha}$} 


\path(16000,18000)(16000,22000) \put(16000,16500){$\mu$}
\put(16000,17500){$i_{\beta}$}
\put(16000,22500){$j_{\beta}$}


\put(16000,20000){\circle*{200}}

\put(20000,20000){$=$}




\path(24000,20000)(28000,20000) \put(21500,20000){$-\lambda$}
\put(23500,20300){$i_{\alpha}$}
\put(28000,20300){$j_{\alpha}$} 


\path(26000,18000)(26000,22000) \put(25500,16500){$-\mu$}
\put(26000,17500){$j_{\beta}$}
\put(26000,22500){$i_{\beta}$}

\end{picture}

\end{minipage}
\end{center}

\caption{Definition of the dotted vertex, which is used to denote components of the matrix (\ref{Rmat-sl3t}). The relationship with an ordinary vertex is shown on the right. The indices now take the values $i_{\alpha},j_{\alpha},i_{\beta},j_{\beta} \in \{1,2,3\}$.}  

\label{vert2}

\end{figure}

Because the nested Bethe Ansatz involves a reduction of the $SU(3)$ eigenvector problem to an $SU(2)$ one, we also need the definition
\begin{align}
R^{(2)}_{\alpha\beta}(\lambda,\mu)
=
\left(
\begin{array}{cc|cc}
f(\lambda,\mu) & 0 & 0 & 0 \\
0 & 1 & g(\lambda,\mu) & 0 \\ \hline
0 & g(\lambda,\mu) & 1 & 0 \\
0 & 0 & 0 & f(\lambda,\mu)
\end{array}
\right)_{\alpha\beta}
\label{Rmat-sl32}
\end{align}
which is nothing but the $SU(2)$-invariant $R$-matrix. Finally let us define 
$\mathbb{R}^{(2)}_{\alpha\beta}(\lambda,\mu) = 
R^{(2)}_{\alpha\beta}(\lambda,\mu)/f(\lambda,\mu)$. Then we have the relation
\begin{align}
\mathbb{R}^{(2)}_{\alpha\beta}(\lambda,\lambda)
=
P^{(2)}_{\alpha\beta}
=
\left(
\begin{array}{cc|cc}
1 & 0 & 0 & 0 \\
0 & 0 & 1 & 0 \\ \hline
0 & 1 & 0 & 0 \\
0 & 0 & 0 & 1
\end{array}
\right)_{\alpha\beta}
\end{align}
where $P^{(2)}_{\alpha\beta}$ is the permutation matrix acting on 
$V_{\alpha} \otimes V_{\beta}$.

\subsection{Yang-Baxter equations}

For $i\in \{1,2\}$, let $V_{\alpha}, V_{\beta}, V_{\gamma}$ be three copies of the vector space $\mathbb{C}^{4-i}$ with the associated rapidities $\lambda,\mu,\nu$. The $R$-matrices (\ref{Rmat-sl3}) and (\ref{Rmat-sl32}) satisfy the Yang-Baxter equation
\begin{align}
R_{\alpha\beta}^{(i)}(\lambda,\mu)
R_{\alpha\gamma}^{(i)}(\lambda,\nu)
R_{\beta\gamma}^{(i)}(\mu,\nu)
=
R_{\beta\gamma}^{(i)}(\mu,\nu)
R_{\alpha\gamma}^{(i)}(\lambda,\nu)
R_{\alpha\beta}^{(i)}(\lambda,\mu)
\label{yb1}
\end{align}
for both values of $i \in \{1,2\}$. We also have
\begin{align}
R^{(1)}_{\alpha\beta}(\lambda,\mu)
R^{*(1)}_{\alpha\gamma}(\lambda,\nu)
R^{*(1)}_{\beta\gamma}(\mu,\nu)
=
R^{*(1)}_{\beta\gamma}(\mu,\nu)
R^{*(1)}_{\alpha\gamma}(\lambda,\nu)
R^{(1)}_{\alpha\beta}(\lambda,\mu).
\label{yb2}
\end{align}

\subsection{Generalized $SU(3)$ models}
\label{ssec:gen-su3}

We consider a general $SU(3)$-invariant model with the $3\times 3$ monodromy matrix
\begin{align}
T^{(1)}_{\alpha}(\lambda)
=
\left(
\begin{array}{ccc}
t_{11}(\lambda) & t_{12}(\lambda) & t_{13}(\lambda) \\
t_{21}(\lambda) & t_{22}(\lambda) & t_{23}(\lambda) \\
t_{31}(\lambda) & t_{32}(\lambda) & t_{33}(\lambda)
\end{array}
\right)_{\alpha} .
\label{mon-sl3}
\end{align}
The entries of (\ref{mon-sl3}) are operators which satisfy the Yang-Baxter algebra
\begin{align}
R_{\alpha\beta}^{(1)}(\lambda,\mu)
T_{\alpha}^{(1)}(\lambda)
T_{\beta}^{(1)}(\mu)
=
T_{\beta}^{(1)}(\mu)
T_{\alpha}^{(1)}(\lambda)
R_{\alpha\beta}^{(1)}(\lambda,\mu)
\label{int-sl3}
\end{align}
where $R_{\alpha\beta}^{(1)}(\lambda,\mu)$ is the $SU(3)$-invariant $R$-matrix (\ref{Rmat-sl3}). Again, the model is solvable by the (nested) algebraic Bethe Ansatz if one can find pseudo-vacuum states $|0\rangle,\langle 0|$ which are acted upon by the operators (\ref{mon-sl3}) according to the rules 
\begin{align}
t_{ii}(\lambda) |0\rangle = a_i(\lambda) |0\rangle,
\quad
t_{kj}(\lambda) |0\rangle = 0,
\quad
t_{jk}(\lambda) |0\rangle \not= 0
\label{ps-vac3}
\\
\langle 0| t_{ii}(\lambda) = a_i(\lambda) \langle 0|,
\quad
\langle 0| t_{kj}(\lambda) \not= 0,
\quad
\langle 0| t_{jk}(\lambda) = 0
\label{ps-vac4}
\end{align}
valid for all $i \in \{1,2,3\}$ and $1\leq j < k \leq 3$, and where the $a_i(\lambda)$ are rational functions (which we call the pseudo-vacuum eigenfunctions). Let $\mathcal{H}$ denote the Hilbert space generated by the action of operators $t_{jk}(\lambda)$ on $|0\rangle$, and $\mathcal{H}^{*}$ the Hilbert space generated by $t_{kj}(\lambda)$ on $\langle 0|$, for all $1\leq j < k \leq 3$.

Again, numerous technical questions arise in the context of such models, which are the obvious extension of the points raised in the closing remarks of Subsection {\bf\ref{ssec:gen-su2}}. For our purposes, we treat the relations (\ref{int-sl3})--(\ref{ps-vac4}) as given \textit{a priori,} and proceed from there. Since (\ref{int-sl3}) are the defining equations of $\mathcal{Y}(sl_3)$, the explicit construction of a pseudo-vacuum satisfying (\ref{ps-vac3}) and (\ref{ps-vac4}) is related to the representation theory of the Yangian. On this point, let us remark that the highest weight representation of the Yangian is due to Drinfel'd \cite{dri}. Further details can be found in \cite{mol}, and references therein.   

\subsection{Decomposition of monodromy matrix}

In the following we consider a $2 \times 2$ decomposition of the monodromy matrix, by defining $A^{(1)}(\lambda) = t_{11}(\lambda)$ and
\begin{align}
B_{\beta}^{(1)}(\lambda) 
=
\left(
\begin{array}{cc}
t_{12}(\lambda) & t_{13}(\lambda)
\end{array}
\right)_{\beta},
\quad\quad
C_{\gamma}^{(1)}(\lambda)
=
\left(
\begin{array}{c}
t_{21}(\lambda) \\
t_{31}(\lambda)
\end{array}
\right)_{\gamma},
\quad\quad
D^{(1)}_{\delta}(\lambda)
=
\left(
\begin{array}{cc}
t_{22}(\lambda) & t_{23}(\lambda) \\
t_{32}(\lambda) & t_{33}(\lambda)
\end{array}
\right)_{\delta}.
\label{decomposition} 
\end{align}
Here $V_{\beta}, V_{\gamma}, V_{\delta}$ are copies of $\mathbb{C}^2$ and the subscripts on these matrices are used to denote the fact that
\begin{align} 
B^{(1)}_{\beta}(\lambda) \in V_{\beta}^{*},\quad\quad
C^{(1)}_{\gamma}(\lambda) \in V_{\gamma}, \quad\quad
D^{(1)}_{\delta}(\lambda) \in {\rm End}(V_{\delta}).
\end{align}

\subsection{First set of commutation relations}

In the following we assume $V_{\alpha}, V_{\beta}$ are copies of $\mathbb{C}^2$. Firstly, we list the commutation between the $B$-operators, 
\begin{align}
B^{(1)}_{\alpha}(\lambda) B^{(1)}_{\beta}(\mu)
=
B^{(1)}_{\beta}(\mu) B^{(1)}_{\alpha}(\lambda)
\mathbb{R}^{(2)}_{\alpha\beta}(\lambda,\mu).
\label{reorder}
\end{align}
Secondly, we give the commutation between the $A$ and $B$-operators,
\begin{align}
A^{(1)}(\mu)
B_{\alpha}^{(1)}(\lambda)
=
f(\lambda,\mu)
B_{\alpha}^{(1)}(\lambda)
A^{(1)}(\mu)
-
g(\lambda,\mu)
B_{\alpha}^{(1)}(\mu)
A^{(1)}(\lambda).
\label{1ab}
\end{align}
Thirdly, we give the commutation between the $D$ and $B$-operators,
\begin{align}
D^{(1)}_{\alpha}(\lambda)
B_{\beta}^{(1)}(\mu)
=
f(\lambda,\mu)
B_{\beta}^{(1)}(\mu)
D^{(1)}_{\alpha}(\lambda)
\mathbb{R}^{(2)}_{\alpha\beta}(\lambda,\mu)
-
g(\lambda,\mu)
B_{\beta}^{(1)}(\lambda)
D^{(1)}_{\alpha}(\mu)
P^{(2)}_{\alpha\beta}.
\label{1db}
\end{align}
Finally, the commutation between the $D$-operators reduces to an intertwining equation of $SU(2)$ type, namely
\begin{align}
R^{(2)}_{\alpha\beta}(\lambda,\mu)
D^{(1)}_{\alpha}(\lambda)
D^{(1)}_{\beta}(\mu)
=
D^{(1)}_{\beta}(\mu)
D^{(1)}_{\alpha}(\lambda)
R^{(2)}_{\alpha\beta}(\lambda,\mu).
\label{1dd}
\end{align}

\subsection{First expression for Bethe eigenvectors}

The aim of the nested Bethe Ansatz is to find the eigenvectors and eigenvalues of the transfer matrix 
$\mathcal{T}^{(1)}(x)
= t_{11}(x) + t_{22}(x) + t_{33}(x)$. We will denote an eigenvector by 
$
|\Psi^{(1)} \rangle 
\in 
\mathcal{H}
$
and its corresponding eigenvalue by $\Lambda^{(1)}(x)$, that is, 
\begin{align}
\mathcal{T}^{(1)}(x)
|\Psi^{(1)}\rangle
=
\Big[
A^{(1)}(x) + {\rm tr}_{\beta} D^{(1)}_{\beta}(x)
\Big]
|\Psi^{(1)}\rangle
=
\Lambda^{(1)}(x)
|\Psi^{(1)}\rangle.
\label{1trans}
\end{align}
The first step in the Ansatz for the eigenvectors is to propose that
\begin{align}
|\Psi^{(1)}\rangle
=
B_{\alpha_1}^{(1)}(\lambda_1)
\ldots
B_{\alpha_{\ell}}^{(1)}(\lambda_{\ell})
\psivec
\label{1eigvec}
\end{align}
where the reference state on the right hand side satisfies
\begin{align}
\psivec
\in 
\mathcal{H}
\otimes
V_{\alpha_1}\otimes \cdots \otimes V_{\alpha_{\ell}}
\end{align}
with $V_{\alpha_1},\dots,V_{\alpha_{\ell}}$ all being copies of $\mathbb{C}^2$. In the following subsections we will derive the necessary conditions for (\ref{1eigvec}) to be an eigenstate of the transfer matrix.

\subsection{Action of $A^{(1)}(x)$ on $|\Psi^{(1)}\rangle$}

Making repeated use of the commutation relations (\ref{reorder}) and (\ref{1ab}), one can derive the equation
\begin{multline}
A^{(1)}(x)
B_{\alpha_1}^{(1)}(\lambda_1)
\ldots
B_{\alpha_{\ell}}^{(1)}(\lambda_{\ell})
\psivec
=
\left[
\prod_{i=1}^{\ell}
f(\lambda_i,x)
B_{\alpha_i}^{(1)}(\lambda_i)
\right] 
A^{(1)}(x)
\psivec
\label{1abbb}
\\
-\sum_{i=1}^{\ell}
g(\lambda_i,x)
B_{\alpha_i}^{(1)}(x)
\left[
\prod_{\substack{j\not= i \\ j=1}}^{\ell}
f(\lambda_j,\lambda_i)
B_{\alpha_j}^{(1)}(\lambda_j)
\right]
\left[
\prod_{j < i}
\mathbb{R}^{(2)}_{\alpha_j \alpha_i}(\lambda_j,\lambda_i)
\right]
A^{(1)}(\lambda_i)
\psivec
\end{multline}
where we have defined $\mathbb{R}^{(2)}_{\alpha\beta}(\lambda,\mu) = R^{(2)}_{\alpha\beta}(\lambda,\mu)/f(\lambda,\mu)$. Notice that in every term on the right hand side, the $A$-operator has been threaded past all $B$-operators.

\subsection{Action of $D^{(1)}_{\beta}(x)$ on $|\Psi^{(1)}\rangle$}

Making repeated use of the commutation relations (\ref{reorder}) and (\ref{1db}), it is similarly possible to derive the equation
\begin{multline}
D^{(1)}_{\beta}(x)
B_{\alpha_1}^{(1)}(\lambda_1)
\ldots
B_{\alpha_{\ell}}^{(1)}(\lambda_{\ell})
\psivec
=
\left[
\prod_{i=1}^{\ell}
f(x,\lambda_i)
B_{\alpha_i}^{(1)}(\lambda_i)
\right]
T^{(2)}_{\beta}(x)
\psivec
\label{1dbbb}
\\
-\sum_{i=1}^{\ell}
g(x,\lambda_i)
B_{\alpha_i}^{(1)}(x)
\left[
\prod_{\substack{j \not= i \\ j=1}}^{\ell}
f(\lambda_i,\lambda_j)
B_{\alpha_j}^{(1)}(\lambda_j)
\right]
\left[
\prod_{j < i}
\mathbb{R}^{(2)}_{\alpha_j \alpha_i}(\lambda_j,\lambda_i)
\right]
T^{(2)}_{\beta}(\lambda_i)
\psivec
\end{multline}
where we have defined the monodromy matrix of $SU(2)$ type
\begin{align}
T^{(2)}_{\beta}(x)
\equiv
T^{(2)}_{\beta}(x|\lambda_{\ell},\dots,\lambda_1)
=
D^{(1)}_{\beta}(x)
\mathbb{R}^{(2)}_{\beta \alpha_{\ell}}(x,\lambda_{\ell})
\ldots
\mathbb{R}^{(2)}_{\beta \alpha_1}(x,\lambda_1).
\label{2mon}
\end{align}
Since we wish to compute the action of ${\rm tr}_{\beta} D_{\beta}^{(1)}(x)$ on $|\Psi^{(1)}\rangle$, what actually interests us about (\ref{1dbbb}) is its trace on $V_{\beta}$, which can be taken trivially,
\begin{multline}
{\rm tr}_{\beta} D^{(1)}_{\beta}(x)
B_{\alpha_1}^{(1)}(\lambda_1)
\ldots
B_{\alpha_{\ell}}^{(1)}(\lambda_{\ell})
\psivec
=
\left[
\prod_{i=1}^{\ell}
f(x,\lambda_i)
B_{\alpha_i}^{(1)}(\lambda_i)
\right]
\mathcal{T}^{(2)}(x)
\psivec
\label{1dbbb'}
\\
-\sum_{i=1}^{\ell}
g(x,\lambda_i)
B_{\alpha_i}^{(1)}(x)
\left[
\prod_{\substack{j \not= i \\ j=1}}^{\ell}
f(\lambda_i,\lambda_j)
B_{\alpha_j}^{(1)}(\lambda_j)
\right]
\left[
\prod_{j < i}
\mathbb{R}^{(2)}_{\alpha_j \alpha_i}(\lambda_j,\lambda_i)
\right]
\mathcal{T}^{(2)}(\lambda_i)
\psivec
\end{multline}
where we have defined the secondary transfer matrix $\mathcal{T}^{(2)}(x)= {\rm tr}_{\beta} T^{(2)}_{\beta}(x)$.

\subsection{First eigenvalue $\Lambda^{(1)}(x)$}

Assume that the reference state $\psivec$ is an eigenvector of both $A^{(1)}(x)$ and $\mathcal{T}^{(2)}(x)$, satisfying the equations
\begin{align}
A^{(1)}(x)
\psivec
&=
a_1(x)
\psivec,
\label{ass}
\\
\mathcal{T}^{(2)}(x)
\psivec
&=
\Lambda^{(2)}(x)
\psivec.
\label{2trans}
\end{align}
The validation of these two equations will be achieved by our subsequent choice of $\psivec$. In particular, it will be our aim to construct solutions of (\ref{2trans}), as the second and final step of the nested Bethe Ansatz. Having done so, it will be possible to verify that the resulting expression for $\psivec$ satisfies (\ref{ass}).

By adding (\ref{1abbb}) to (\ref{1dbbb'}) and using the assumptions (\ref{ass}) and (\ref{2trans}), we obtain
\begin{align}
\mathcal{T}^{(1)}(x)
|\Psi^{(1)}\rangle
=
\Lambda^{(1)}(x)
|\Psi^{(1)}\rangle
=
\left(
a_1(x)
\prod_{i=1}^{\ell} f(\lambda_i,x)
+
\prod_{i=1}^{\ell} f(x,\lambda_i)
\Lambda^{(2)}(x)
\right)
|\Psi^{(1)}\rangle
\label{1trans-2}
\end{align}
if we assume that all terms not proportional to $|\Psi^{(1)}\rangle$ cancel. This assumption is equivalent to enforcing the first set of Bethe equations, which we discuss in the next subsection. Reading the coefficient in equation (\ref{1trans-2}), we have our first expression for the eigenvalue $\Lambda^{(1)}(x)$, 
\begin{align}
\Lambda^{(1)}(x)
=
a_1(x)
\prod_{i=1}^{\ell} f(\lambda_i,x)
+
\prod_{i=1}^{\ell} f(x,\lambda_i)
\Lambda^{(2)}(x) .
\label{1eigval}
\end{align}

\subsection{First set of Bethe equations}

As we just mentioned, $|\Psi^{(1)}\rangle$ is an eigenstate of 
$\mathcal{T}^{(1)}(x)$ if and only if the unwanted terms in (\ref{1abbb}) and (\ref{1dbbb'}) sum to zero. From the form of the weights (\ref{rat-wt}) it is clear that $g(\lambda_i,x)=-g(x,\lambda_i)$, hence the unwanted terms in (\ref{1abbb}) and (\ref{1dbbb'}) cancel if the variables $\{\lambda_1,\dots,\lambda_{\ell}\}$ satisfy the Bethe equations
\begin{align}
a_1(\lambda_i)
\prod_{j\not=i}^{\ell}
f(\lambda_j,\lambda_i)
-
\prod_{j\not=i}^{\ell}
f(\lambda_i,\lambda_j)
\Lambda^{(2)}(\lambda_i)
=
0.
\end{align}
After simple manipulation, the Bethe equations can be expressed in the form
\begin{align}
\frac{a_1(\lambda_i)}{\Lambda^{(2)}(\lambda_i)}
=
-
\prod_{j=1}^{\ell}
\frac{\lambda_i-\lambda_j+1}{\lambda_i-\lambda_j-1}.
\end{align}

\subsection{Second set of commutation relations}

By virtue of the commutation relation (\ref{1dd}) and the Yang-Baxter equation (\ref{yb1}), the $SU(2)$ monodromy matrix (\ref{2mon}) obeys its own intertwining equation,
\begin{align}
R^{(2)}_{\alpha\beta}(x,y)
T^{(2)}_{\alpha}(x)
T^{(2)}_{\beta}(y)
=
T^{(2)}_{\beta}(y)
T^{(2)}_{\alpha}(x)
R^{(2)}_{\alpha\beta}(x,y).
\end{align}
If we explicitly exhibit the $V_{\alpha}$ dependence of $T^{(2)}_{\alpha}(x)$, by writing
\begin{align}
T^{(2)}_{\alpha}(x|\lambda_{\ell},\dots,\lambda_1)
=
\left(
\begin{array}{cc}
A^{(2)}(x|\lambda_{\ell},\dots,\lambda_1) 
& 
B^{(2)}(x|\lambda_{\ell},\dots,\lambda_1)
\\
C^{(2)}(x|\lambda_{\ell},\dots,\lambda_1) 
& 
D^{(2)}(x|\lambda_{\ell},\dots,\lambda_1)
\end{array}
\right)_{\alpha}
\end{align}
then the commutation relations between its operator entries are the same as those in Subsection {\bf\ref{com-rel1}}. The only point of difference is that each operator now comes with the superscript ${}^{(2)}$, that is, one should replace $A(x) \rightarrow A^{(2)}(x)$ and so on. 

\subsection{Second set of Bethe eigenvectors}

It is now our goal to obtain solutions to the equation (\ref{2trans}), which can be done using the ordinary algebraic Bethe Ansatz for $SU(2)$ models. Namely, we let
\begin{align}
\psivec
=
B^{(2)}(\mu_1)
\ldots
B^{(2)}(\mu_m)
\vac \otimes \avac,
\quad
\avac
=
\bigotimes_{i=1}^{\ell}
\left(
\begin{array}{c}
1 \\ 0
\end{array}
\right)_{\alpha_i}
\label{2eigvec}
\end{align}
where the $B$-operators act on 
$\mathcal{H} \otimes V_{\alpha_1} \otimes \cdots \otimes V_{\alpha_{\ell}}$. The following two subsections are a direct repetition of \ref{ssec:actionA} and \ref{ssec:actionD}, but transcribed into the notation used throughout this section. We include them for the sake of clarity.

\subsection{Action of $A^{(2)}(x)$ on $\psivec$}

Using the relation (\ref{2ab}) and the commutativity (\ref{2bb}) of the $B$-operators, we obtain the formula
\begin{multline}
A^{(2)}(x)
B^{(2)}(\mu_1)
\dots
B^{(2)}(\mu_m)
|0\rangle \otimes \avac
=
\left[
\prod_{i=1}^{m}
f(\mu_i,x)
B^{(2)}(\mu_i)
\right]
A^{(2)}(x)
\vac \otimes \avac
\label{2abbb}
\\
-\sum_{i=1}^{m}
g(\mu_i,x)
B^{(2)}(x)
\left[
\prod_{\substack{j\not=i \\ j=1}}^{m}
f(\mu_j,\mu_i)
B^{(2)}(\mu_j)
\right]
A^{(2)}(\mu_i)
\vac \otimes \avac .
\end{multline}

\subsection{Action of $D^{(2)}(x)$ on $\psivec$}

Using the relation (\ref{2db}) and the commutativity (\ref{2bb}) of the $B$-operators, we obtain the formula
\begin{multline}
D^{(2)}(x)
B^{(2)}(\mu_1)
\dots
B^{(2)}(\mu_m)
|0\rangle \otimes \avac
=
\left[
\prod_{i=1}^{m}
f(x,\mu_i)
B^{(2)}(\mu_i)
\right]
D^{(2)}(x)
\vac \otimes \avac
\label{2dbbb}
\\
-\sum_{i=1}^{m}
g(x,\mu_i)
B^{(2)}(x)
\left[
\prod_{\substack{j\not=i \\ j=1}}^{m}
f(\mu_i,\mu_j)
B^{(2)}(\mu_j)
\right]
D^{(2)}(\mu_i)
\vac \otimes \avac .
\end{multline}

\subsection{Second eigenvalue $\Lambda^{(2)}(x)$}

Using the definitions from above, it is straightforward to calculate
\begin{align}
A^{(2)}(x) |0\rangle \otimes \avac
=
a_2(x) 
|0\rangle \otimes \avac,
\quad
D^{(2)}(x) |0\rangle \otimes \avac
=
a_3(x)
\prod_{k=1}^{\ell}
(1/f(x,\lambda_k)) 
|0\rangle \otimes \avac .
\end{align}
Using these equations and adding (\ref{2abbb}) to (\ref{2dbbb}), we find that
\begin{align}
\mathcal{T}^{(2)}(x)
\psivec
=
\Lambda^{(2)}(x)
\psivec
=
\left(
a_2(x)
\prod_{i=1}^{m}
f(\mu_i,x)
+
a_3(x)
\prod_{k=1}^{\ell}
(1/f(x,\lambda_k))
\prod_{i=1}^{m}
f(x,\mu_i)
\right)
\psivec
\end{align}
if we assume that all terms not proportional to $\psivec$ cancel. This assumption, in turn, gives rise to the second set of Bethe equations, as we discuss in the next subsection. Hence we obtain the explicit form of the eigenvalue $\Lambda^{(2)}(x)$,
\begin{align}
\Lambda^{(2)}(x)
=
a_2(x)
\prod_{i=1}^{m}
f(\mu_i,x)
+
a_3(x)
\prod_{k=1}^{\ell}
(1/f(x,\lambda_k))
\prod_{i=1}^{m}
f(x,\mu_i).
\label{2eigval}
\end{align}

\subsection{Second set of Bethe equations}

As we just mentioned, $\psivec$ is an eigenstate of 
$\mathcal{T}^{(2)}(x)$ if and only if the unwanted terms in (\ref{2abbb}) and (\ref{2dbbb}) sum to zero. We find that this is the case when the variables $\{\mu_1,\dots,\mu_m\}$ satisfy the Bethe equations
\begin{align}
a_2(\mu_i)
\prod_{j\not= i}^{m}
f(\mu_j,\mu_i)
-
a_3(\mu_i)
\prod_{k=1}^{\ell}
(1/f(\mu_i,\lambda_k))
\prod_{j\not= i}^{m}
f(\mu_i,\mu_j)
=
0.
\end{align}
Rearranging slightly, we put the second set of Bethe equations in the form
\begin{align}
r_2(\mu_i)
\equiv
\frac{a_2(\mu_i)}{a_3(\mu_i)}
=
-\prod_{j=1}^{m}
\frac{\mu_i-\mu_j+1}{\mu_i-\mu_j-1}
\prod_{k=1}^{\ell}
\frac{1}{f(\mu_i,\lambda_k)}.
\label{2bethe-sl3}
\end{align}

\subsection{Summary}

We are now able to provide explicit formulae for the eigenvectors and eigenvalues of equation (\ref{1trans}), which was our original goal. Combining the expression (\ref{1eigvec}) with (\ref{2eigvec}) we recover the final form of the Bethe eigenvectors,
\begin{align}
|\Psi^{(1)}\rangle
=
B^{(1)}_{\alpha_1}(\lambda_1)
\ldots
B^{(1)}_{\alpha_{\ell}}(\lambda_{\ell})
B^{(2)}(\mu_1)
\ldots
B^{(2)}(\mu_m)
|0\rangle \otimes \avac.
\label{bethe-state-sl3}
\end{align}
Similarly, combining equation (\ref{1eigval}) with (\ref{2eigval}) we find that the eigenvalues are given by
\begin{align}
\Lambda^{(1)}(x)
=
a_1(x)
\prod_{i=1}^{\ell} f(\lambda_i,x)
+
a_2(x)
\prod_{i=1}^{m} f(\mu_i,x)
\prod_{j=1}^{\ell} f(x,\lambda_j)
+
a_3(x)
\prod_{i=1}^{m}
f(x,\mu_i).
\end{align}
Since $\Lambda^{(2)}(\lambda_i) = a_2(\lambda_i) \prod_{j=1}^{m} f(\mu_j,\lambda_i)$, the first set of Bethe equations becomes
\begin{align}
r_1(\lambda_i)
\equiv
\frac{a_1(\lambda_i)}{a_2(\lambda_i)}
=
-\prod_{j=1}^{\ell}
\frac{\lambda_i-\lambda_j+1}{\lambda_i-\lambda_j-1}
\prod_{k=1}^{m}
f(\mu_k,\lambda_i).
\label{1bethe-sl3}
\end{align}

\subsection{Dual Bethe eigenvectors}

Naturally, it is also possible to construct states $\langle \Psi^{(1)}| \in \mathcal{H}^{*}$ which are eigenvectors of the transfer matrix $\mathcal{T}^{(1)}(x)$. Repeating the steps from above with slight modification, one obtains
\begin{align}
\langle \Psi^{(1)}|
=
\langle \Uparrow_{\alpha}|
\otimes
\langle 0|
C^{(2)}(\mu_1)
\dots
C^{(2)}(\mu_m)
C^{(1)}_{\alpha_1}(\lambda_1)
\dots
C^{(1)}_{\alpha_{\ell}}(\lambda_{\ell})
\label{dual-bethe-sl3}
\end{align}
where each $C^{(1)}_{\alpha_i}(\lambda_i)$ is a column vector given by (\ref{decomposition}), and $C^{(2)}(x)$ denotes component $(2,1)$ of the monodromy matrix
\begin{align}
T^{(2)}_{\beta}(\lambda_{\ell},\dots,\lambda_1|x)
=
\mathbb{R}^{(2)}_{\beta\alpha_{\ell}}(x,\lambda_{\ell})
\dots
\mathbb{R}^{(2)}_{\beta\alpha_1}(x,\lambda_1)
D^{(1)}_{\beta}(x)
=
\left(
\begin{array}{cc}
A^{(2)}(\lambda_{\ell},\dots,\lambda_1|x)
&
B^{(2)}(\lambda_{\ell},\dots,\lambda_1|x)
\\
C^{(2)}(\lambda_{\ell},\dots,\lambda_1|x)
&
D^{(2)}(\lambda_{\ell},\dots,\lambda_1|x)
\end{array}
\right)_{\beta} .
\end{align}
The vacuum state in (\ref{dual-bethe-sl3}) is the tensor product of 
\begin{align}
\langle \Uparrow_{\alpha}| 
= 
\bigotimes_{i=1}^{\ell}
\left(
\begin{array}{cc} 
1 & 0
\end{array}
\right)_{\alpha_i}
\end{align} 
and the dual pseudo-vacuum $\langle 0|$. In this situation, the Bethe equations and the expression for the eigenvalue $\Lambda^{(1)}(x)$ are the same as those given above.

\section{Generic $SU(2)$ scalar products}

\subsection{Notation}

In the case of sets $\{\lambda\} = \{\lambda_1,\dots,\lambda_{\ell}\}$ and $\{\mu\} = \{\mu_1,\dots,\mu_m\}$, we define
\begin{align}
f(\{\lambda\},\{\mu\})
=
\prod_{i=1}^{\ell}
\prod_{j=1}^{m}
f(\lambda_i,\mu_j)
\end{align}
to make our subsequent equations more compact.

\subsection{Definition of $SU(2)$ scalar product}

In the case of $SU(2)$ models, the scalar product is defined as
\begin{align}
\langle \{\lambda^C\} | \{\lambda^B\} \rangle
=
\langle 0|
\prod_{i=1}^{\ell}
C(\lambda^C_i)
\prod_{j=1}^{\ell}
B(\lambda^B_j)
|0\rangle,
\quad\quad
\langle\langle \{\lambda^C\} | \{\lambda^B\} \rangle\rangle
=
\frac{
\langle \{\lambda^C\} | \{\lambda^B\} \rangle
}
{
\prod_{i=1}^{\ell}
d(\lambda^C_i)
\prod_{j=1}^{\ell}
d(\lambda^B_j)
} .
\label{sp-def}
\end{align}
The scalar product is equal to the action of a generic dual Bethe vector (\ref{dual-bethe}) on another generic Bethe vector (\ref{bethe-state}). At this stage, no restriction is imposed on the variables $\{\lc\},\{\lb\}$. The quantity on the right of (\ref{sp-def}), obtained by dividing by a product of vacuum eigenfunctions, is simply a convenient renormalization.  

\subsection{Sum formula for generic $SU(2)$ scalar product}

Using the commutation relations between the monodromy matrix operators, as well as the  action of these operators on the vacuum states, it is possible to derive a sum formula for the scalar product \cite{kor,ik}. We give the full details of this calculation in Appendix {\bf\ref{app:1}}. Here we only quote the formula, which reads
\begin{align}
\langle \{\lambda^C\} | \{\lambda^B\} \rangle
=
\sum
\prod_{\lbi}
a(\lbi)
\prod_{\lcii}
a(\lcii)
\prod_{\lbii}
d(\lbii)
\prod_{\lci}
d(\lci)
f(\lci,\lcii) f(\lbii,\lbi)
Z(\lbii | \lcii) Z(\lci | \lbi)
\label{sum-sl2}
\end{align}
where the sum is taken over all partitions of the sets $\{\lc\},\{\lb\}$ into two disjoint subsets
\begin{align}
\{\lambda^C\} = \{\lci\} \cup \{\lcii\}, \quad
\{\lambda^B\} = \{\lbi\} \cup \{\lbii\}, \quad
\text{such that}\ 
|\lbi| = |\lci|, \quad 
|\lbii| = |\lcii|
\label{part-rule}
\end{align}
and $Z(\lbii|\lcii),Z(\lci|\lbi)$ denote domain wall partition functions of the six-vertex model. It is common to normalize the scalar product by dividing by $\prod_{i=1}^{\ell} d(\lc_i) d(\lb_i)$, which gives
\begin{align}
\langle\langle \{\lambda^C\} | \{\lambda^B\} \rangle\rangle
=
\sum
\prod_{\lbi}
r(\lbi)
\prod_{\lcii} 
r(\lcii)
f(\lci,\lcii) f(\lbii,\lbi)
Z(\lbii | \lcii) Z(\lci | \lbi) .
\label{sl2-g}
\end{align}

\subsection{Domain wall partition function $Z(\{\lambda\}|\{w\})$}
\label{ssec:dom}

The domain wall partition function depends on two sets of variables $\{\lambda\}_{\ell} = \{\lambda_1,\dots,\lambda_{\ell}\}$ and $\{w\}_{\ell} = \{w_1,\dots,w_{\ell}\}$, and we denote it by $Z(\{\lambda\}|\{w\})$. We define it as the partition function of the lattices shown in Figure {\bf \ref{dwpf}}.

\begin{figure}[H]

\begin{center}
\begin{minipage}{4.3in}

\setlength{\unitlength}{0.00035cm}
\begin{picture}(40000,13500)(13000,8500)



\path(14000,20000)(26000,20000) \put(12000,20000){\tiny $\lambda_1$} 
\put(14000,20500){\tiny 1}  
\put(25500,20500){\tiny 2}

\path(14000,18000)(26000,18000)
\put(14000,18500){\tiny 1}  
\put(25500,18500){\tiny 2}

\path(14000,16000)(26000,16000)
\put(14000,16500){\tiny 1}  
\put(25500,16500){\tiny 2}

\path(14000,14000)(26000,14000)
\put(14000,14500){\tiny 1}  
\put(25500,14500){\tiny 2}

\path(14000,12000)(26000,12000) \put(12000,12000){\tiny $\lambda_{\ell}$}
\put(14000,12500){\tiny 1} 
\put(25500,12500){\tiny 2}


\path(16000,10000)(16000,22000) \put(15500,8000){\tiny $w_1$}
\put(16000,9000){\tiny 2} \put(15500,22500){\tiny 1}

\path(18000,10000)(18000,22000)
\put(18000,9000){\tiny 2} \put(17500,22500){\tiny 1}

\path(20000,10000)(20000,22000)
\put(20000,9000){\tiny 2} \put(19500,22500){\tiny 1}

\path(22000,10000)(22000,22000)
\put(22000,9000){\tiny 2} \put(21500,22500){\tiny 1}

\path(24000,10000)(24000,22000) \put(23500,8000){\tiny $w_{\ell}$}
\put(24000,9000){\tiny 2} \put(23500,22500){\tiny 1}



\path(34000,20000)(46000,20000) \put(32000,20000){\tiny $w_1$} 
\put(34000,20500){\tiny 1}  
\put(45500,20500){\tiny 2}

\path(34000,18000)(46000,18000)
\put(34000,18500){\tiny 1}  
\put(45500,18500){\tiny 2}

\path(34000,16000)(46000,16000)
\put(34000,16500){\tiny 1}  
\put(45500,16500){\tiny 2}

\path(34000,14000)(46000,14000)
\put(34000,14500){\tiny 1}  
\put(45500,14500){\tiny 2}

\path(34000,12000)(46000,12000) \put(32000,12000){\tiny $w_{\ell}$}
\put(34000,12500){\tiny 1} 
\put(45500,12500){\tiny 2}


\path(36000,10000)(36000,22000) \put(35500,8000){\tiny $\lambda_1$}
\put(36000,9000){\tiny 1} \put(35500,22500){\tiny 2}

\path(38000,10000)(38000,22000)
\put(38000,9000){\tiny 1} \put(37500,22500){\tiny 2}

\path(40000,10000)(40000,22000)
\put(40000,9000){\tiny 1} \put(39500,22500){\tiny 2}

\path(42000,10000)(42000,22000)
\put(42000,9000){\tiny 1} \put(41500,22500){\tiny 2}

\path(44000,10000)(44000,22000) \put(43500,8000){\tiny $\lambda_{\ell}$}
\put(44000,9000){\tiny 1} \put(43500,22500){\tiny 2}


\put(36000,20000){\circle*{200}}
\put(38000,20000){\circle*{200}}
\put(40000,20000){\circle*{200}}
\put(42000,20000){\circle*{200}}
\put(44000,20000){\circle*{200}}
\put(36000,18000){\circle*{200}}
\put(38000,18000){\circle*{200}}
\put(40000,18000){\circle*{200}}
\put(42000,18000){\circle*{200}}
\put(44000,18000){\circle*{200}}
\put(36000,16000){\circle*{200}}
\put(38000,16000){\circle*{200}}
\put(40000,16000){\circle*{200}}
\put(42000,16000){\circle*{200}}
\put(44000,16000){\circle*{200}}
\put(36000,14000){\circle*{200}}
\put(38000,14000){\circle*{200}}
\put(40000,14000){\circle*{200}}
\put(42000,14000){\circle*{200}}
\put(44000,14000){\circle*{200}}
\put(36000,12000){\circle*{200}}
\put(38000,12000){\circle*{200}}
\put(40000,12000){\circle*{200}}
\put(42000,12000){\circle*{200}}
\put(44000,12000){\circle*{200}}

\end{picture}

\end{minipage}
\end{center}

\caption{Two equivalent lattice representations of $Z(\{\lambda\} | \{w\})$. Each intersection of a horizontal and vertical line is a vertex, as defined in Figure {\bf\ref{vert1}}, or Figure {\bf\ref{vert2}}. The state variables on all external segments are fixed to the values shown, while all internal segments are summed over. For an explanation of how these boundary conditions arise naturally in (\ref{sum-sl2}), refer to Figure {\bf\ref{fig-sl2-sp}} in Appendix {\bf\ref{app:1}}.}
\label{dwpf}

\end{figure}
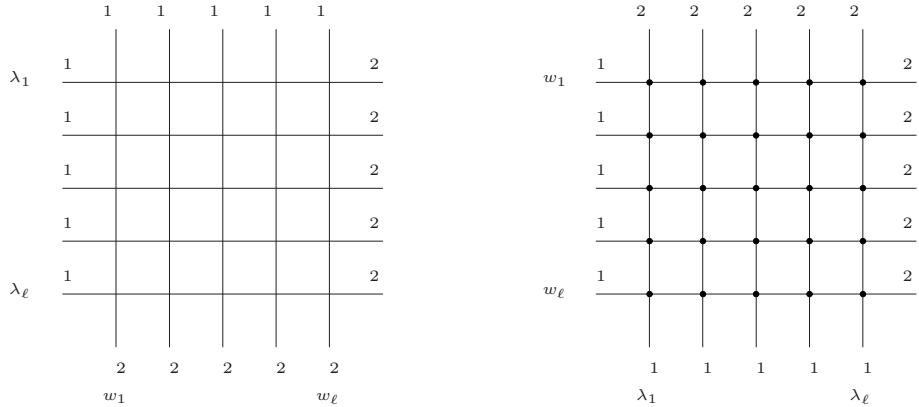
Generally, we shall use the definition of $Z(\{\lambda\}|\{w\})$ on the left of Figure {\bf\ref{dwpf}}, but it will be necessary later (in Subsection {\bf\ref{pf-resh}}) to refer to the definition on the right. The equivalence of these two partition functions is an easy consequence of the definition of the dotted vertex (Figure {\bf\ref{vert2}}). Below we summarize several powerful properties of $Z(\{\lambda\}|\{w\})$.

\begin{myTheorem}
The domain wall partition function satisfies the following conditions: 
\begin{enumerate}
\item[{\bf 1.}] In the case of a single variable, $Z(\lambda_1 | w_1) = g(\lambda_1, w_1)$.

\item[{\bf 2.}] $Z(\{\lambda\} | \{w\})$ is symmetric in the sets of variables $\{\lambda\}$ and $\{w\}$, separately.

\item[{\bf 3.}] $Z(\{\lambda\} | \{w\}) \rightarrow 0$ as any 
$\lambda_i \rightarrow \infty$ or $w_j \rightarrow \infty$.

\item[{\bf 4.}] $Z(\{\lambda\} | \{w\})$ is holomorphic in its variables, apart from simple poles at the points $\lambda_i = w_j$. The residues of these poles are given by
\begin{align}
\lim_{\lambda_i \rightarrow w_j}
\Big(
(\lambda_i - w_j) Z(\{\lambda\} | \{w\})
\Big)
=
f(\lambda_i,\{\widehat{w}_j\})
f(\{\widehat{\lambda}_i\},w_j)
Z(\{\widehat{\lambda}_i\} | \{\widehat{w}_j\})
\label{res-prop}
\end{align}
where $\{\widehat{\lambda}_i\}$ and $\{\widehat{w}_j\}$ denote the sets $\{\lambda\}$ and $\{w\}$, with the omission of $\lambda_i$ and $w_j$, respectively.

%
%
%
\end{enumerate}
\end{myTheorem}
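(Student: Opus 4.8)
The plan is to read off all four properties directly from the definition of $Z(\{\lambda\}|\{w\})$ as the lattice sum in Figure~\ref{dwpf}, dispatching Properties 1--3 as short warm-ups and reserving the real work for the residue recursion in Property 4. For Property 1, I would enumerate the single admissible vertex: on the $1\times1$ lattice the domain-wall boundary fixes the horizontal states to $(i_\alpha,j_\alpha)=(1,2)$ and the vertical states to $(i_\beta,j_\beta)=(2,1)$, and the corresponding entry of (\ref{Rmat-sl2}) is precisely $g(\lambda_1,w_1)$. For Property 3 the key point is arrow conservation: since each horizontal line enters a row in state $1$ and must leave in state $2$, every configuration contains an odd (hence nonzero) number of state-changing vertices in that row, each carrying a factor $g(\lambda_i,\cdot)$; as $\lambda_i\to\infty$ one has $g(\lambda_i,\cdot)\to 0$ by (\ref{rat-wt}), so every term in the sum vanishes, and the transposed argument on columns handles $w_j\to\infty$.

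For Property 2 I would run the standard Izergin--Korepin ``railroad'' argument built on the Yang--Baxter equation (\ref{yb1}). To exchange two adjacent horizontal rapidities $\lambda_i,\lambda_{i+1}$, I insert the normalized intertwiner $\mathbb{R}^{(2)}(\lambda_i,\lambda_{i+1})$ at the left boundary, where both lines are frozen in state $1$; since $R|11\rangle = f|11\rangle$, the normalized intertwiner acts as the identity on that state and the insertion is trivial. I then drag it rightward across the entire lattice using (\ref{yb1}) at each crossing, and at the right boundary---where both lines are frozen in state $2$, and again $\mathbb{R}^{(2)}|22\rangle=|22\rangle$---it is once more the identity. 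The net effect is to interchange the two rows, and since adjacent transpositions generate the full symmetric group this proves symmetry in $\{\lambda\}$; an identical argument on vertical lines gives symmetry in $\{w\}$.

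Property 4 is the crux. Holomorphy with at most simple poles at $\lambda_i=w_j$ is immediate, because the weights depend only on differences of one horizontal and one vertical rapidity, so the sum is rational with candidate poles only at $\lambda_i=w_j$; as position $(i,j)$ is occupied by exactly one vertex in each configuration, every term contributes at most a simple pole there. To compute the residue I would invoke Property 2 to move the coincident pair to the bottom-left corner, taking $\lambda_i=\lambda_\ell$ and $w_j=w_1$. The boundary fixes the corner's incoming states to (horizontal-left) $1$ and (vertical-bottom) $2$; conservation then permits only two vertices, one of weight $1$ (no pole) and one of weight $g(\lambda_\ell,w_1)$, so only the latter feeds the residue, with $\mathrm{Res}_{\lambda_\ell=w_1}g=1$.

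Fixing this $g$-vertex flips the horizontal state to $2$ along the remainder of row $\ell$ and the vertical state to $1$ up the remainder of column $1$. Tracking conservation against the frozen left boundary (all $1$) forces every other vertex of column $1$ to be the all-$1$ weight $f(\lambda_k,w_1)$ with horizontal state unchanged, and against the frozen bottom boundary (all $2$) forces every other vertex of row $\ell$ to be the all-$2$ weight $f(\lambda_\ell,w_k)$ with vertical state unchanged. Consequently the horizontal lines leave column $1$ in state $1$ and the vertical lines leave row $\ell$ in state $2$, so the residual $(\ell-1)\times(\ell-1)$ block inherits exactly the domain-wall boundary and evaluates to $Z(\{\widehat\lambda_\ell\}|\{\widehat w_1\})$. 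Collecting the frozen weights yields $f(\lambda_\ell,\{\widehat w_1\})\,f(\{\widehat\lambda_\ell\},w_1)\,Z(\{\widehat\lambda_\ell\}|\{\widehat w_1\})$, which is (\ref{res-prop}). I expect this freezing bookkeeping to be the main obstacle: it is the one step where the specific six-vertex weights and the asymmetry of the boundary must be used in concert to rule out a mixture of vertex types and to confirm that the induced sublattice boundary is again of domain-wall type.
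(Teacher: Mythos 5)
Your proposal is correct and follows essentially the same route as the paper's (sketched) proof: a single-vertex enumeration for Property 1, the Yang--Baxter railroad argument for Property 2, the observation that every configuration carries a vanishing $g$-factor in the relevant row or column for Property 3, and a corner-freezing argument plus symmetry for the residue in Property 4 (you freeze the bottom-left corner where the paper uses the top-right, which is immaterial). The only content in the paper's proof you omit is the closing remark that Properties 1--4 determine $Z$ uniquely, but that is not part of the theorem as stated.
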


\begin{proof}
These properties are due to Korepin \cite{kor}. We only sketch the proof, since they are well known in the literature. 
{\bf 1.} The domain wall partition function in a single variable is just a single vertex, and the boundary conditions fix this vertex to have the weight $g(\lambda_1,w_1)$.
{\bf 2.} Using the Yang-Baxter equation in graphical form, it is possible to exchange the order of any pair of horizontal or vertical lines in the lattice. This establishes the required symmetry.
{\bf 3.} Consider the bottom row of the lattice on the left of Figure {\bf\ref{dwpf}}. Every configuration in this partition function contains one vertex with weight $g(\lambda_{\ell},w_j)$, and all remaining vertices in the bottom row being of weight 1 or $f(\lambda_{\ell},w_k)$, $k\not=j$. From the form of the Boltzmann weights (\ref{rat-wt}), it is therefore clear that $Z(\{\lambda\} | \{w\}) \rightarrow 0$ as $\lambda_{\ell} \rightarrow \infty$. A similar argument applies to the right-most column, and the variable $w_{\ell}$. Symmetry in $\{\lambda\}$ and $\{w\}$ establishes the result.

{\bf 4.} From the Boltzmann weights (\ref{rat-wt}), it is clear that $Z(\{\lambda\}|\{w\})$ is holomorphic apart from poles at $\lambda_i = w_j$. Consider the top-right vertex of the lattice on the left of Figure {\bf\ref{dwpf}}. There exist configurations where it has weight 1, or weight $g(\lambda_1,w_{\ell})$. In the latter case, the rest of the top row and right-most column are frozen to be vertices of the type $f(\lambda_1,w_j)$ and $f(\lambda_i,w_{\ell})$, respectively. Since only these configurations contribute to the residue of the pole at $\lambda_1 = w_{\ell}$, we find that
\begin{align}
\lim_{\lambda_1 \rightarrow w_{\ell}}
\Big(
(\lambda_1 - w_{\ell}) Z(\{\lambda\} | \{w\})
\Big)
=
\prod_{j=1}^{\ell-1}
f(\lambda_1,w_j)
\prod_{i=2}^{\ell}
f(\lambda_i,w_{\ell})
Z(\{\widehat{\lambda}_1\} | \{\widehat{w}_{\ell}\}) .
\end{align}
Symmetry in $\{\lambda\}$ and $\{w\}$ gives the result (\ref{res-prop}). It is a standard argument to prove that properties {\bf 1}--{\bf 4} determine $Z(\{\lambda\}|\{w\})$ uniquely. Supposing that another function $Z'(\{\lambda\}|\{w\})$ obeys the same set of properties, $Z-Z'$ is holomorphic and bounded everywhere in the complex plane, and therefore constant. This constant must be zero, from property {\bf 3}, proving that $Z'=Z$.  


\end{proof}

\subsection{Determinant formulae for domain wall partition function}

The domain wall partition function is given by the Izergin determinant formula \cite{ize}
\begin{align}
Z(\{\lambda\} | \{w\})
=
\frac{
\prod_{i,j=1}^{\ell}
(\lambda_i - w_j + 1)
}
{
\prod_{1 \leq i < j \leq \ell}
(\lambda_j - \lambda_i)
(w_i - w_j)
}
\det\left(
\frac{1}{(\lambda_i - w_j + 1)(\lambda_i - w_j)}
\right)_{1 \leq i,j \leq \ell} .
\label{ize-det}
\end{align}
Recently, another determinant representation appeared for the domain wall partition function, due to Kostov \cite{kos1,kos2,fw}. This determinant formula is given by 
\begin{align}
Z(\{\lambda\} | \{w\})
=
\frac{1}
{
\prod_{1 \leq i < j \leq \ell} (\lambda_j - \lambda_i)
}
\det\left(
\lambda_i^{j-1}
\prod_{k=1}^{\ell}
\frac{(\lambda_i - w_k +1)}{(\lambda_i - w_k)}
-
(\lambda_i+1)^{j-1}
\right)_{1\leq i ,j \leq \ell} .
\label{kost1}
\end{align}
To prove either determinant representation, (\ref{ize-det}) or (\ref{kost1}), it is simply a matter of showing that they satisfy the properties {\bf 1}--{\bf 4} of Subsection {\bf\ref{ssec:dom}}.

\subsection{Partial domain wall partition function}
\label{ssec:partial}

Let $n$ be an integer satisfying $1 \leq n \leq \ell $. Consider the partition function generated by deleting the bottom $(\ell-n)$ rows from the lattice in Figure {\bf \ref{dwpf}}, and whose lower boundary is summed over all state variables. We denote this object $Z(\{\lambda\}_n | \{w\}_{\ell})$ and represent it by the lattice in Figure {\bf \ref{dwpf-res}}.

\begin{figure}[H]

\begin{center}
\begin{minipage}{4.3in}

\setlength{\unitlength}{0.00035cm}
\begin{picture}(40000,7500)(4500,14000)


\path(14000,20000)(26000,20000) \put(12000,20000){\tiny $\lambda_1$} 
\put(14000,20500){\tiny 1}  
\put(25500,20500){\tiny 2}

\path(14000,18000)(26000,18000)
\put(14000,18500){\tiny 1}  
\put(25500,18500){\tiny 2}

\path(14000,16000)(26000,16000) \put(12000,16000){\tiny $\lambda_{n}$}
\put(14000,16500){\tiny 1} 
\put(25500,16500){\tiny 2}


\path(16000,14000)(16000,22000) \put(15500,13000){\tiny $w_1$}
\put(15500,22500){\tiny 1}

\path(18000,14000)(18000,22000)
\put(17500,22500){\tiny 1}

\path(20000,14000)(20000,22000)
\put(19500,22500){\tiny 1}

\path(22000,14000)(22000,22000)
\put(21500,22500){\tiny 1}

\path(24000,14000)(24000,22000) \put(23500,13000){\tiny $w_{\ell}$}
\put(23500,22500){\tiny 1}

\end{picture}

\end{minipage}
\end{center}

\caption{Lattice representation of $Z(\{\lambda\}_n | \{w\}_{\ell})$. The number of horizontal rapidities $\lambda_i$ is less than the number of vertical rapidities $w_j$. The lower boundary segments are devoid of state variables to indicate summation at these points.}

\label{dwpf-res}
\end{figure}
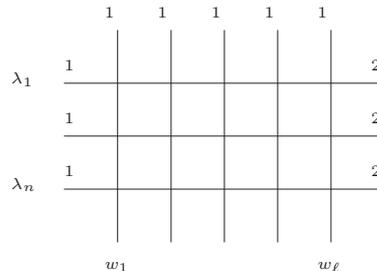
We emphasize that, unlike in the domain wall partition function, the lower boundary segments in Figure {\bf \ref{dwpf-res}} are {\it not}\/ fixed to definite state variable values but summed over them all. 

In \cite{fw} it was explained that (up to a combinatoric factor) $Z(\{\lambda\}_n | \{w\}_{\ell})$ is the leading term in the domain wall partition function $Z(\{\lambda\}_{\ell} | \{w\}_{\ell})$ as $\lambda_{\ell},\dots,\lambda_{n+1} \rightarrow \infty$. In this limit the contribution from the bottom $(\ell-n)$ rows of Figure {\bf\ref{dwpf}} is the same for all state variable configurations, and up to the factor $(\ell-n)!$ we are left with the lattice shown in Figure {\bf\ref{dwpf-res}}. More formally, 
\begin{align}
Z(\{\lambda\}_n | \{w\}_\ell)
=
\frac{1}{(\ell-n)!}
\lim_{\lambda_{\ell},\dots,\lambda_{n+1} \rightarrow \infty}
\Big(
\lambda_{\ell} \cdots \lambda_{n+1}
Z(\{\lambda\}_{\ell} | \{w\}_{\ell})
\Big)
\label{many-limits}
\end{align}
where the limits should be taken sequentially, in any order. Performing the limits (\ref{many-limits}) on the determinant (\ref{ize-det}), one obtains
\begin{align}
\label{ize-det2}
Z( \{\lambda\}_n | \{w\}_{\ell} )
=
\frac{\displaystyle{
\prod_{i=1}^{n} 
\prod_{j=1}^{\ell}
(\lambda_i - w_j+1)
}}
{\displaystyle{
\prod_{1 \leq i<j \leq n}
(\lambda_j - \lambda_i)
\prod_{1 \leq i<j \leq \ell}
(w_i - w_j)
}}
\left|
\begin{array}{ccc}
\frac{1}{(\lambda_1 - w_1)(\lambda_1 - w_1 + 1)} 
& 
\cdots 
& 
\frac{1}{(\lambda_1 - w_{\ell})(\lambda_1 - w_{\ell} + 1)}
\\
\vdots 
& & 
\vdots
\\
\frac{1}{(\lambda_{n} - w_1)(\lambda_{n} - w_1 + 1)}
&
\cdots
&
\frac{1}{(\lambda_{n} - w_{\ell})(\lambda_{n} - w_{\ell} + 1)}
\\
\\
w_1^{\ell-n-1} & \cdots & w_{\ell}^{\ell-n-1}
\\
\vdots & & \vdots
\\
w_1^{0} & \cdots & w_{\ell}^{0}
\end{array}
\right| .
\end{align}
Alternatively, starting from (\ref{kost1}), one finds that
\begin{align}
Z(\{\lambda\}_n | \{w\}_{\ell})
=
\frac{1}{\prod_{1 \leq i < j \leq n} (\lambda_j - \lambda_i)}
\det\left(
\lambda_i^{j-1}
\prod_{k=1}^{\ell}
\frac{(\lambda_i - w_k +1)}{(\lambda_i - w_k)}
-
(\lambda_i+1)^{j-1}
\right)_{1\leq i, j \leq n} 
\label{kost2}
\end{align}
where the determinant is now $n\times n$. For more details on the derivation of (\ref{ize-det2}) and (\ref{kost2}), we refer the reader to \cite{fw}. In the case where all variables are sent to infinity, we obtain
\begin{align}
\lim_{\lambda_{\ell},\dots,\lambda_1 \rightarrow \infty}
\Big(
\lambda_{\ell} \dots \lambda_1
Z(\{\lambda\}_{\ell} | \{w\}_{\ell})
\Big)
=
\ell!,
\quad\quad
\lim_{w_{\ell},\dots,w_1 \rightarrow \infty}
\Big(
w_{\ell} \dots w_1
Z(\{\lambda\}_{\ell} | \{w\}_{\ell})
\Big)
=
(-)^{\ell} \ell!
\label{all-limits}
\end{align}
We will use these results frequently throughout the rest of the paper. 

\subsection{Imposing the Bethe equations on $\{\lb\}$}

From now on we consider the case when one set of variables in the scalar product, $\{\lb\}$, satisfies the Bethe equations (\ref{bethe-sl2}). That is, we will assume that 
\begin{align}
r(\lb_i)
=
\frac{a(\lb_i)}{d(\lb_i)}
=
-
\prod_{j=1}^{\ell}
\frac{\lb_i-\lb_j + 1}{\lb_i - \lb_j - 1},
\quad\quad
\forall\ 1 \leq i \leq \ell.
\label{bethe}
\end{align}

\subsection{Slavnov determinant formula}
\label{slav-det}

An important case of the scalar product was considered by Slavnov in \cite{sla}. Assuming that the Bethe equations (\ref{bethe}) apply, one is able to replace all instances of $r(\lambda^B_i)$ with a function purely in the variables $\{\lb\}$, leading to a determinant expression for the scalar product. The most direct way to prove this is to start from (\ref{sl2-g}) and enforce the equations (\ref{bethe}), which gives
\begin{align}
\langle\langle \{\lambda^C\} | \{\lambda^B\} \rangle\rangle
=
\sum
(-)^{|\lbi|}
\prod_{\lbi} 
\prod_{j=1}^{\ell} \left(\frac{\lbi-\lb_j+1}{\lbi-\lb_j-1}\right)
\prod_{\lcii} r(\lcii)
f(\lci,\lcii) 
f(\lbii,\lbi)
Z(\lbii | \lcii) Z(\lci | \lbi) .
\label{slav-sum}
\end{align}
The sum (\ref{slav-sum}) can then be evaluated in determinant form, using the determinant expression (\ref{ize-det}) for each domain wall partition function and the Laplace formula for the determinant of a sum of matrices. For details of this calculation we refer the reader to \cite{kkmst}, but here we only state the result 
\begin{align}
\langle\langle \{\lambda^C\} | \{\lambda^B\} \rangle\rangle
=
\label{betscal}
\frac{\displaystyle{
\det\left(
\frac{1}{\lambda^B_j-\lambda^C_i}
\left(
\prod_{k\not= j}^{\ell}(\lb_k-\lc_i+1)
r(\lc_i)
-
\prod_{k\not=j}^{\ell} (\lb_k-\lc_i-1)
\right)
\right)_{1\leq i,j \leq \ell}
}}
{\displaystyle{
\prod_{1\leq i < j \leq \ell}
(\lc_j-\lc_i)(\lb_i-\lb_j)
}} .
\end{align}
In the rest of the paper, we treat the equality of (\ref{slav-sum}) and (\ref{betscal}) as an identity between meromorphic functions in the variables $\{\lc\},\{\lb\}$, with each $r(\lc_i)$ playing the role of a constant. 

\subsection{Behaviour in the $\{\lambda^B\} \rightarrow \infty$ limit}
\label{ssec:beh}

We conclude the section by studying the behaviour of the function (\ref{slav-sum}) in the limit $\{\lambda^B\}\rightarrow \infty$. In doing so, we treat (\ref{slav-sum}) as a free function in the variables $\{\lb\}$, despite the fact that it arises by imposing the constraints (\ref{bethe}) on $\{\lb\}$. We begin by fixing the notation
\begin{align}
\langle\langle \{\lc\} | \{\infty\} \rangle\rangle 
\equiv
\frac{1}{\ell!}
\lim_{\lb_{\ell}, \dots, \lb_1 \rightarrow \infty}
\Big(
\lb_{\ell} \cdots \lb_1 
\langle\langle \{\lambda^C\} | \{\lambda^B\} \rangle\rangle
\Big) .
\end{align}
Applying (\ref{all-limits}) to each domain wall partition function in (\ref{slav-sum}) we easily take the limit, and find that 
\begin{align}
\langle\langle \{\lc\} | \{\infty\} \rangle\rangle 
&=
\frac{1}{\ell!}
\sum_{k=0}^{\ell}
\sum_{
|\lci|=\ell-k,
|\lcii|=k
}
\binom{\ell}{k}
\prod_{\lcii}
r(\lcii)
f(\lci,\lcii)
k! (-)^{\ell-k} (\ell - k)!
\label{sl2-b2}
\\
&=
\sum_{
\{\lc\}=\{\lci\} \cup \{\lcii\}
}
(-)^{|\lci|}
\prod_{\lcii} r(\lcii)
\prod_{\lci,\lcii}
\frac{(\lci-\lcii+1)}{(\lci-\lcii)} .
\nonumber
\end{align}
In fact the sum (\ref{sl2-b2}) can be evaluated as the determinant
\begin{align}
\langle\langle \{\lambda^C\} | \{\infty\} \rangle\rangle
=
\frac{\det\Big(
(\lambda^C_i)^{j-1}
r(\lambda^C_i)
-
(\lambda^C_i+1)^{j-1}
\Big)_{1 \leq i, j \leq \ell}
}
{\displaystyle{
\prod_{1 \leq i < j \leq \ell} (\lambda^C_j - \lambda^C_i)
}} .
\label{slav-lim}
\end{align}
To see this, one uses the Laplace formula for the determinant of a sum of two matrices to expand (\ref{slav-lim}), as well as the classic evaluation of the Vandermonde determinant. Alternatively, one can derive the expression (\ref{slav-lim}) starting directly from the Slavnov determinant (\ref{betscal}) and taking the required limits\footnote{This observation is due to I~Kostov, and is explained in greater detail in \cite{fw}.}. The determinant (\ref{slav-lim}) is closely related to the partial domain wall partition function of Subsection {\bf\ref{ssec:partial}}; for a length $L$ inhomogeneous XXX Heisenberg spin-1/2 chain one has $r(\lambda^C_i) = \prod_{j=1}^{L} (\lambda^C_i-w_j+1)/(\lambda^C_i-w_j)$, and the agreement between (\ref{kost2}) and (\ref{slav-lim}) is exact.

\section{Generic $SU(3)$ scalar products}
\label{sec:su3-sp}

\subsection{Definition of $SU(3)$ scalar product}

In the case of $SU(3)$ models, the scalar product is defined as 
\begin{align}
\label{sl3-sp-def}
&
\langle \{\mc\}, \{\lc\} | \{\lb\}, \{\mb\} \rangle
=
\prod_{i=1}^{m} \prod_{j=1}^{\ell}
f(\mu^C_i,\lambda^C_j)
f(\mu^B_i,\lambda^B_j)
\times
\\
&
\langle \Uparrow_{\alpha}|
\otimes
\langle 0|
\prod_{i=1}^{m}
C^{(2)}(\mc_i)
\prod_{j=1}^{\ell}
C^{(1)}_{\alpha_j}(\lc_j)
\prod_{i=1}^{\ell}
B^{(1)}_{\beta_i}(\lb_i)
\prod_{j=1}^{m}
B^{(2)}(\mb_j)
|0\rangle
\otimes
|\Uparrow_{\beta} \rangle,
\nonumber
\\
&
\langle\langle \{\mc\}, \{\lc\} | \{\lb\}, \{\mb\} \rangle\rangle
=
\frac{
\langle \{\mc\}, \{\lc\} | \{\lb\}, \{\mb\} \rangle
}
{
\prod_{i=1}^{m}
a_3(\mc_i)
\prod_{j=1}^{\ell}
a_2(\lc_j)
\prod_{i=1}^{\ell}
a_2(\lb_i)
\prod_{j=1}^{m}
a_3(\mb_j)
} .
\nonumber
\end{align}
Once again, the scalar product is the action of a generic dual Bethe vector (\ref{dual-bethe-sl3}) on another generic Bethe vector (\ref{bethe-state-sl3}), up to the normalization $\prod_{i=1}^{m} \prod_{j=1}^{\ell}
f(\mu^C_i,\lambda^C_j) f(\mu^B_i,\lambda^B_j)$ which we include for consistency with \cite{res}. The auxiliary spaces $V_{\alpha_i},V_{\beta_j}$ participating in the scalar product are taken to be different in each half. No assumptions have yet been made in regard to the variables $\{\mc\},\{\lc\},\{\lb\},\{\mb\}$. 

\subsection{Sum formula for generic $SU(3)$ scalar product}

Following the work of Reshetikhin \cite{res}, the generic $SU(3)$ scalar product is given by the sum formula
\begin{multline}
\label{sum-sl3}
\langle \{\mc\}, \{\lc\} | \{\lb\}, \{\mb\} \rangle
=
\sum
\prod_{\lbi} a_1(\lbi) \prod_{\lcii} a_1(\lcii)
\prod_{\lbii} a_2(\lbii) \prod_{\lci} a_2(\lci)
\\
\times
\prod_{\mbii} a_2(\mbii) \prod_{\mci} a_2(\mci)
\prod_{\mbi} a_3(\mbi) \prod_{\mcii} a_3(\mcii)
f(\lci,\lcii) f(\lbii,\lbi) 
f(\mcii,\mci) f(\mbi,\mbii)
\\
\times
f(\mbii,\lbii) f(\mci,\lci)
Z(\{\lbii\},\{\mci\} | \{\lcii\},\{\mbi\})
Z(\{\lci\},\{\mbii\} | \{\lbi\},\{\mcii\}) .
\end{multline}
The sum in (\ref{sum-sl3}) is taken over all partitions of the sets $\{\lambda^C\},\{\lambda^B\},\{\mu^C\},\{\mu^B\}$ into disjoint subsets
\begin{align}
&
\{\lambda^C\} = \{\lci\} \cup \{\lcii\}, \quad
\{\lambda^B\} = \{\lbi\} \cup \{\lbii\}, \quad
\text{such that}\ 
|\lbi| = |\lci|, \quad 
|\lbii| = |\lcii|
\label{disj1}
\\
&
\{\mu^C\} = \{\mci\} \cup \{\mcii\}, \quad
\{\mu^B\} = \{\mbi\} \cup \{\mbii\}, \quad
\text{such that}\ 
|\mbi| = |\mci|, \quad 
|\mbii| = |\mcii|
\label{disj2}
\end{align}
and the quantities $Z(\{\lbii\},\{\mci\} | \{\lcii\},\{\mbi\}),
Z(\{\lci\},\{\mbii\} | \{\lbi\},\{\mcii\})$ are partition functions defined in Subsection {\bf\ref{pf-resh}}. More details regarding the proof of (\ref{sum-sl3}) are given in Appendix {\bf\ref{app:2}}. 

Normalizing by dividing by $\prod_{i=1}^{\ell} a_2(\lc_i)a_2(\lb_i) \prod_{j=1}^{m} a_3(\mc_j)a_3(\mb_j)$, we have
\begin{multline}
\label{sl3-g}
\langle\langle \{\mc\}, \{\lc\} | \{\lb\}, \{\mb\} \rangle\rangle
=
\sum
\prod_{\lbi} r_1(\lbi) \prod_{\lcii} r_1(\lcii)
\prod_{\mbii} r_2(\mbii) \prod_{\mci} r_2(\mci)
\\
\times
f(\lci,\lcii) f(\lbii,\lbi) 
f(\mcii,\mci) f(\mbi,\mbii)
f(\mbii,\lbii) f(\mci,\lci)
\\
\times
Z(\{\lbii\},\{\mci\} | \{\lcii\},\{\mbi\})
Z(\{\lci\},\{\mbii\} | \{\lbi\},\{\mcii\}) .
\end{multline}

\subsection{Partition function 
$Z(\{\lambda\},\{\mu\}|\{w\},\{v\})$}
\label{pf-resh}

This quantity, defined graphically in \cite{res}, is equal to the lattice sum
\begin{figure}[H]

\begin{center}
\begin{minipage}{4.3in}

\setlength{\unitlength}{0.0003cm}
\begin{picture}(40000,20000)(8000,-6000)


\put(11000,12000){\tiny $\lambda_1$}
\put(13000,12000){\tiny 1}
\path(14000,12000)(36000,12000)
\put(36500,12000){\tiny 2} 

\put(13000,10000){\tiny 1}
\path(14000,10000)(36000,10000) 
\put(36500,10000){\tiny 2}

\put(13000,8000){\tiny 1}
\path(14000,8000)(36000,8000)
\put(36500,8000){\tiny 2} 

\put(13000,6000){\tiny 1}
\path(14000,6000)(36000,6000)
\put(36500,6000){\tiny 2} 

\put(11000,4000){\tiny $\lambda_{\ell}$}
\put(13000,4000){\tiny 1}
\path(14000,4000)(36000,4000) 
\put(36500,4000){\tiny 2}


\put(11000,2000){\tiny $\mu_1$}
\put(13000,2000){\tiny 3}
\path(14000,2000)(36000,2000)
\put(36500,2000){\tiny 2} 

\put(13000,000){\tiny 3}
\path(14000,000)(36000,000)
\put(36500,000){\tiny 2} 

\put(13000,-2000){\tiny 3}
\path(14000,-2000)(36000,-2000)
\put(36500,-2000){\tiny 2} 

\put(11000,-4000){\tiny $\mu_m$}
\put(13000,-4000){\tiny 3}
\path(14000,-4000)(36000,-4000)
\put(36500,-4000){\tiny 2}


\path(16000,-6000)(16000,14000)
\put(15500,-7000){\tiny $w_1$}
\put(16200,-6000){\tiny 2} 
\put(16000,14500){\tiny 1} 

\path(18000,-6000)(18000,14000)
\put(18200,-6000){\tiny 2} 
\put(18000,14500){\tiny 1} 

\path(20000,-6000)(20000,14000)
\put(20200,-6000){\tiny 2} 
\put(20000,14500){\tiny 1} 

\path(22000,-6000)(22000,14000)
\put(22200,-6000){\tiny 2} 
\put(22000,14500){\tiny 1} 

\path(24000,-6000)(24000,14000) 
\put(23500,-7000){\tiny $w_{\ell}$}
\put(24200,-6000){\tiny 2} 
\put(24000,14500){\tiny 1}


\path(28000,-6000)(28000,14000)
\put(27500,-7000){\tiny $v_1$}
\put(28200,-6000){\tiny 3} 
\put(28000,14500){\tiny 2} 

\path(30000,-6000)(30000,14000)
\put(30200,-6000){\tiny 3} 
\put(30000,14500){\tiny 2} 

\path(32000,-6000)(32000,14000)
\put(32200,-6000){\tiny 3} 
\put(32000,14500){\tiny 2}

\path(34000,-6000)(34000,14000)
\put(33500,-7000){\tiny $v_m$}
\put(34200,-6000){\tiny 3} 
\put(34000,14500){\tiny 2}


\put(28000,-4000){\circle*{200}}
\put(28000,-2000){\circle*{200}}
\put(28000,000){\circle*{200}}
\put(28000,2000){\circle*{200}}
\put(28000,4000){\circle*{200}}
\put(28000,6000){\circle*{200}}
\put(28000,8000){\circle*{200}}
\put(28000,10000){\circle*{200}}
\put(28000,12000){\circle*{200}}

\put(30000,-4000){\circle*{200}}
\put(30000,-2000){\circle*{200}}
\put(30000,000){\circle*{200}}
\put(30000,2000){\circle*{200}}
\put(30000,4000){\circle*{200}}
\put(30000,6000){\circle*{200}}
\put(30000,8000){\circle*{200}}
\put(30000,10000){\circle*{200}}
\put(30000,12000){\circle*{200}}

\put(32000,-4000){\circle*{200}}
\put(32000,-2000){\circle*{200}}
\put(32000,000){\circle*{200}}
\put(32000,2000){\circle*{200}}
\put(32000,4000){\circle*{200}}
\put(32000,6000){\circle*{200}}
\put(32000,8000){\circle*{200}}
\put(32000,10000){\circle*{200}}
\put(32000,12000){\circle*{200}}

\put(34000,-4000){\circle*{200}}
\put(34000,-2000){\circle*{200}}
\put(34000,000){\circle*{200}}
\put(34000,2000){\circle*{200}}
\put(34000,4000){\circle*{200}}
\put(34000,6000){\circle*{200}}
\put(34000,8000){\circle*{200}}
\put(34000,10000){\circle*{200}}
\put(34000,12000){\circle*{200}}

\end{picture}

\end{minipage}
\end{center}

\caption{Lattice representation of $Z(\{\lambda\},\{\mu\}|\{w\},\{v\})$. Undotted vertices denote the entries of the $R$-matrix (\ref{Rmat-sl3}), dotted vertices denote the entries of (\ref{Rmat-sl3t}). Each lattice line denotes an entry of a certain $SU(3)$ monodromy matrix. The top $\ell$ rows denote $t_{12}(\lambda_i)$ operators, and the bottom $m$ rows denote the $t_{32}(\mu_j)$ operators.}
\label{resh-pf}

\end{figure}
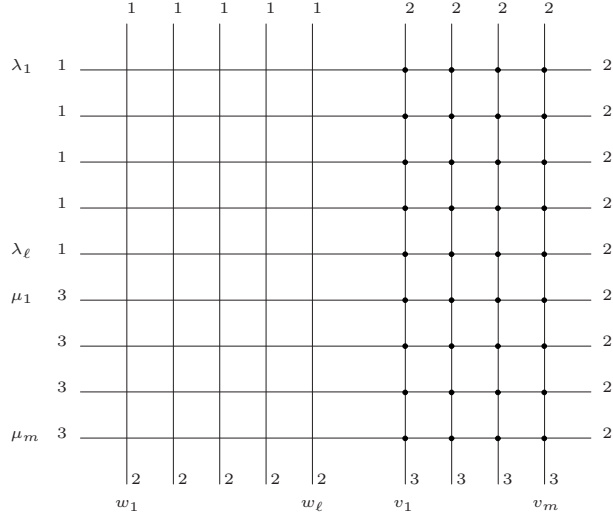
Notice that both types of $SU(3)$ $R$-matrix, namely (\ref{Rmat-sl3}) and (\ref{Rmat-sl3t}), are present in this function. The construction of $Z(\{\lambda\},\{\mu\}|\{w\},\{v\})$ and the reason for its appearance in (\ref{sl3-g}) is a rather complicated story, which we explain in more detail in Appendix {\bf\ref{app:2}}. For our purposes it plays the role of the domain wall partition function at $SU(3)$ level.
\begin{myTheorem}
The partition function in Figure \ref{resh-pf} is given by\footnote{A number of alternative expressions for $Z(\{\lambda\},\{\mu\}|\{w\},\{v\})$, of an analogous form to (\ref{resh-pf2}), appeared subsequently to the first version of this paper in \cite{bprs1}.} 
\begin{multline}
\label{resh-pf2}
Z(\{\lambda\},\{\mu\}|\{w\},\{v\})
=
\sum_{\substack{
\{\lambda\}=\{\li\}\cup\{\lii\}
\\
\{\mu\}=\{\mi\}\cup\{\mii\}
}}
\prod_{\mi,\mii}
f(\mi,\mii)
\prod_{\li,\lii}
f(\lii,\li)
\prod_{\mi,\li}
f(\mi,\li)
Z(\{\lii\}|\{\mii\})
\\
\times
Z(\{\li\}\cup\{\mii\}|\{w\})
Z(\{v\}|\{\mi\}\cup\{\lii\})
\end{multline}
where the sum in (\ref{resh-pf2}) is over all partitions of $\{\lambda\}$ and $\{\mu\}$ into disjoint subsets, such that $|\lii|=|\mii|$.
\end{myTheorem}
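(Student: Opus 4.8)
The plan is to evaluate the lattice of Figure~\ref{resh-pf} directly, by cutting it along the vertical line that separates the $\ell$ columns carrying $\{w\}$ (undotted, left block) from the $m$ columns carrying $\{v\}$ (dotted, right block). This cut introduces a column of intermediate state variables $s_1,\dots,s_{\ell+m}\in\{1,2,3\}$, one on each of the $\ell+m$ horizontal lines, over which we must sum. The partition function then factorizes as a sum, over intermediate configurations, of (left block)$\times$(right block), and the claim is that organizing this sum by the data of which horizontal lines carry their excitation across the cut reproduces exactly the partition sum (\ref{resh-pf2}). Throughout, the rigorous backbone is the Korepin characterization used to establish properties {\bf 1}--{\bf 4} of Subsection~\ref{ssec:dom}: each block, and ultimately each side of (\ref{resh-pf2}), is pinned down by its symmetry, its decay at infinity, and its residues at the coincidence poles, so the evaluation of every piece can be verified against those properties.

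First I would carry out the colour-conservation bookkeeping. The undotted vertices of (\ref{Rmat-sl3}) conserve the multiset of states at each crossing, so colour-$3$ (entering the left block only along the $m$ rows carrying $\{\mu\}$) and colour-$1$ (entering along the $\ell$ rows carrying $\{\lambda\}$ and along the tops of the $\{w\}$ columns) can leave the left block only across the cut. Tracking these worldlines shows that each $\lambda$-row either deposits its colour-$1$ excitation inside the left block (subset $\{\li\}$) or carries it across the cut (subset $\{\lii\}$), and dually each $\mu$-row either keeps its colour-$3$ line within the left block (subset $\{\mii\}$) or lets it relax so that the line is a spectator in the right block (subset $\{\mi\}$); since the left block must carry domain-wall data on its $\ell$ columns $\{w\}$ and the right block domain-wall data on its $m$ columns $\{v\}$, the counts satisfy $|\li|+|\mii|=\ell$ and $|\mi|+|\lii|=m$, i.e.\ $|\lii|=|\mii|$, precisely the constraint in (\ref{resh-pf2}). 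With the partition fixed, the undotted left block carries domain-wall boundary conditions on $\{w\}$ with effective row rapidities $\{\li\}\cup\{\mii\}$ and evaluates to $Z(\{\li\}\cup\{\mii\}|\{w\})$. For the right block I would use the defining relation $R^{*(1)}(\lambda,\mu)=\big(R^{(1)}(-\lambda,-\mu)\big)^{{\rm t}_\beta}$ (Figure~\ref{vert2}) to convert the dotted lattice into an ordinary one with reflected variables; its boundary data are then exactly those of the lattice on the \emph{right} of Figure~\ref{dwpf}, so it evaluates to the right-convention domain wall partition function $Z(\{v\}|\{\mi\}\cup\{\lii\})$. The prefactors $\prod_{\mi,\mii}f(\mi,\mii)$, $\prod_{\li,\lii}f(\lii,\li)$ and $\prod_{\mi,\li}f(\mi,\li)$ then collect the Boltzmann weights of the frozen vertices at which two worldlines of definite colour cross without interacting.

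The main obstacle is the cross-talk between the colour-$1$ and colour-$3$ lines, which prevents the two blocks from splitting cleanly into independent two-state six-vertex lattices: where a $\lii$-worldline bound for the right block and a $\mii$-worldline resident in the left block meet near the cut, the configurations do not simply decouple, and their residual entanglement is what produces the extra coupling factor $Z(\{\lii\}|\{\mii\})$. I would isolate this interface contribution as a function of $\{\lii\}$ and $\{\mii\}$ and show that it obeys properties {\bf 1}--{\bf 4} of Subsection~\ref{ssec:dom} --- a single-variable value of $g$, separate symmetry in each set, decay at infinity, and the residue recursion (\ref{res-prop}) --- so that by the uniqueness argument it must equal $Z(\{\lii\}|\{\mii\})$. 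The proof then concludes by verifying that the full sum over partitions so assembled has the correct symmetry, asymptotics, and coincidence residues to coincide with the lattice of Figure~\ref{resh-pf}, the uniqueness principle upgrading the formal reorganization into an identity of meromorphic functions.
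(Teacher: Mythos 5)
Your strategy of cutting the lattice between the $\{w\}$ and $\{v\}$ columns and matching each cut configuration to a summand of (\ref{resh-pf2}) does not work: the term-by-term identification fails already at $\ell=m=1$. There the cut admits exactly two configurations (one state $2$ and one state $3$ distributed over the two rows). For the configuration in which the $\lambda$-row crosses the cut in state $3$ and the $\mu$-row in state $2$ (your partition $\{\lii\}=\{\lambda\}$, $\{\mii\}=\{\mu\}$), direct evaluation of the four vertices gives left block $=g(\mu,w)\,g(\lambda,w)$ and right block $=g(v,\lambda)$, whereas the corresponding summand of (\ref{resh-pf2}) is $Z(\lambda|\mu)\,Z(\mu|w)\,Z(v|\lambda)=g(\lambda,\mu)\,g(\mu,w)\,g(v,\lambda)$: the factor $g(\lambda,w)$ is not $g(\lambda,\mu)$. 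The other configuration yields $g(\lambda,w)\,f(v,\lambda)\,g(v,\mu)$ against the summand $f(\mu,\lambda)\,g(\lambda,w)\,g(v,\mu)$. The two totals agree, but only after a nontrivial rational identity that mixes the terms. This falsifies your intermediate claims as well: the left block at fixed cut data is \emph{not} proportional to $Z(\{\li\}\cup\{\mii\}|\{w\})$ (it has extra poles in $w$ at the $\{\lii\}$ rapidities), and the ``cross-talk'' is not a $w$- and $v$-independent factor $Z(\{\lii\}|\{\mii\})$ --- in the example it carries the $w$-dependence $g(\lambda,w)$ on one side of the cut and the $v$-dependence $f(v,\lambda)$ on the other. (Your colour bookkeeping is also off: no colour-$1$ line ever crosses the cut; all $\ell$ of them exit through the tops of the $\{w\}$ columns, and the partition is determined by whether a row crosses the cut in state $2$ or $3$.)

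The underlying reason is that the partition in (\ref{resh-pf2}) is a partition of the \emph{rapidity sets}, not a labelling of which physical row carries which colour across a geometric cut. The paper's proof treats each row as an operator $t_{12}(\lambda_i)$ or $t_{32}(\mu_j)$ of the monodromy matrix (\ref{resh-mon}) and reorders them with the commutation relation (\ref{com-rel}); the unwanted term $-g(x,y)\,t_{12}(x)\,t_{32}(y)$ swaps arguments between rows, and it is precisely this exchange that redistributes the rapidities into the subsets $\{\li\}\cup\{\mii\}$ and $\{\mi\}\cup\{\lii\}$ --- an operation invisible to a vertical cut. Only after this reordering does the lattice factorize as $Z(\{\li\}\cup\{\mii\}|\{w\})\,Z(\{v\}|\{\mi\}\cup\{\lii\})$ times an unknown coefficient $K$ independent of $\{w\},\{v\}$, and $K$ (including the factor $Z(\{\lii\}|\{\mii\})$) is then extracted by specializing $\{w\}\rightarrow\{\li\}\cup\{\mii\}$, $\{v\}\rightarrow\{\mi\}\cup\{\lii\}$ so that the residue structure (\ref{res-prop}) isolates a single frozen term. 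If you wish to salvage a uniqueness-based argument instead, you would first need an independent Korepin-style characterization of the full two-set object $Z(\{\lambda\},\{\mu\}|\{w\},\{v\})$ (its symmetries, asymptotics and recursion relations), which your proposal does not supply.
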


\begin{proof}
The proof is of a similar nature to the proof of equation (\ref{sum-sl2}), which is thoroughly outlined in Appendix {\bf\ref{app:1}}. Consider a monodromy matrix formed by taking a product of the $R$-matrices (\ref{Rmat-sl3}) and (\ref{Rmat-sl3t}),
\begin{align}
T_{\alpha}(x)
\equiv
T_{\alpha}(x|w_1,\dots,w_{\ell},v_1,\dots,v_m)
&=
R^{(1)}_{\alpha 1}(x,w_1) \dots 
R^{(1)}_{\alpha \ell}(x,w_{\ell})
R^{*(1)}_{\alpha 1'}(x,v_1) \dots 
R^{*(1)}_{\alpha m'}(x,v_m)
\label{resh-mon}
\\
&=
\left(
\begin{array}{ccc}
t_{11}(x) & t_{12}(x) & t_{13}(x)
\\
t_{21}(x) & t_{22}(x) & t_{23}(x)
\\
t_{31}(x) & t_{32}(x) & t_{33}(x)
\end{array}
\right)_{\alpha} .
\nonumber
\end{align}
Because of the Yang-Baxter equations (\ref{yb1}) and (\ref{yb2}), the monodromy matrix (\ref{resh-mon}) obeys the intertwining equation $R^{(1)}_{\alpha\beta}(x,y) T_{\alpha}(x) T_{\beta}(y)
= T_{\beta}(y) T_{\alpha}(x) R^{(1)}_{\alpha\beta}(x,y)$. From this equation we can extract one particular identity between the operator entries of (\ref{resh-mon}), namely
\begin{align}
t_{32}(x) t_{12}(y)
=
f(x,y) t_{12}(y) t_{32}(x)
-
g(x,y) t_{12}(x) t_{32}(y).
\label{com-rel}
\end{align}
Noticing that each horizontal line in Figure {\bf\ref{resh-pf}} is the graphical representation of an operator $t_{12}(\lambda_i)$ or $t_{32}(\mu_j)$, we can use the commutation relation (\ref{com-rel}) repeatedly to exchange the lattice lines. The aim is to transfer the $t_{32}$ lines to the top, and the $t_{12}$ lines to the bottom, as shown in Figure {\bf \ref{resh-pf3}}.

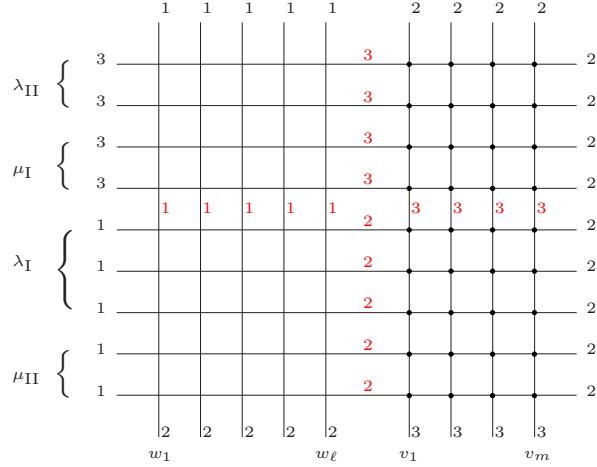
\begin{figure}[H]

\begin{center}
\begin{minipage}{4.3in}

\setlength{\unitlength}{0.00027cm}
\begin{picture}(40000,20000)(4000,-6000)


\put(9000,10750){\tiny $\lii$}
\put(11000,10750){$\left\{\phantom{\frac{\prod}{\prod}}\right.$}

\put(9000,6750){\tiny $\mi$}
\put(11000,6750){$\left\{\phantom{\frac{\prod}{\prod}}\right.$}

\put(9000,2250){\tiny $\li$}
\put(11000,1750){$\left\{\phantom{\displaystyle{\prod_{i}^{j}}}\right.$}

\put(9000,-3200){\tiny $\mii$}
\put(11000,-3250){$\left\{\phantom{\frac{\prod}{\prod}}\right.$}


\put(13000,12000){\tiny 3}
\path(14000,12000)(36000,12000)
\put(25800,12200){\tiny\color{red} 3}
\put(36500,12000){\tiny 2} 

\put(13000,10000){\tiny 3}
\path(14000,10000)(36000,10000)
\put(25800,10200){\tiny\color{red} 3} 
\put(36500,10000){\tiny 2}

\put(13000,8000){\tiny 3}
\path(14000,8000)(36000,8000)
\put(25800,8200){\tiny\color{red} 3}
\put(36500,8000){\tiny 2} 

\put(13000,6000){\tiny 3}
\path(14000,6000)(36000,6000)
\put(25800,6200){\tiny\color{red} 3}
\put(36500,6000){\tiny 2} 


\put(13000,4000){\tiny 1}
\path(14000,4000)(36000,4000)
\put(25800,4200){\tiny\color{red} 2} 
\put(36500,4000){\tiny 2}

\put(13000,2000){\tiny 1}
\path(14000,2000)(36000,2000)
\put(25800,2200){\tiny\color{red} 2}
\put(36500,2000){\tiny 2} 

\put(13000,000){\tiny 1}
\path(14000,000)(36000,000)
\put(25800,200){\tiny\color{red} 2}
\put(36500,000){\tiny 2} 

\put(13000,-2000){\tiny 1}
\path(14000,-2000)(36000,-2000)
\put(25800,-1800){\tiny\color{red} 2}
\put(36500,-2000){\tiny 2} 

\put(13000,-4000){\tiny 1}
\path(14000,-4000)(36000,-4000)
\put(25800,-3800){\tiny\color{red} 2}
\put(36500,-4000){\tiny 2}


\path(16000,-6000)(16000,14000)
\put(15500,-7000){\tiny $w_1$}
\put(16100,-6000){\tiny 2}
\put(16100,4800){\tiny\color{red} 1} 
\put(16100,14500){\tiny 1} 

\path(18000,-6000)(18000,14000)
\put(18100,-6000){\tiny 2}
\put(18100,4800){\tiny\color{red} 1}  
\put(18100,14500){\tiny 1} 

\path(20000,-6000)(20000,14000)
\put(20100,-6000){\tiny 2} 
\put(20100,4800){\tiny\color{red} 1} 
\put(20100,14500){\tiny 1} 

\path(22000,-6000)(22000,14000)
\put(22100,-6000){\tiny 2} 
\put(22100,4800){\tiny\color{red} 1} 
\put(22100,14500){\tiny 1} 

\path(24000,-6000)(24000,14000) 
\put(23500,-7000){\tiny $w_{\ell}$}
\put(24100,-6000){\tiny 2} 
\put(24100,4800){\tiny\color{red} 1} 
\put(24100,14500){\tiny 1}


\path(28000,-6000)(28000,14000)
\put(27500,-7000){\tiny $v_1$}
\put(28100,-6000){\tiny 3}
\put(28100,4800){\tiny\color{red} 3}  
\put(28100,14500){\tiny 2} 

\path(30000,-6000)(30000,14000)
\put(30100,-6000){\tiny 3} 
\put(30100,4800){\tiny\color{red} 3} 
\put(30100,14500){\tiny 2} 

\path(32000,-6000)(32000,14000)
\put(32100,-6000){\tiny 3} 
\put(32100,4800){\tiny\color{red} 3} 
\put(32100,14500){\tiny 2}

\path(34000,-6000)(34000,14000)
\put(33500,-7000){\tiny $v_m$}
\put(34100,-6000){\tiny 3}
\put(34100,4800){\tiny\color{red} 3}  
\put(34100,14500){\tiny 2}


\put(28000,-4000){\circle*{200}}
\put(28000,-2000){\circle*{200}}
\put(28000,000){\circle*{200}}
\put(28000,2000){\circle*{200}}
\put(28000,4000){\circle*{200}}
\put(28000,6000){\circle*{200}}
\put(28000,8000){\circle*{200}}
\put(28000,10000){\circle*{200}}
\put(28000,12000){\circle*{200}}

\put(30000,-4000){\circle*{200}}
\put(30000,-2000){\circle*{200}}
\put(30000,000){\circle*{200}}
\put(30000,2000){\circle*{200}}
\put(30000,4000){\circle*{200}}
\put(30000,6000){\circle*{200}}
\put(30000,8000){\circle*{200}}
\put(30000,10000){\circle*{200}}
\put(30000,12000){\circle*{200}}

\put(32000,-4000){\circle*{200}}
\put(32000,-2000){\circle*{200}}
\put(32000,000){\circle*{200}}
\put(32000,2000){\circle*{200}}
\put(32000,4000){\circle*{200}}
\put(32000,6000){\circle*{200}}
\put(32000,8000){\circle*{200}}
\put(32000,10000){\circle*{200}}
\put(32000,12000){\circle*{200}}

\put(34000,-4000){\circle*{200}}
\put(34000,-2000){\circle*{200}}
\put(34000,000){\circle*{200}}
\put(34000,2000){\circle*{200}}
\put(34000,4000){\circle*{200}}
\put(34000,6000){\circle*{200}}
\put(34000,8000){\circle*{200}}
\put(34000,10000){\circle*{200}}
\put(34000,12000){\circle*{200}}

\end{picture}

\end{minipage}
\end{center}

\caption{The result of using the commutation relation (\ref{com-rel}) repeatedly. One obtains a sum over all ways of partitioning $\{\lambda\},\{\mu\}$ into disjoint subsets, and up to a constant that depends on the partition, each term in this sum is of the form shown above. This lattice factorizes into a product of domain wall partition functions $Z(\{\li\}\cup\{\mii\}|\{w\})$ and $Z(\{v\}|\{\mi\}\cup\{\lii\})$.}
\label{resh-pf3}

\end{figure}
At the end of this procedure (which is remniscent of, but less complicated than the algorithm described in Appendix {\bf\ref{app:1}}) we obtain a sum over all terms which have $m$ $t_{32}$ lines at the top of the lattice (with arguments in the set $\{\mi\}\cup\{\lii\}$) and $\ell$ $t_{12}$ lines at the bottom (with arguments in the set $\{\li\}\cup\{\mii\}$). Clearly, the fact that $|\mi|+|\lii| = m$ and $|\li|+|\mii|=\ell$ implies that the sum is constrained by $|\lii| = |\mii|$. Furthermore, from Figure {\bf \ref{resh-pf3}}, we see that all of these terms factorize into a product of two domain wall partition functions (see both lattices in Figure {\bf\ref{dwpf}}). Hence we can deduce the equation
\begin{align}
Z(\{\lambda\},\{\mu\}|\{w\},\{v\})
=
\sum_{\substack{
\{\lambda\}=\{\li\}\cup\{\lii\}
\\
\{\mu\}=\{\mi\}\cup\{\mii\}
}}
K(\{\li\},\{\lii\}|\{\mi\},\{\mii\})
Z(\{\li\}\cup\{\mii\}|\{w\})
Z(\{v\}|\{\mi\}\cup\{\lii\})
\label{97}
\end{align}
where the coefficient $K(\{\li\},\{\lii\}|\{\mi\},\{\mii\})$ depends on the partitioning of sets. Since $K$ does not depend on $\{w\}$ or $\{v\}$, we can exploit the freedom of these variables to isolate a single term in this sum. Let us firstly change the normalization of $Z(\{\lambda\},\{\mu\} | \{w\}, \{v\})$, by defining
\begin{align}
\widetilde{Z}(\{\lambda\},\{\mu\} | \{w\}, \{v\})
=
\frac{Z(\{\lambda\},\{\mu\} | \{w\}, \{v\})}
{f(\{\lambda\},\{w\}) f(\{\mu\},\{w\}) f(\{v\},\{\lambda\}) f(\{v\},\{\mu\})} .
\end{align}
Now let $\{\lambda\} = \{\li\} \cup \{\lii\}$ and $\{\mu\} = \{\mi\} \cup \{\mii\}$ be a given, fixed partitioning of the sets $\{\lambda\}$ and $\{\mu\}$, and consider the limits 
\begin{align*}
\{w\} \rightarrow \{\li\} \cup \{\mii\},
\qquad
\{v\} \rightarrow \{\mi\} \cup \{\lii\}.
\end{align*}
Due to the poles of the domain wall partition function (\ref{res-prop}), a single term in the sum on the right hand side of (\ref{97}) will survive in this limit, corresponding to the selfsame partitioning of the variables. This isolates $K(\{\li\},\{\lii\} | \{\mi\},\{\mii\})$, up to a multiplicative term:  
\begin{align}
\widetilde{Z}(\{\lambda\},\{\mu\} | \{w\}, \{v\})
\Big|_{\substack{\{w\} \rightarrow \{\li\} \cup \{\mii\} \\ \{v\} \rightarrow \{\mi\} \cup \{\lii\}}}
=
\frac{K(\{\li\},\{\lii\} | \{\mi\},\{\mii\})}
{(f(\lii,\li) f(\lii,\mii) f(\mi,\li) f(\mi,\mii))^2} .
\label{99}
\end{align}
To calculate the left hand side of (\ref{99}) explicitly we use the graphical form of 
$\widetilde{Z}(\{\lambda\},\{\mu\} | \{w\},\{v\})$, and note that in this limit, it degenerates into the frozen regions shown in Figure {\bf\ref{resh-frozen}}.
\begin{figure}[H]

\begin{center}
\begin{minipage}{4.3in}

\setlength{\unitlength}{0.0003cm}
\begin{picture}(40000,22000)(8000,-8000)


\put(16800,10700){\scriptsize\color{blue} 1}
\put(20600,10700){\scriptsize\color{blue} $Z$}
\put(28800,10700){\scriptsize\color{blue} 1}
\put(32800,10700){\scriptsize\color{blue} 1}

\put(16600,4700){\scriptsize\color{blue} $Z$}
\put(20300,4700){\scriptsize\color{blue} $f^{-1}$}
\put(28600,4700){\scriptsize\color{blue} $Z$}

\put(16800,0700){\scriptsize\color{blue} 1}
\put(20300,0700){\scriptsize\color{blue} $f^{-1}$}
\put(28800,0700){\scriptsize\color{blue} 1}
\put(32600,0700){\scriptsize\color{blue} $Z$}

\put(16600,-3300){\scriptsize\color{blue} $Z$}
\put(20800,-3300){\scriptsize\color{blue} 1}
\put(28300,-3300){\scriptsize\color{blue} $f^{-1}$}


\put(13000,12000){\tiny 1}
\put(18800,12200){\tiny\color{red} 1}
\put(25800,12200){\tiny\color{red} 2}
\put(30800,12200){\tiny\color{red} 2}
\path(14000,12000)(36000,12000)
\put(36500,12000){\tiny 2} 

\put(13000,10000){\tiny 1}
\put(18800,10200){\tiny\color{red} 1}
\put(25800,10200){\tiny\color{red} 2}
\put(30800,10200){\tiny\color{red} 2}
\path(14000,10000)(36000,10000) 
\put(36500,10000){\tiny 2}

\put(13000,8000){\tiny 1}
\put(18800,8200){\tiny\color{red} 1}
\put(25800,8200){\tiny\color{red} 2}
\put(30800,8200){\tiny\color{red} 2}
\path(14000,8000)(36000,8000)
\put(36500,8000){\tiny 2} 

\put(13000,6000){\tiny 1}
\put(18800,6200){\tiny\color{red} 3}
\put(25800,6200){\tiny\color{red} 3}
\put(30800,6200){\tiny\color{red} 2}
\path(14000,6000)(36000,6000)
\put(36500,6000){\tiny 2} 

\put(13000,4000){\tiny 1}
\put(18800,4200){\tiny\color{red} 3}
\put(25800,4200){\tiny\color{red} 3}
\put(30800,4200){\tiny\color{red} 2}
\path(14000,4000)(36000,4000) 
\put(36500,4000){\tiny 2}


\put(13000,2000){\tiny 3}
\put(18800,2200){\tiny\color{red} 3}
\put(25800,2200){\tiny\color{red} 3}
\put(30800,2200){\tiny\color{red} 3}
\path(14000,2000)(36000,2000)
\put(36500,2000){\tiny 2} 

\put(13000,000){\tiny 3}
\put(18800,200){\tiny\color{red} 3}
\put(25800,200){\tiny\color{red} 3}
\put(30800,200){\tiny\color{red} 3}
\path(14000,000)(36000,000)
\put(36500,000){\tiny 2} 

\put(13000,-2000){\tiny 3}
\put(18800,-1800){\tiny\color{red} 2}
\put(25800,-1800){\tiny\color{red} 2}
\path(14000,-2000)(36000,-2000)
\put(36500,-2000){\tiny 2} 

\put(13000,-4000){\tiny 3}
\put(18800,-3800){\tiny\color{red} 2}
\put(25800,-3800){\tiny\color{red} 2}
\path(14000,-4000)(36000,-4000)
\put(36500,-4000){\tiny 2}


\put(10000,9750){\tiny $\li$}
\put(12000,9750){$\left\{\phantom{\displaystyle{\prod_{\int}^{j}}}\right.$}

\put(10000,4750){\tiny $\lii$}
\put(12000,4750){$\left\{\phantom{\frac{\prod}{\prod}}\right.$}

\put(10000,750){\tiny $\mi$}
\put(12000,750){$\left\{\phantom{\frac{\prod}{\prod}}\right.$}

\put(10000,-3250){\tiny $\mii$}
\put(12000,-3250){$\left\{\phantom{\frac{\prod}{\prod}}\right.$}


\path(16000,-6000)(16000,14000)
\put(15500,6800){\tiny\color{red} 1}
\put(15500,2800){\tiny\color{red} 3}
\put(15500,-1200){\tiny\color{red} 3}
\put(15500,-6000){\tiny 2} 
\put(15500,14200){\tiny 1} 

\path(18000,-6000)(18000,14000)
\put(17500,6800){\tiny\color{red} 1}
\put(17500,2800){\tiny\color{red} 3}
\put(17500,-1200){\tiny\color{red} 3}
\put(17500,-6000){\tiny 2} 
\put(17500,14200){\tiny 1} 

\path(20000,-6000)(20000,14000)
\put(19500,6800){\tiny\color{red} 2}
\put(19500,2800){\tiny\color{red} 2}
\put(19500,-1200){\tiny\color{red} 2}
\put(19500,-6000){\tiny 2} 
\put(19500,14200){\tiny 1} 

\path(22000,-6000)(22000,14000)
\put(21500,6800){\tiny\color{red} 2}
\put(21500,2800){\tiny\color{red} 2}
\put(21500,-1200){\tiny\color{red} 2}
\put(21500,-6000){\tiny 2} 
\put(21500,14200){\tiny 1} 

\path(24000,-6000)(24000,14000)
\put(23500,6800){\tiny\color{red} 2}
\put(23500,2800){\tiny\color{red} 2}
\put(23500,-1200){\tiny\color{red} 2} 
\put(23500,-6000){\tiny 2} 
\put(23500,14200){\tiny 1}


\path(28000,-6000)(28000,14000)
\put(27500,6800){\tiny\color{red} 2}
\put(27500,2800){\tiny\color{red} 3}
\put(27500,-1200){\tiny\color{red} 3}
\put(27500,-6000){\tiny 3} 
\put(27500,14200){\tiny 2} 

\path(30000,-6000)(30000,14000)
\put(29500,6800){\tiny\color{red} 2}
\put(29500,2800){\tiny\color{red} 3}
\put(29500,-1200){\tiny\color{red} 3}
\put(29500,-6000){\tiny 3} 
\put(29500,14200){\tiny 2} 

\path(32000,-6000)(32000,14000)
\put(31500,2800){\tiny\color{red} 2}
\put(31500,-1200){\tiny\color{red} 3}
\put(31500,-6000){\tiny 3} 
\put(31500,14200){\tiny 2}

\path(34000,-6000)(34000,14000)
\put(33500,2800){\tiny\color{red} 2}
\put(33500,-1200){\tiny\color{red} 3}
\put(33500,-6000){\tiny 3} 
\put(33500,14200){\tiny 2}


\put(16300,-8000){\tiny $\mii$}
\put(15950,-6500){$\underbrace{\phantom{...}}$}

\put(21800,-8000){\tiny $\li$}
\put(20100,-6500){$\underbrace{\phantom{............}}$}

\put(28300,-8000){\tiny $\lii$}
\put(27950,-6500){$\underbrace{\phantom{...}}$}

\put(32300,-8000){\tiny $\mi$}
\put(31950,-6500){$\underbrace{\phantom{...}}$}


\put(28000,-4000){\circle*{200}}
\put(28000,-2000){\circle*{200}}
\put(28000,000){\circle*{200}}
\put(28000,2000){\circle*{200}}
\put(28000,4000){\circle*{200}}
\put(28000,6000){\circle*{200}}
\put(28000,8000){\circle*{200}}
\put(28000,10000){\circle*{200}}
\put(28000,12000){\circle*{200}}

\put(30000,-4000){\circle*{200}}
\put(30000,-2000){\circle*{200}}
\put(30000,000){\circle*{200}}
\put(30000,2000){\circle*{200}}
\put(30000,4000){\circle*{200}}
\put(30000,6000){\circle*{200}}
\put(30000,8000){\circle*{200}}
\put(30000,10000){\circle*{200}}
\put(30000,12000){\circle*{200}}

\put(32000,-4000){\circle*{200}}
\put(32000,-2000){\circle*{200}}
\put(32000,000){\circle*{200}}
\put(32000,2000){\circle*{200}}
\put(32000,4000){\circle*{200}}
\put(32000,6000){\circle*{200}}
\put(32000,8000){\circle*{200}}
\put(32000,10000){\circle*{200}}
\put(32000,12000){\circle*{200}}

\put(34000,-4000){\circle*{200}}
\put(34000,-2000){\circle*{200}}
\put(34000,000){\circle*{200}}
\put(34000,2000){\circle*{200}}
\put(34000,4000){\circle*{200}}
\put(34000,6000){\circle*{200}}
\put(34000,8000){\circle*{200}}
\put(34000,10000){\circle*{200}}
\put(34000,12000){\circle*{200}}

\end{picture}

\end{minipage}
\end{center}

\caption{In the limit 
$\{w\} \rightarrow \{\li\}\cup\{\mii\},\ \{v\} \rightarrow \{\mi\}\cup\{\lii\}$, 
$\widetilde{Z}(\{\lambda\},\{\mu\} | \{w\},\{v\})$ factorizes into a product of weight $f^{-1}$ vertices, weight 1 vertices and domain wall partition functions. Furthermore, in this normalization, a domain wall partition function with matching horizontal and vertical variables has weight 1. Regions of vertices with a common weight are indicated on the diagram.}
\label{resh-frozen}

\end{figure}
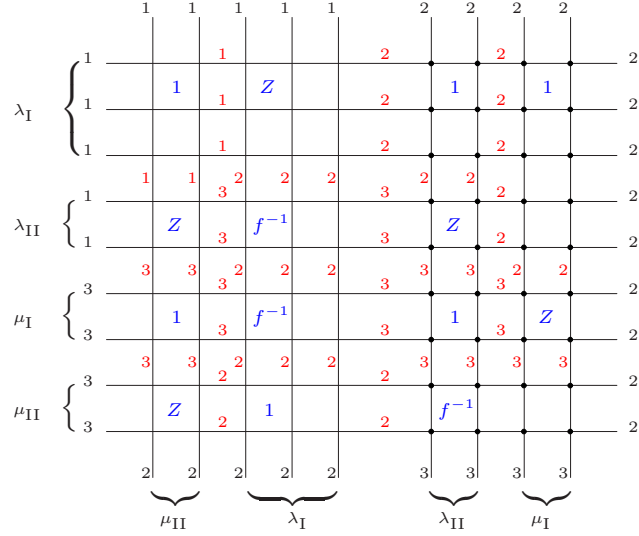
From Figure {\bf\ref{resh-frozen}} we conclude that
\begin{align}
\widetilde{Z}(\{\lambda\},\{\mu\} | \{w\}, \{v\})
\Big|_{\substack{\{w\} \rightarrow \{\li\} \cup \{\mii\} \\ \{v\} \rightarrow \{\mi\} \cup \{\lii\}}}
=
\frac{Z(\lii|\mii)/f(\lii,\mii)}{f(\lii,\li) f(\mi,\li) f(\lii,\mii) f(\mi,\mii)} .
\label{100}
\end{align}
Equating the right hand sides of (\ref{99}) and (\ref{100}), and cancelling common factors, the final result is
\begin{align}
K(\{\li\},\{\lii\}|\{\mi\},\{\mii\})
=
\prod_{\mi,\mii}
f(\mi,\mii)
\prod_{\li,\lii}
f(\lii,\li)
\prod_{\mi,\li}
f(\mi,\li)
Z(\{\lii\}|\{\mii\}) .
\label{K}
\end{align}
Substituting this value into (\ref{97}), we obtain (\ref{resh-pf2}). 

\end{proof}

\subsection{Limiting cases of $Z(\{\lambda\}_{\ell},\{\mu\}_m |\{w\}_{\ell},\{v\}_m)$}
\label{pf-limits}

In this subsection we list results about the function $Z(\{\lambda\}_{\ell},\{\mu\}_m |\{w\}_{\ell},\{v\}_m)$ when one of its sets of variables becomes infinite. These results are needed when we study the scalar product (\ref{sl3-g}) in the same limit.
\begin{myTheorem}
We claim the following limits,
\begin{align}
\label{zlim1}
Z(\{\lambda\}_{\ell},\{\infty\}_m|\{w\}_{\ell},\{v\}_m)
&\equiv
\frac{1}{m!}
\lim_{\mu_m,\dots,\mu_1 \rightarrow \infty}
\Big(
\mu_m \dots \mu_1
Z(\{\lambda\}_{\ell},\{\mu\}_m|\{w\}_{\ell},\{v\}_m)
\Big)
\\
&=
(-)^m
Z(\{\lambda\} | \{w\})
\nonumber
\\
\label{zlim3}
Z(\{\infty\}_{\ell},\{\mu\}_m|\{w\}_{\ell},\{v\}_m)
&\equiv
\frac{1}{\ell!}
\lim_{\lambda_{\ell},\dots,\lambda_1 \rightarrow \infty}
\Big(
\lambda_{\ell} \dots \lambda_1
Z(\{\lambda\}_{\ell},\{\mu\}_m|\{w\}_{\ell},\{v\}_m)
\Big)
\\
&=
Z(\{v\}|\{\mu\})
\nonumber
\\
Z(\{\lambda\}_{\ell},\{\mu\}_m|\{w\}_{\ell},\{\infty\}_m)
&\equiv
\frac{1}{m!}
\lim_{v_m,\dots,v_1 \rightarrow \infty}
\Big(
v_m \dots v_1
Z(\{\lambda\}_{\ell},\{\mu\}_m|\{w\}_{\ell},\{v\}_m)
\Big)
\label{zlim2}
\\
&= 
f(\{\mu\},\{w\})
Z(\{\lambda\} | \{w\})
\nonumber
\\
\label{zlim4}
Z(\{\lambda\}_{\ell},\{\mu\}_m|\{\infty\}_{\ell},\{v\}_m)
&\equiv
\frac{1}{\ell!}
\lim_{w_{\ell},\dots,w_1 \rightarrow \infty}
\Big(
w_{\ell} \dots w_1
Z(\{\lambda\}_{\ell},\{\mu\}_m|\{w\}_{\ell},\{v\}_m)
\Big)
\\
&=
(-)^{\ell}
f(\{v\},\{\lambda\})
Z(\{v\}|\{\mu\}) .
\nonumber
\end{align}
\end{myTheorem}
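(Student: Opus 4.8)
The plan is to use the explicit evaluation (\ref{resh-pf2}) as the common starting point for all four limits, and to separate the two limits in which an \emph{interior} (row) set of rapidities is sent to infinity, namely $\{\mu\}\to\infty$ in (\ref{zlim1}) and $\{\lambda\}\to\infty$ in (\ref{zlim3}), from the two in which an \emph{exterior} (column) set is sent to infinity, namely $\{v\}\to\infty$ in (\ref{zlim2}) and $\{w\}\to\infty$ in (\ref{zlim4}). These behave quite differently: an interior rapidity is distributed across the internal sum over partitions, whereas an exterior rapidity belongs to one fixed domain wall partition function in every summand.

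First I would dispose of the interior limits, which are immediate from a power count in (\ref{resh-pf2}). Take (\ref{zlim1}) and follow a single rapidity $\mu_j$. If $\mu_j\in\{\mi\}$ it occurs only inside products of weights $f$, each tending to $1$, so it contributes an $O(1)$ factor; if $\mu_j\in\{\mii\}$ it occurs as a vertical variable of $Z(\{\lii\}|\{\mii\})$ \emph{and} as a horizontal variable of $Z(\{\li\}\cup\{\mii\}|\{w\})$, and by property {\bf 3} of Subsection {\bf\ref{ssec:dom}} each of these decays like $\mu_j^{-1}$. A summand with $\{\mii\}$ (equivalently $\{\lii\}$) nonempty therefore decays at least as $\mu_j^{-2}$, and is annihilated by the single power $\mu_j$ in (\ref{zlim1}). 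Only the trivial partition $\{\mii\}=\{\lii\}=\emptyset$ survives; there the $f$-products tend to $1$, $Z(\{\lii\}|\{\mii\})=1$, $Z(\{\li\}\cup\{\mii\}|\{w\})=Z(\{\lambda\}|\{w\})$, and $\prod_j\mu_j\,Z(\{v\}|\{\mu\})\to(-)^m m!$ by (\ref{all-limits}); dividing by $m!$ yields $(-)^m Z(\{\lambda\}|\{w\})$. The limit (\ref{zlim3}) is the same argument with $\{\lambda\}$ in place of $\{\mu\}$: the power count again forces $\{\lii\}=\{\mii\}=\emptyset$, the surviving square is now $Z(\{\lambda\}|\{w\})$, for which $\prod_i\lambda_i\,Z(\{\lambda\}|\{w\})\to\ell!$ (a \emph{horizontal} limit, hence no sign), and the factor $Z(\{v\}|\{\mu\})$ is returned unchanged.

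For the exterior limits the collapse does not occur. In (\ref{zlim2}) the factor $Z(\{v\}|\{\mi\}\cup\{\lii\})$ is, in \emph{every} summand, a full $m\times m$ square domain wall partition function (since $|\mi|+|\lii|=m$), and likewise $Z(\{\li\}\cup\{\mii\}|\{w\})$ is a full $\ell\times\ell$ square in (\ref{zlim4}). Hence (\ref{all-limits}) applies uniformly, extracting the \emph{partition-independent} constant $m!$ in (\ref{zlim2}) and $(-)^\ell\ell!$ in (\ref{zlim4}). After dividing by the prefactors and cancelling, (\ref{zlim2}) and (\ref{zlim4}) reduce to the two residual summation identities obtained from (\ref{resh-pf2}) by deleting the limiting factor, namely
\begin{multline*}
\sum_{\substack{\{\lambda\}=\{\li\}\cup\{\lii\} \\ \{\mu\}=\{\mi\}\cup\{\mii\}}}
\prod_{\mi,\mii} f(\mi,\mii)
\prod_{\li,\lii} f(\lii,\li)
\prod_{\mi,\li} f(\mi,\li)\,
Z(\{\lii\}|\{\mii\})\,
Z(\{\li\}\cup\{\mii\}|\{w\})
\\
= f(\{\mu\},\{w\})\, Z(\{\lambda\}|\{w\}),
\end{multline*}
together with its counterpart in which $Z(\{\li\}\cup\{\mii\}|\{w\})$ is replaced by $Z(\{v\}|\{\mi\}\cup\{\lii\})$ and the right-hand side by $f(\{v\},\{\lambda\})\,Z(\{v\}|\{\mu\})$, both sums being restricted by $|\lii|=|\mii|$.

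These two residual identities are the crux; the interior limits above are mere bookkeeping. I would establish the displayed identity by the uniqueness argument of Subsection {\bf\ref{ssec:dom}}, viewing each side as a function of $\{w\}$ with $\{\lambda\},\{\mu\}$ as parameters. Both sides are symmetric in $\{w\}$, vanish as any $w_k\to\infty$ (on the left because every $w_k$ sits inside a domain wall partition function, on the right through $Z(\{\lambda\}|\{w\})$), and have at most simple poles at $w_k=\lambda_i$ and $w_k=\mu_j$; such a function is fixed by its residues. The pole at $w_k=\lambda_i$ receives contributions only from summands with $\lambda_i\in\{\li\}$, and the pole at $w_k=\mu_j$ only from summands with $\mu_j\in\{\mii\}$; applying the residue formula (\ref{res-prop}) to $Z(\{\li\}\cup\{\mii\}|\{w\})$ in each case collapses the sum to a copy of the same identity with one fewer pair of variables, which matches the residue of the right-hand side (computed from (\ref{res-prop}) and from the pole of $f(\{\mu\},\{w\})$). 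An induction on $\ell$ with the trivial base case $\ell=0$ then closes the argument, and the second identity follows identically by inducting on $m$ in the variables $\{v\}$. The main obstacle is precisely the careful reassembly of these residues into the reduced identity; I expect everything else to be routine. Equivalently, both identities can be read directly off Figure {\bf\ref{resh-pf}}: as $\{v\}\to\infty$ the dotted vertices degenerate to the identity, each $v$-column freezes to a single state-changing vertex of weight $\sim v_k^{-1}$, and summing the resulting frozen configurations reproduces exactly the left-hand side above.
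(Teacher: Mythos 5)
Your handling of the two ``interior'' limits (\ref{zlim1}) and (\ref{zlim3}) is correct and is essentially the paper's own argument: in (\ref{resh-pf2}) a rapidity placed in $\{\mii\}$ (resp.\ $\{\lii\}$) sits inside \emph{two} domain wall partition functions and is therefore suppressed by one more inverse power than the single compensating factor supplies, so only the trivial partition survives, after which (\ref{all-limits}) finishes the computation. (One small slip: a $\mu_j\in\{\mi\}$ does not occur ``only inside products of weights $f$'' --- it is also a vertical variable of $Z(\{v\}|\{\mi\}\cup\{\lii\})$, which is precisely why the explicit factor $\mu_j$ is needed to produce a finite answer; your final bookkeeping uses this correctly, so nothing is lost.) Your reduction of the exterior limits (\ref{zlim2}) and (\ref{zlim4}) to the residual summation identity is also exactly the paper's reduction: that identity is Lemma~{\bf 1}.

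The gap is in your proposed proof of that residual identity by residue matching in the $\{w\}$ variables. The poles at $w_k=\lambda_i$ behave as you claim: on both sides (\ref{res-prop}) reproduces the same identity with $(\ell-1,m)$ variables and the induction closes. But at the poles $w_k=\mu_j$ the reduction you assert does not occur. On the left-hand side the pole comes from the prefactor $f(\mu_j,w_k)$, whose residue is $-1$, so the residue of the left-hand side is $-\prod_{j'\neq j}f(\mu_{j'},\mu_j)\,f(\{\mu\},\{\widehat{w}_k\})\,Z(\{\lambda\}\,|\,\{\widehat{w}_k\}\cup\{\mu_j\})$ --- a full $\ell\times\ell$ domain wall partition function with one vertical argument specialized to $\mu_j$, which is \emph{not} an instance of the identity with one fewer pair of variables. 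On the right-hand side the contributing summands are those with $\mu_j\in\{\mii\}$; after applying (\ref{res-prop}) to $Z(\{\li\}\cup\{\mii\}|\{w\})$, the companion factor $Z(\{\lii\}|\{\mii\})$ still carries $\mu_j$ as one of its arguments and the constraint $|\lii|=|\mii|$ becomes a shifted constraint on the partition of the remaining $\mu$'s, so the resulting sum is not of the original form either. Matching these two residues is a new identity, not the induction hypothesis, and it sits exactly where you write that you ``expect everything else to be routine.'' The paper avoids this difficulty altogether: it proves Lemma~{\bf 1} by representing $f(\{\mu\},\{w\})\,Z(\{\lambda\}|\{w\})$ as a lattice (Figure~{\bf\ref{fZ}}), reordering the bottom $m$ rows with the commutation relation (\ref{2db}) to produce a sum with unknown coefficients, and then determining those coefficients to be (\ref{K}) by specializing $\{w\}$ so that a single term survives --- the same device used in Subsection~{\bf\ref{pf-resh}}. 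Your closing one-sentence alternative (freezing the $v$-columns of Figure~{\bf\ref{resh-pf}} directly as $\{v\}\to\infty$) is much closer to a workable argument in the paper's spirit, but as written it is only an assertion.
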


\begin{proof}
Starting from the exact expression (\ref{resh-pf2}), the limit (\ref{zlim1}) can be taken without difficulty. In this case the sum over partitions of $\{\mu\}$ trivializes, because when $\{\mii\}$ is non-empty each term in the sum contains the product $Z(\{\lii\}|\{\mii\}) Z(\{\li\}\cup\{\mii\}|\{w\})$, and this vanishes in the proposed limit. Hence only one term in (\ref{resh-pf2}) will survive, corresponding to $\{\li\} = \{\lambda\},\{\mi\} = \{\mu\}$, and we see that
\begin{align}
Z(\{\lambda\}_{\ell},\{\infty\}_m|\{w\}_{\ell},\{v\}_m)
&=
\frac{1}{m!}
\lim_{\mu_m,\dots,\mu_1 \rightarrow \infty}
\Big(
\mu_m \dots \mu_1
f(\{\mu\},\{\lambda\})
Z(\{\lambda\}|\{w\})
Z(\{v\}|\{\mu\})
\Big)
\\
&=
(-)^m
Z(\{\lambda\}|\{w\}) .
\nonumber
\end{align}
A similar argument applies to proving (\ref{zlim3}).

The limits (\ref{zlim2}) and (\ref{zlim4}) are slightly more complicated. We consider only (\ref{zlim2}), as this indicates the way to prove (\ref{zlim4}). Starting from (\ref{resh-pf2}) and using (\ref{all-limits}), we straight away find that   
\begin{multline}
Z(\{\lambda\},\{\mu\}|\{w\},\{\infty\})
=
\sum_{\substack{
\{\lambda\}=\{\li\}\cup\{\lii\}
\\
\{\mu\}=\{\mi\}\cup\{\mii\}
}}
\prod_{\mi,\mii}
f(\mi,\mii)
\prod_{\li,\lii}
f(\lii,\li)
\prod_{\mi,\li}
f(\mi,\li)
\\
\times
Z(\{\lii\}|\{\mii\})
Z(\{\li\}\cup\{\mii\}|\{w\}) .
\label{zlim2'}
\end{multline}
Now it becomes a matter of showing that the right hand sides of (\ref{zlim2}) and (\ref{zlim2'}) are equivalent. We do that in the following lemma.
\begin{myLemma}
The domain wall partition function satisfies the identity
\begin{multline}
f(\{\mu\},\{w\})
Z(\{\lambda\} | \{w\})
=
\sum_{\substack{
\{\lambda\}=\{\li\}\cup\{\lii\}
\\
\{\mu\}=\{\mi\}\cup\{\mii\}
}}
\prod_{\mi,\mii}
f(\mi,\mii)
\prod_{\li,\lii}
f(\lii,\li)
\prod_{\mi,\li}
f(\mi,\li)
\\
\times
Z(\{\lii\}|\{\mii\})
Z(\{\li\}\cup\{\mii\}|\{w\}) .
\end{multline}
We remark that some similar formulae (but with summation over different partitionings) appear in \cite{bprs2}.
\end{myLemma}

\begin{proof} 
The arguments required are analogous to those in the proof of Theorem {\bf 1.} We represent $f(\{\mu\},\{w\}) Z(\{\lambda\}|\{w\})$ as the partition function of the lattice on the left in Figure {\bf\ref{fZ}}.
\begin{figure}[H]

\begin{center}
\begin{minipage}{4.3in}

\setlength{\unitlength}{0.00025cm}
\begin{picture}(40000,19500)(10000,1500)


\path(14000,20000)(26000,20000) \put(12000,20000){\tiny $\lambda_1$} 
\put(14000,20500){\tiny 1}  
\put(25500,20500){\tiny 2}

\path(14000,18000)(26000,18000)
\put(14000,18500){\tiny 1}  
\put(25500,18500){\tiny 2}

\path(14000,16000)(26000,16000)
\put(14000,16500){\tiny 1}  
\put(25500,16500){\tiny 2}

\path(14000,14000)(26000,14000)
\put(14000,14500){\tiny 1}  
\put(25500,14500){\tiny 2}

\path(14000,12000)(26000,12000) \put(12000,12000){\tiny $\lambda_{\ell}$}
\put(14000,12500){\tiny 1} 
\put(25500,12500){\tiny 2}


\path(14000,10000)(26000,10000) \put(12000,10000){\tiny $\mu_1$} 
\put(14000,10500){\tiny 2}  
\put(25500,10500){\tiny 2}

\path(14000,8000)(26000,8000)
\put(14000,8500){\tiny 2}  
\put(25500,8500){\tiny 2}

\path(14000,6000)(26000,6000)
\put(14000,6500){\tiny 2}  
\put(25500,6500){\tiny 2}

\path(14000,4000)(26000,4000) \put(12000,4000){\tiny $\mu_m$}
\put(14000,4500){\tiny 2}  
\put(25500,4500){\tiny 2}


\path(16000,2000)(16000,22000) \put(15500,000){\tiny $w_1$}
\put(16100,1000){\tiny 2}
\put(16100,10800){\tiny\color{red} 2} 
\put(16100,22500){\tiny 1}

\path(18000,2000)(18000,22000)
\put(18100,1000){\tiny 2} 
\put(18100,10800){\tiny\color{red} 2} 
\put(18100,22500){\tiny 1}

\path(20000,2000)(20000,22000)
\put(20100,1000){\tiny 2} 
\put(20100,10800){\tiny\color{red} 2} 
\put(20100,22500){\tiny 1}

\path(22000,2000)(22000,22000)
\put(22100,1000){\tiny 2} 
\put(22100,10800){\tiny\color{red} 2} 
\put(22100,22500){\tiny 1}

\path(24000,2000)(24000,22000) \put(23500,000){\tiny $w_{\ell}$}
\put(24100,1000){\tiny 2} 
\put(24100,10800){\tiny\color{red} 2} 
\put(24100,22500){\tiny 1}



\path(39000,20000)(51000,20000)  
\put(39000,20500){\tiny 2}  
\put(50500,20500){\tiny 2}

\path(39000,18000)(51000,18000)
\put(39000,18500){\tiny 2}  
\put(50500,18500){\tiny 2}

\path(39000,16000)(51000,16000)
\put(39000,16500){\tiny 2}  
\put(50500,16500){\tiny 2}

\path(39000,14000)(51000,14000)
\put(39000,14500){\tiny 2}  
\put(50500,14500){\tiny 2}

\path(39000,12000)(51000,12000) 
\put(39000,12500){\tiny 1} 
\put(50500,12500){\tiny 2}


\path(39000,10000)(51000,10000)
\put(39000,10500){\tiny 1}  
\put(50500,10500){\tiny 2}

\path(39000,8000)(51000,8000)
\put(39000,8500){\tiny 1}  
\put(50500,8500){\tiny 2}

\path(39000,6000)(51000,6000)
\put(39000,6500){\tiny 1}  
\put(50500,6500){\tiny 2}

\path(39000,4000)(51000,4000) 
\put(39000,4500){\tiny 1}  
\put(50500,4500){\tiny 2}


\path(41000,2000)(41000,22000) \put(40500,000){\tiny $w_1$}
\put(41100,1000){\tiny 2} 
\put(41100,12800){\tiny\color{red} 1} 
\put(41100,22500){\tiny 1}

\path(43000,2000)(43000,22000)
\put(43100,1000){\tiny 2} 
\put(43100,12800){\tiny\color{red} 1} 
\put(43100,22500){\tiny 1}

\path(45000,2000)(45000,22000)
\put(45100,1000){\tiny 2} 
\put(45100,12800){\tiny\color{red} 1} 
\put(45100,22500){\tiny 1}

\path(47000,2000)(47000,22000)
\put(47100,1000){\tiny 2} 
\put(47100,12800){\tiny\color{red} 1} 
\put(47100,22500){\tiny 1}

\path(49000,2000)(49000,22000) \put(48500,000){\tiny $w_{\ell}$}
\put(49100,1000){\tiny 2}
\put(49100,12800){\tiny\color{red} 1} 
\put(49100,22500){\tiny 1}


\put(35000,18750){\tiny $\lii$}
\put(37000,18750){$\left\{\phantom{\frac{\prod}{\prod}}\right.$}

\put(35000,14750){\tiny $\mi$}
\put(37000,14750){$\left\{\phantom{\frac{\prod}{\prod}}\right.$}

\put(35000,10000){\tiny $\li$}
\put(37000,9500){$\left\{\phantom{\displaystyle{\prod_{i}^{j}}}\right.$}

\put(35000,4750){\tiny $\mii$}
\put(37000,4750){$\left\{\phantom{\frac{\prod}{\prod}}\right.$}
\end{picture}

\end{minipage}
\end{center}

\caption{On the left, lattice representation of $f (\{\mu\},\{w\}) Z(\{\lambda\} | \{w\})$. The bottom $m$ horizontal lines factorize trivially into the product of weights $\prod_{i=1}^{m} \prod_{j=1}^{\ell} f(\mu_i,w_j)$ and the remaining part of the lattice constitutes the domain wall partition function, as shown in Figure {\bf\ref{dwpf}}. On the right, lattice representation of $Z(\{\li\}\cup\{\mii\}|\{w\})$. The top $m$ lines contribute trivially, since all vertices in this part of the lattice have weight 1.}

\label{fZ}

\end{figure}
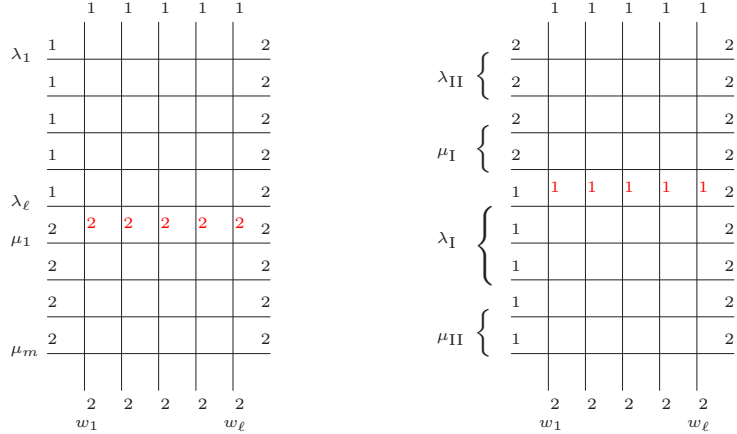
Considering the diagram on the left of Figure {\bf\ref{fZ}}, the $m$ lowest lattice lines can be repositioned to the top using the commutation relation (\ref{2db}) repeatedly. This produces a sum over partitions of $\{\lambda\},\{\mu\}$ into disjoint subsets, with coefficients $K(\{\li\},\{\lii\}|\{\mi\},\{\mii\})$, multiplying the partition function shown on the right of Figure {\bf\ref{fZ}}.
In other words, we conclude that
\begin{align}
f(\{\mu\},\{w\}) Z(\{\lambda\}|\{w\})
=
\sum_{\substack{
\{\lambda\}=\{\li\}\cup\{\lii\}
\\
\{\mu\}=\{\mi\}\cup\{\mii\}
}}
K(\{\li\},\{\lii\}|\{\mi\},\{\mii\})
Z(\{\li\}\cup\{\mii\}|\{w\}) .
\label{zlim2''}
\end{align}
Using the freedom of choice of the $\{w\}$ variables to isolate a single term in the sum (\ref{zlim2''}), it is possible to show that $K(\{\li\},\{\lii\}|\{\mi\},\{\mii\})$ is as given by (\ref{K}). This is done using a completely analogous method to that explained at the end of Subsection {\bf\ref{pf-resh}}.
\end{proof}

With Lemma {\bf 1} we have shown that the right hand sides of (\ref{zlim2}) and (\ref{zlim2'}) are equal, proving (\ref{zlim2}). 

\end{proof}

\subsection{Imposing Bethe equations on $\{\lb\}$ and $\{\mb\}$}

Similarly to the previous section we now restrict our attention to the case when two sets of variables in the scalar product, $\{\lb\}$ and $\{\mb\}$, satisfy the Bethe equations (\ref{1bethe-sl3}) and (\ref{2bethe-sl3}). That is, we assume that
\begin{align}
r_1(\lb_i)
&=
\frac{a_1(\lb_i)}{a_2(\lb_i)}
=
-
\prod_{j=1}^{\ell}
\left(
\frac{\lb_i-\lb_j + 1}{\lb_i - \lb_j - 1}
\right)
\prod_{k=1}^{m}
f(\mb_k,\lb_i),
\quad\quad
\forall\ 1 \leq i \leq \ell.
\label{n-bethe1}
\\
r_2(\mb_i)
&=
\frac{a_2(\mb_i)}{a_3(\mb_i)}
=
-
\prod_{j=1}^{m}
\left(
\frac{\mb_i-\mb_j + 1}{\mb_i - \mb_j - 1}
\right)
\prod_{k=1}^{\ell}
\frac{1}{f(\mb_i,\lb_k)},
\quad\quad
\forall\ 1 \leq i \leq m.
\label{n-bethe2}
\end{align}

\subsection{Summation formula with allowance for Bethe equations}

Substituting (\ref{n-bethe1}) and (\ref{n-bethe2}) into (\ref{sl3-g}), we obtain the expression
\begin{multline}
\label{resh-sum}
\langle\langle \{\mc\}, \{\lc\} | \{\lb\}, \{\mb\} \rangle\rangle
=
\sum
(-)^{|\lbi|+|\mbii|}
\prod_{\lcii} r_1(\lcii)
\prod_{\mci} r_2(\mci)
\\
\times
\prod_{\lbi}\left( 
\prod_{j=1}^{\ell} \left( \frac{\lbi-\lb_j+1}{\lbi-\lb_j-1} \right)
\prod_{k=1}^{m} f(\mb_k,\lbi) 
\right)
\prod_{\mbii}\left( 
\prod_{j=1}^{m} \left( \frac{\mbii-\mb_j+1}{\mbii-\mb_j-1} \right)
\prod_{k=1}^{\ell} \frac{1}{f(\mbii,\lb_k)}
\right)
\\
\times
f(\lci,\lcii) f(\lbii,\lbi) 
f(\mcii,\mci) f(\mbi,\mbii)
f(\mbii,\lbii) f(\mci,\lci)
\\
\times
Z(\{\lbii\},\{\mci\} | \{\lcii\},\{\mbi\})
Z(\{\lci\},\{\mbii\} | \{\lbi\},\{\mcii\}) .
\end{multline}
Drawing upon what we learn from the $SU(2)$ scalar product, it is natural to expect that having allowed for the Bethe equations, (\ref{resh-sum}) can be summed to a more compact expression. For the moment, we do not know how to do that. Part of the difficulty arises with the terms $Z(\{\lbii\},\{\mci\} | \{\lcii\},\{\mbi\})$ and $Z(\{\lci\},\{\mbii\} | \{\lbi\},\{\mcii\})$ which have great combinatorial complexity, whereas their $SU(2)$ analogues (domain wall partition functions) are determinants. For this reason we move on to consider limiting cases of (\ref{resh-sum}), in the hope it will illuminate its structure without taking any limit. 

\subsection{$\{\mb\} \rightarrow \infty$ limit of $SU(3)$ scalar product}
\label{mb-infty}

Starting from (\ref{resh-sum}), we consider the limit
\begin{align}
\langle\langle \{\mc\},\{\lc\}|\{\lb\},\{\infty\} \rangle\rangle
\equiv
\frac{1}{m!}
\lim_{\mb_{m}, \dots, \mb_1 \rightarrow \infty}
\Big(
\mb_{m} \dots \mb_1 
\langle\langle 
\{\mc\} , \{\lambda^C\} | \{\lambda^B\}, \{\mb\} 
\rangle\rangle
\Big) .
\end{align}
Using (\ref{zlim1}) and (\ref{zlim2}) to take limits of $Z(\{\lbii\},\{\mci\} | \{\lcii\},\{\mbi\})$ and $Z(\{\lci\},\{\mbii\} | \{\lbi\},\{\mcii\})$, we find that
\begin{multline}
\label{sl3-b2}
\langle\langle \{\mc\},\{\lc\}|\{\lb\},\{\infty\} \rangle\rangle
=
\frac{1}{m!}
\sum_{\substack{
\{\lc\}=\{\lci\}\cup\{\lcii\}
\\
\{\lb\}=\{\lbi\}\cup\{\lbii\}
}}
\left(
\sum_{k=0}^{m}
\sum_{
|\mci|=m-k,|\mcii|=k 
}
(-)^{|\lbi|}
\binom{m}{k}
\right.
\\
\times
\prod_{\lcii}
r_1(\lcii)
\prod_{\mci}
r_2(\mci)
\prod_{\lbi} \prod_{j=1}^{\ell} \left(\frac{\lbi-\lb_j+1}{\lbi-\lb_j-1}\right)
f(\lci,\lcii) f(\lbii,\lbi) 
f(\mcii,\mci)
f(\mci,\lci)
\\
\left.
\phantom{\prod_{\lbi} \prod_{j=1}^{\ell}}
\times
(m-k)!
f(\mci,\lcii)
Z(\{\lbii\}|\{\lcii\})
(-)^k
k!
Z(\{\lci\}|\{\lbi\})
\right).
\end{multline}
Now we make the trivial observation $f(\mci,\lci) f(\mci,\lcii) = f(\mci,\lc)$, for any partitioning of $\{\lc\}$ into disjoint subsets. Using this in (\ref{sl3-b2}) and cancelling combinatoric factors, we obtain the factorization 
\begin{multline}
\label{sl3-b3}
\langle\langle \{\mc\} , \{\lambda^C\} | \{\lambda^B\}, \{\infty\} \rangle\rangle
=
\left(
\sum
(-)^{|\mcii|}
\prod_{\mci}
\left(
r_2(\mci)
\prod_{k=1}^{\ell}
f(\mci,\lc_k)
\right)
f(\mcii,\mci)
\right)
\\
\times
\left(
\sum
(-)^{|\lbi|}
\prod_{\lbi} \prod_{j=1}^{\ell} \left(\frac{\lbi-\lb_j+1}{\lbi-\lb_j-1}\right)
\prod_{\lcii}
r_1(\lcii)
f(\lci,\lcii) f(\lbii,\lbi)
Z(\{\lbii\}|\{\lcii\})
Z(\{\lci\}|\{\lbi\}) 
\right)
\end{multline}
where the first sum ranges over partitions of $\{\mc\}$ into $\{\mci\}\cup\{\mcii\}$, and the second sum ranges over partitions of $\{\lc\},\{\lb\}$ which obey (\ref{disj1}). But using the equality of (\ref{slav-sum}) and (\ref{betscal}), as well as (\ref{sl2-b2}) and (\ref{slav-lim}), we know how to compute both sums in (\ref{sl3-b3}). Hence we obtain the product of determinants
\begin{multline}
\langle\langle \{\mc\} , \{\lambda^C\} | \{\lambda^B\}, \{\infty\} \rangle\rangle
=
\frac{
\det\left(
\displaystyle{
(\mu^C_i)^{j-1}
r_2(\mu^C_i)
\prod_{k=1}^{\ell}
\left(
\frac{\displaystyle{\mu^C_i-\lambda^C_k+1}}{\displaystyle{\mu^C_i-\lambda^C_k}}
\right)
-
(\mu^C_i+1)^{j-1}
}
\right)_{1 \leq i, j \leq m}
}
{\displaystyle{
\prod_{1\leq i<j \leq m}
(\mc_j-\mc_i)
}}
\label{factorized1}
\\
\times
\frac{
\det \left(
\frac{\displaystyle{1}}{\displaystyle{\lambda^B_j-\lambda^C_i}}
\left(\displaystyle{
\prod_{k\not= j}^{\ell}(\lb_k-\lc_i+1)
r_1(\lc_i)
-
\prod_{k\not=j}^{\ell} (\lb_k-\lc_i-1)
}
\right)
\right)_{1\leq i,j \leq \ell}
}
{\displaystyle{
\prod_{1\leq i < j \leq \ell}
(\lc_j-\lc_i)(\lb_i-\lb_j)
}}
\end{multline}

\subsection{$\{\lb\} \rightarrow \infty$ limit of $SU(3)$ scalar product}
\label{lb-infty}

We basically repeat the process of the last subsection, and take the limit
\begin{align}
\langle\langle
\{\mc\},\{\lc\}|
\{\infty\},\{\mb\}
\rangle\rangle
\equiv
\frac{1}{\ell!}
\lim_{\lb_{\ell},\dots,\lb_1 \rightarrow \infty}
\Big(
\lb_{\ell}\dots \lb_1
\langle\langle
\{\mc\},\{\lc\}|
\{\lb\},\{\mb\}
\rangle\rangle
\Big) .
\end{align}
Again we start from (\ref{resh-sum}) and now use (\ref{zlim3}) and (\ref{zlim4}) to take limits of $Z(\{\lbii\},\{\mci\} | \{\lcii\},\{\mbi\})$ and $Z(\{\lci\},\{\mbii\} | \{\lbi\},\{\mcii\})$, which gives
\begin{multline}
\label{sl3-b4}
\langle\langle \{\mc\},\{\lc\}|\{\infty\},\{\mb\} \rangle\rangle
=
\frac{1}{\ell!}
\sum_{\substack{
\{\mc\}=\{\mci\}\cup\{\mcii\}
\\
\{\mb\}=\{\mbi\}\cup\{\mbii\}
}}
\left(
\sum_{k=0}^{\ell}
\sum_{
|\lci|=\ell-k,|\lcii|=k 
}
(-)^{|\mbii|}
\binom{\ell}{k}
\right.
\\
\times
\prod_{\lcii}
r_1(\lcii)
\prod_{\mci}
r_2(\mci)
\prod_{\mbii} \prod_{j=1}^{m} \left(\frac{\mbii-\mb_j+1}{\mbii-\mb_j-1}\right)
f(\lci,\lcii)  
f(\mcii,\mci)
f(\mbi,\mbii)
f(\mci,\lci)
\\
\left.
\phantom{\prod_{\mbii} \prod_{j=1}^{m}}
\times
k!
Z(\{\mbi\}|\{\mci\})
(-)^{\ell-k}
(\ell-k)!
f(\mcii,\lci)
Z(\{\mcii\}|\{\mbii\})
\right) .
\end{multline}
Observe that $f(\mci,\lci) f(\mcii,\lci) = f(\mc,\lci)$ for any partitioning of $\{\mc\}$ into disjoint subsets. Using this fact and cancelling combinatoric factors in (\ref{sl3-b4}), we obtain the factorization 
\begin{multline}
\label{sl3-b5}
\langle\langle \{\mc\} , \{\lambda^C\} | \{\infty\}, \{\mb\} \rangle\rangle
=
f(\mc,\lc)
\left(
\sum
(-)^{|\lci|}
\prod_{\lcii}
\left(
r_1(\lcii)
\prod_{k=1}^{m}
\frac{1}{f(\mc_k,\lcii)}
\right)
f(\lci,\lcii)
\right)
\\
\times
\left(
\sum
(-)^{|\mbii|}
\prod_{\mbii} \prod_{j=1}^{m} \left(\frac{\mbii-\mb_j+1}{\mbii-\mb_j-1}\right)
\prod_{\mci}
r_2(\mci)
f(\mcii,\mci) f(\mbi,\mbii)
Z(\{\mbi\}|\{\mci\})
Z(\{\mcii\}|\{\mbii\}) 
\right)
\end{multline}
where the first sum ranges over partitions of $\{\lc\}$ into $\{\lci\}\cup\{\lcii\}$, and the second sum ranges over partitions of $\{\mc\},\{\mb\}$ which obey (\ref{disj2}). As before, we know how to compute both the sums in (\ref{sl3-b5}). The result is
\begin{multline}
\langle\langle \{\mc\} , \{\lambda^C\} | \{\infty\}, \{\mb\} \rangle\rangle
=
\frac{
\det\left(
\displaystyle{
(\lambda^C_i)^{j-1}
r_1(\lambda^C_i)
-
(\lambda^C_i+1)^{j-1}
\prod_{k=1}^{m}
\left(
\frac{\displaystyle{\mu^C_k-\lambda^C_i+1}}{\displaystyle{\mu^C_k-\lambda^C_i}}
\right)
}
\right)_{1 \leq i, j \leq \ell}
}
{\displaystyle{
\prod_{1\leq i<j \leq \ell}
(\lc_j-\lc_i)
}}
\label{factorized2}
\\
\times
\frac{
\det \left(
\frac{\displaystyle{1}}{\displaystyle{\mu^B_j-\mu^C_i}}
\left(\displaystyle{
\prod_{k\not= j}^{m}(\mb_k-\mc_i+1)
r_2(\mc_i)
-
\prod_{k\not=j}^{m} (\mb_k-\mc_i-1)
}
\right)
\right)_{1\leq i,j \leq m}
}
{\displaystyle{
\prod_{1\leq i < j \leq m}
(\mc_j-\mc_i)(\mb_i-\mb_j)
}}
\end{multline}

\subsection{Comments about consistency}

We end by remarking that equations (\ref{factorized1}) and (\ref{factorized2}) are valid in the regimes $\{\mb\} \rightarrow \infty$ and $\{\lb\} \rightarrow \infty$, respectively, and are each independent of the order that their variables tend to infinity.

However this is definitely {\it not} the case if we now send the surviving set of Bethe variables in (\ref{factorized1}) and (\ref{factorized2}) to infinity. In fact the quantity $\langle\langle \{\mc\},\{\lc\}|\{\infty\},\{\infty\}\rangle\rangle$ is ambiguous, since it is sensitive to the order in which its limits are taken. For example, sending $\{\lb\} \rightarrow \infty$ in (\ref{factorized1}) gives 
\begin{multline}
\frac{1}{\ell!}
\lim_{\lb_{\ell},\dots,\lb_1 \rightarrow \infty}
\Big(
\lb_{\ell} \dots \lb_1 
\langle\langle \{\mc\}, \{\lc\} | \{\lb\}, \{\infty\} \rangle\rangle
\Big)
=
\frac{\det\Big(
(\lambda^C_i)^{j-1}
r_1(\lambda^C_i)
-
(\lambda^C_i+1)^{j-1}
\Big)_{1 \leq i, j \leq \ell}
}
{\displaystyle{
\prod_{1 \leq i < j \leq \ell} (\lambda^C_j - \lambda^C_i)
}}
\\
\times
\frac{
\det\left(
\displaystyle{
(\mu^C_i)^{j-1}
r_2(\mu^C_i)
\prod_{k=1}^{\ell}
\left(
\frac{\displaystyle{\mu^C_i-\lambda^C_k+1}}{\displaystyle{\mu^C_i-\lambda^C_k}}
\right)
-
(\mu^C_i+1)^{j-1}
}
\right)_{1 \leq i, j \leq m}
}
{\displaystyle{
\prod_{1\leq i<j \leq m}
(\mc_j-\mc_i)
}}
\label{factorized3}
\end{multline}
while sending $\{\mb\} \rightarrow \infty$ in (\ref{factorized2}) gives
\begin{multline}
\frac{1}{m!}
\lim_{\mb_{m},\dots,\mb_1 \rightarrow \infty}
\Big(
\mb_{m} \dots \mb_1 
\langle\langle \{\mc\}, \{\lc\} | \{\infty\}, \{\mb\} \rangle\rangle
\Big)
=
\frac{\det\Big(
(\mu^C_i)^{j-1}
r_2(\mu^C_i)
-
(\mu^C_i+1)^{j-1}
\Big)_{1 \leq i, j \leq m}
}
{\displaystyle{
\prod_{1 \leq i < j \leq m} (\mu^C_j - \mu^C_i)
}}
\\
\times
\frac{
\det\left(
\displaystyle{
(\lambda^C_i)^{j-1}
r_1(\lambda^C_i)
-
(\lambda^C_i+1)^{j-1}
\prod_{k=1}^{m}
\left(
\frac{\displaystyle{\mu^C_k-\lambda^C_i+1}}{\displaystyle{\mu^C_k-\lambda^C_i}}
\right)
}
\right)_{1 \leq i, j \leq \ell}
}
{\displaystyle{
\prod_{1\leq i<j \leq \ell}
(\lc_j-\lc_i)
}}
\label{factorized4}
\end{multline}
The answers obtained are different, contradicting the na\"{\i}ve expectation that one should obtain consistent answers in this way. In reality there is no contradiction, because although 
$\langle \langle \{\mc\}, \{\lc\} | \{\lb\}, \{\mb\} \rangle \rangle$ is symmetric in $\{\lb\}$ and $\{\mb\}$ separately, it is not symmetric in both sets simultaneously (indeed, from (\ref{sl3-g}), we can see that the scalar product has simple poles at $\mb_i = \lb_j$). Hence there is no reason for the two limits $\{\lb\} \rightarrow \infty$ and $\{\mb\} \rightarrow \infty$ to commute. From another point of view,  
\begin{align}
\lim_{x\rightarrow \infty}
\frac{\lambda^B_{\ell} \dots \lambda^B_1 \mu^B_m \dots \mu^B_1}{\ell!\ m!}
\langle \langle \{\mc\}, \{\lc\} | \{\lb\}, \{\mb\} \rangle \rangle
=
\left\{
\begin{array}{lll}
(\ref{factorized3}), & \text{if\ } \lambda^B_i = x^i, & \mu^B_j = x^{\ell+j}
\\
\\
(\ref{factorized4}), & \text{if\ } \lambda^B_i = x^{m+i}, & \mu^B_j = x^{j}
\end{array}
\right.
\end{align}
which makes it explicit that (\ref{factorized3}) and (\ref{factorized4}) come from different limits of the scalar product (\ref{sl3-g}).    

\section{Discussion}
\label{s:discuss}

This work contains two new results. The first is equation (\ref{resh-pf2}), which evaluates the partition function $Z(\{\lambda\},\{\mu\}|\{w\},\{v\})$. Using this result, the sum formula (\ref{sl3-g}) becomes a completely explicit (but rather complicated) expression for the generic $SU(3)$ scalar product. As we commented, $Z(\{\lambda\},\{\mu\}|\{w\},\{v\})$ is the natural analogue of the domain wall partition function at the $SU(3)$ level. For this reason, it would be nice to obtain a more compact expression for this quantity. Unfortunately, tests of small examples of this object reveal that it does not factorize and cannot easily be expressed as a determinant. This casts serious doubt on the claim that (\ref{resh-sum}) can be summed as a single determinant. Indeed, following a remark in the conclusion of \cite{bprs1}, $Z(\{\lambda\},\{\mu\}|\{w\},\{v\})$ is actually a particular case of the scalar product (\ref{resh-sum}) (obtained by an appropriate specialization of the parameters in (\ref{resh-sum}), such that only one term in the sum survives). Hence if $Z(\{\lambda\},\{\mu\}|\{w\},\{v\})$ cannot be written as a determinant, neither can the sum (\ref{resh-sum}).

The second is equations (\ref{factorized1}) and (\ref{factorized2}), which evaluate the scalar product between a generic Bethe vector and a Bethe eigenvector in the limit where one set of Bethe variables becomes infinite. The expressions obtained are quite simple, since they are products of two objects which are familiar from $SU(2)$ theory. Furthermore, the appearance of a Slavnov determinant in (\ref{factorized1}) and (\ref{factorized2}) can be justified by the fact that these limits effectively send an $SU(3)$ Bethe eigenvector to an $SU(2)$ one, \textit{cf.} the Bethe equations (\ref{n-bethe1}) and (\ref{n-bethe2}), which collapse to a single set of $SU(2)$ Bethe equations when $\{\lambda^B\} \rightarrow \infty$ or $\{\mu^B\} \rightarrow \infty$. The appearance of the second determinant in (\ref{factorized1}) and (\ref{factorized2}) is less obvious, but since these are themselves just limits of Slavnov-type determinants (\textit{cf.} Subsection {\bf\ref{ssec:beh}}), we can argue that in these limits the $SU(3)$ scalar product disentangles into a product of two $SU(2)$ ones. 

From a mathematical point of view, the factorization formulae are interesting for the following reason: given that the scalar product is known in factorized form in the cases (\ref{factorized1}) and (\ref{factorized2}), and has a determinant form when both Bethe vectors are on-shell \cite{res,bprs2}, it is still plausible that the sum (\ref{resh-sum}) admits a more compact expression without taking limits of the variables. In particular, any new proposal for the form of this object (determinant or otherwise) must behave appropriately under the limits or specializations described above. Hence the results in this paper provide a new way to check the validity of an Ansatz for the scalar product, or rather, to dismiss those guesses which may not otherwise be obviously false.

From a physical point of view, it is natural to hope for a simplification of (\ref{resh-sum}), which would make the calculation of correlation functions in $SU(3)$-invariant models computationally tractable. While this has not yet been achieved, let us remark that the limiting cases (\ref{factorized1}) and (\ref{factorized2}) have already found application in the calculation of certain tree-level 3-point functions in the $SU(3)$ sector of $\mathcal{N}=4$ supersymmetric Yang--Mills theory \cite{fjks}.

\section*{Acknowledgments}

I would like to thank O~Foda for collaboration on \cite{fw} which greatly influenced this work, I~Kostov for sharing the formula (\ref{kost2}), and E~Ragoucy and P~Zinn-Justin for discussions on related topics. This work was supported by the Australian Research Council.

After the work in Subsections {\bf\ref{pf-resh}}--{\bf\ref{pf-limits}} was completed, O~Foda communicated to me that J~Caetano has obtained a factorization formula \cite{cae} for the $SU(3)$ scalar product in the context mentioned previously: {\bf 1.} An XXX spin chain based on fundamental representations of $\mathcal{Y}(sl_3)$, which is a special case of the generalized model presented in this paper, and 
{\bf 2.} In the limit where both sets of Bethe roots $\{\lb\},\{\mb\} \rightarrow \infty$ simultaneously. This communication, together with the results obtained in \cite{fw} and those in Subsection {\bf\ref{pf-limits}}, led to the study of the individual limits $\{\lb\} \rightarrow \infty$ and $\{\mb\} \rightarrow \infty$ and to equations (\ref{factorized1}) and (\ref{factorized2}), in Subsections {\bf\ref{mb-infty}} and {\bf\ref{lb-infty}}.

\appendix
\section{Derivation of the sum form for the $SU(2)$-invariant scalar product}
\label{app:1}

We give a detailed derivation of equation (\ref{sum-sl2}), paraphrasing chapter IX of \cite{kbi}. The first step is to extract from (\ref{int-sl2}) one more commutation relation, this time between the $C$ and $B$-operators; namely,
\begin{align}
C(\lambda) B(\mu) 
= 
B(\mu) C(\lambda) 
+ 
g(\lambda,\mu) A(\mu) D(\lambda) 
- 
g(\lambda,\mu) A(\lambda) D(\mu) .
\label{2cb}
\end{align}
By virtue of equations (\ref{2ab}), (\ref{2db}) and (\ref{2cb}), one can develop an algorithm for calculating the scalar product (\ref{sp-def}), which we describe below. 

{\bf 1.} Using (\ref{2cb}), we exchange the pair of operators at the center of the scalar product, say $C(\lambda^C_{\ell}) B(\lambda^B_{\ell})$. Three terms are obtained in the process, one of which is simply $B(\lambda^B_{\ell}) C(\lambda^C_{\ell})$. {\bf 2.} Repeat step {\bf 1} for the term containing $C(\lambda^C_{\ell})$, a further $\ell-1$ times. After the final iteration, $C(\lambda^C_{\ell})$ acts on $|0\rangle$, and annihilates it. {\bf 3.} After this process, the surviving terms are all of the form 
\begin{align*}
\langle 0| C(\lambda^C_1) \dots C(\lambda^C_{\ell-1}) O_{\ell} \dots O_{0} |0\rangle  
\end{align*}
where a single pair of operators $O_i O_{i-1}$ is equal to $A(\lambda^B_i) D(\lambda^C_{\ell})$ or 
$A(\lambda^C_{\ell}) D(\lambda^B_i)$ and all the rest are $B$-operators. Using (\ref{2ab}) and (\ref{2db}) repeatedly, all $A$ and $D$-operators can ultimately be transferred to the right so that they act on $|0\rangle$. Since $|0\rangle$ is an eigenvector of $A$ and $D$, these operators are replaced by their respective eigenfunctions $a$ and $d$. {\bf 4.} The result of this process is a sum over terms of the form
\begin{align*}
\langle 0| C(\lambda^C_1) \dots C(\lambda^C_{\ell-1}) B_{\ell-1} \dots B_{1} |0\rangle
\end{align*}
where each $B_i$ is a $B$-operator but whose argument is not necessarily $\lambda^B_i$. Hence we have reduced the original object to a sum over scalar products of one dimension smaller. Now return to step {\bf 1} and repeat the entire process.

At the end of the algorithm, we find that
\begin{align}
\langle 0|
\prod_{i=1}^{\ell}
C(\lambda^C_i)
\prod_{j=1}^{\ell}
B(\lambda^B_j)
|0\rangle
=
\sum
\prod_{\lbi}
a(\lbi)
\prod_{\lcii}
a(\lcii)
\prod_{\lbii}
d(\lbii)
\prod_{\lci}
d(\lci)
K\left( \begin{array}{c|c} \{\lci\} & \{\lcii\} \\ \{\lbi\} & \{\lbii\} \end{array} \right)
\label{sp-sum1}
\end{align}
where the sum is over all ways of partitioning $\{\lambda^C\}$ and $\{\lambda^B\}$ into disjoint subsets, 
{\it but} such that each term contains a product of exactly $\ell$ $a$ eigenfunctions and $\ell$ $d$ eigenfunctions. Clearly, this is equivalent to the constraint that $|\lci| = |\lbi|$ and $|\lcii| = |\lbii|$. The coefficients $K$ are {\it a priori} unknown, and potentially very complicated combinations of the functions $f$ and $g$ which appear as coefficients during the course of the algorithm.

To determine the coefficients, one makes use of the fact that $K$ {\it does not} depend on the eigenfunctions $a$ and $d$. Hence these coefficients are the same for any $SU(2)$-invariant model obeying the set of axioms (\ref{mon-sl2})--(\ref{ps-vac2}), and we are free to specialize to one such model in what follows. To that end, we choose the monodromy matrix of a length $\ell$ XXX Heisenberg spin-1/2 chain. The accompanying pseudo-vacuum is the state with all spins up. Explicitly, we set
\begin{align}
T_{\alpha}(\lambda)
=
R_{\alpha 1}(\lambda,w_1) \dots R_{\alpha \ell}(\lambda,w_{\ell}),
\quad\quad
|0\rangle
=
\bigotimes_{i=1}^{\ell} \left( \begin{array}{c} 1 \\ 0 \end{array} \right)_i\,,
\quad\quad
\langle 0|
=
\bigotimes_{i=1}^{\ell} \left( \begin{array}{cc} 1 & 0 \end{array} \right)_i\,
\label{sl2-mon-psvac}
\end{align}
where each $R_{\alpha i}(\lambda,w_i)$ is an $SU(2)$-invariant $R$-matrix (\ref{Rmat-sl2}) acting in $V_{\alpha} \otimes V_i$, and $V_i$ are the vector spaces associated to each site in the spin chain, $1 \leq i \leq \ell$. It is a simple calculation to show that, for this model, the eigenfunctions $a$ and $d$ are given by
\begin{align}
a(\lambda)
=
\prod_{i=1}^{\ell}
\frac{(\lambda-w_i+1)}{(\lambda-w_i)}
=
\prod_{i=1}^{\ell}
f(\lambda,w_i),
\quad\quad
d(\lambda)
=
1.
\end{align}
The key point regarding the specialization to this model is that the parameters $w_i$ (usually called the inhomogeneities, or quantum variables of the spin chain) do not appear in the coefficients $K$. Hence we are free to choose them as we see fit. We firstly change the normalization of the scalar product by defining
\begin{align}
S(\{\lc\} | \{\lb\})
=
\frac{\langle \{\lc\} | \{\lb\} \rangle}{f(\{\lc\},\{w\}) f(\{\lb\},\{w\})} 
\end{align}
and consider the limit
\begin{align}
\{w_1,\dots,w_{\ell}\}
\rightarrow
\{\lcii\} \cup \{\lbi\}
\label{spec}
\end{align}
where $\{\lcii\} \subseteq \{\lc\}$ and $\{\lbi\} \subseteq \{\lb\}$ are fixed subsets of the original variables, such that $|\lcii|+|\lbi| = \ell$. Making this choice, the only term which survives in the sum (\ref{sp-sum1}) is the one for which $\{\lcii\}$ and $\{\lbi\}$ are the arguments of the $a$ eigenfunctions. It follows that
\begin{align}
S(\{\lc\} | \{\lb\}) \Big|_{\{w\} \rightarrow \{\lcii\} \cup \{\lbi\}}
=
\Big( f(\lci,\lbi) f(\lbii,\lbi) f(\lci,\lcii) f(\lbii,\lcii) \Big)^{-1}
K\left( \begin{array}{c|c} \{\lci\} & \{\lcii\} \\ \{\lbi\} & \{\lbii\} \end{array} \right) .
\label{coeff-isol}
\end{align}
To finish the calculation, it remains to explicitly evaluate the scalar product of this spin chain under the specialization (\ref{spec}) of the inhomogeneities. For that, we use the graphical representation of the scalar product (see, for example, \cite{whe}) as shown in Figure {\bf\ref{fig-sl2-sp}}.
\begin{figure}

\begin{center}
\begin{minipage}{4.3in}

\setlength{\unitlength}{0.00028cm}
\begin{picture}(40000,18500)(0000,7000)

\put(1000,16400){$
\langle \{\lambda^C\} | \{\lambda^B\} \rangle 
=
$}


\put(11500,23000){\tiny $\lbii$}
\put(13000,22700){$\left\{\phantom{\frac{\prod}{\prod}}\right.$} 

\path(14000,24000)(24000,24000) 
\put(14000,24500){\tiny 1}  
\put(23500,24500){\tiny 2}

\path(14000,22000)(24000,22000) 
\put(14000,22500){\tiny 1}  
\put(23500,22500){\tiny 2}

\put(11500,19000){\tiny $\lbi$}
\put(13000,18700){$\left\{\phantom{\frac{\prod}{\prod}}\right.$} 

\path(14000,20000)(24000,20000)
\put(14000,20500){\tiny 1}  
\put(23500,20500){\tiny 2}

\path(14000,18000)(24000,18000) 
\put(14000,18500){\tiny 1}  
\put(23500,18500){\tiny 2}


\put(11500,15000){\tiny $\lcii$}
\put(13000,14700){$\left\{\phantom{\frac{\prod}{\prod}}\right.$} 

\path(14000,16000)(24000,16000)
\put(14000,16500){\tiny 2}  
\put(23500,16500){\tiny 1}

\path(14000,14000)(24000,14000)
\put(14000,14500){\tiny 2}  
\put(23500,14500){\tiny 1}

\put(11500,11000){\tiny $\lci$}
\put(13000,10700){$\left\{\phantom{\frac{\prod}{\prod}}\right.$} 

\path(14000,12000)(24000,12000)
\put(14000,12500){\tiny 2} 
\put(23500,12500){\tiny 1}

\path(14000,10000)(24000,10000)
\put(14000,10500){\tiny 2} 
\put(23500,10500){\tiny 1}


\path(16000,8000)(16000,26000)
\put(16000,6000){\tiny $w_1$}
\put(16000,7000){\tiny 1}
\put(16200,16800){\tiny\color{red} 2}
\put(16000,26500){\tiny 1} 

\path(18000,8000)(18000,26000)
\put(18000,7000){\tiny 1}
\put(18200,16800){\tiny\color{red} 2}
\put(18000,26500){\tiny 1} 
 
\path(20000,8000)(20000,26000)
\put(20000,7000){\tiny 1}
\put(20200,16800){\tiny\color{red} 2}
\put(20000,26500){\tiny 1} 
 
\path(22000,8000)(22000,26000)
\put(22000,6000){\tiny $w_{\ell}$}
\put(22000,7000){\tiny 1}
\put(22200,16800){\tiny\color{red} 2}
\put(22000,26500){\tiny 1}


\put(27000,16400){$=$}

\put(29000,19400){$Z(\{\lbi\}\cup\{\lbii\}|\{w\})$}

\put(32000,16400){$\times$}

\put(29000,13400){$Z(\{\lci\}\cup\{\lcii\}|\{w\})$}

\end{picture}

\end{minipage}
\end{center}

\caption{The graphical representation of the scalar product (\ref{sp-def}), in the case of the monodromy matrix (\ref{sl2-mon-psvac}). Each of the top $\ell$ horizontal lines represent $B(\lambda^B_i)$ operators, while the bottom $\ell$ horizontal lines represent $C(\lambda^C_i)$ operators. The pseudo-vacua in (\ref{sl2-mon-psvac}) are indicated by the state variables at the ends of each vertical line. Since the number of vertical sites is also $\ell$, this lattice factorizes into a product of two domain wall partition functions, as shown.}

\label{fig-sl2-sp}

\end{figure}
From Figure {\bf\ref{fig-sl2-sp}}, we see that the scalar product factorizes as
\begin{align}
\langle \{\lambda^C\} | \{\lambda^B\} \rangle
=
Z(\{\lbi\}\cup\{\lbii\} | \{w\})
Z(\{\lci\}\cup\{\lcii\} | \{w\})
\end{align}
where $Z$ denotes a domain wall partition function of Subsection {\bf\ref{ssec:dom}}. In view of this factorization, using (\ref{res-prop}) we find that
\begin{align}
S(\{\lc\} | \{\lb\}) \Big|_{\{w\} \rightarrow \{\lcii\} \cup \{\lbi\}}
=
\frac{Z(\lbii | \lcii) Z(\lci | \lbi)}{f(\lbii,\lcii) f(\lci,\lbi)} .
\label{xxx-fact}
\end{align}
Comparing (\ref{coeff-isol}) and (\ref{xxx-fact}), we have shown that
\begin{align}
K\left( \begin{array}{c|c} \{\lci\} & \{\lcii\} \\ \{\lbi\} & \{\lbii\} \end{array} \right)
=
f(\lbii,\lbi) f(\lci,\lcii) Z(\lbii | \lcii) Z(\lci | \lbi) .
\end{align}
Combining this with (\ref{sp-sum1}) completes the proof of (\ref{sum-sl2}).

\section{Derivation of the sum form for the $SU(3)$-invariant scalar product} 
\label{app:2}

In close analogy with the previous appendix, we describe here the derivation of the sum formula (\ref{sum-sl3}). The approach used is the same as in the original article \cite{res}, but we go into slightly more detail. As our starting point, we will assume that the scalar product can be written in the form 
\begin{multline}
\label{sl3-sp-sum1}
\langle \{\mc\}, \{\lc\} | \{\lb\}, \{\mb\} \rangle
=
\sum
\prod_{\lbi} a_1(\lbi) \prod_{\lcii} a_1(\lcii)
\prod_{\lbii} a_2(\lbii) \prod_{\lci} a_2(\lci)
\\
\times
\prod_{\mbii} a_2(\mbii) \prod_{\mci} a_2(\mci)
\prod_{\mbi} a_3(\mbi) \prod_{\mcii} a_3(\mcii)
K\left( 
\begin{array}{c|c||c|c} 
\{\lci\} & \{\lcii\} & \{\mci\} & \{\mcii\} \\ \{\lbi\} & \{\lbii\} & \{\mbi\} & \{\mbii\} 
\end{array} 
\right)
\end{multline}
where the sum is over all partitions of the variables, such that $|\lci| = |\lbi|$, $|\lcii| = |\lbii|$, 
$|\mci| = |\mbi|$, $|\mcii| = |\mbii|$, and $K$ are coefficients which depend on the partitioning. This fact is proved in an algorithmic fashion similar to that in Appendix {\bf\ref{app:1}}. However the situation is substantially more complicated now, since $|\{\lambda^B\},\{\mu^B\} \rangle$ is a sum over monomials of the operators $\{t_{12},t_{13},t_{22},t_{23}\}$ acting on $|0\rangle$, while $\langle \{\mu^C\},\{\lambda^C\}|$ is a sum over monomials of $\{t_{21},t_{31},t_{22},t_{32}\}$ acting on $\langle 0|$. Hence it is necessary to extract from (\ref{int-sl3}) a host of commutation relations for the algebraic manipulation of these operators, which can all be written in the form 
\begin{align}
\label{sl3-gen-com}
t_{il}(\lambda) t_{jk}(\mu) + g(\lambda,\mu) t_{jl}(\lambda) t_{ik}(\mu)
=
t_{jk}(\mu) t_{il}(\lambda) + g(\lambda,\mu) t_{jl}(\mu) t_{ik}(\lambda). 
\end{align}
Making successive use of (\ref{sl3-gen-com}) with differing values of $i,j,k,l$, and using the rules 
(\ref{ps-vac3}) for the action on the pseudo-vacuum, one can conclude that (\ref{sl3-sp-sum1}) holds.
 
To calculate the coefficients $K$, we proceed in close analogy with Appendix {\bf\ref{app:1}}. As before, the key point is that $K$ is independent of the eigenfunctions $a_1,a_2,a_3$, meaning that we are free to specialize to any $SU(3)$-invariant model satisfying the list of axioms (\ref{mon-sl3})--(\ref{ps-vac4}). We choose the monodromy matrix of a length $(\ell+m)$ XXX Heisenberg chain, whose first $\ell$ sites are in the fundamental representation and whose last $m$ sites are in the anti-fundamental representation:
\begin{align}
\label{sl3-mon}
T^{(1)}_{\alpha}(\lambda)
=
R^{(1)}_{\alpha 1}(\lambda,w_1) 
\dots 
R^{(1)}_{\alpha \ell}(\lambda,w_{\ell})
R^{*(1)}_{\alpha 1'}(\lambda,v_1) 
\dots 
R^{*(1)}_{\alpha m'}(\lambda,v_m) 
\end{align}
where each $R^{(1)}_{\alpha i}(\lambda,w_i)$ and $R^{*(1)}_{\alpha j'}(\lambda,v_j)$ is an $SU(3)$-invariant $R$-matrix, given by (\ref{Rmat-sl3}) and (\ref{Rmat-sl3t}), acting in $V_{\alpha} \otimes V_i$ and 
$V_{\alpha} \otimes V_{j'}$ respectively. For the pseudo-vacuum states, one chooses
\begin{align}
\label{sl3-psvac}
|0\rangle
=
\bigotimes_{i=1}^{\ell} 
\left( \begin{array}{c} 1 \\ 0 \\ 0 \end{array} \right)_i\ 
\bigotimes_{j=1}^{m}
\left( \begin{array}{c} 0 \\ 0 \\ 1 \end{array} \right)_{j'}\,,
\quad\quad
\langle 0|
=
\bigotimes_{i=1}^{\ell} 
\left( \begin{array}{ccc} 1 & 0 & 0 \end{array} \right)_i\ 
\bigotimes_{j=1}^{m} 
\left( \begin{array}{ccc} 0 & 0 & 1 \end{array} \right)_{j'}\ .
\end{align}
For this model, we find that the eigenfunctions $a_1,a_2,a_3$ are given by
\begin{align}
a_1(\lambda)
=
\prod_{i=1}^{\ell} \frac{(\lambda - w_i + 1)}{(\lambda - w_i)},
\qquad
a_2(\lambda)
=
1,
\qquad
a_3(\lambda)
=
\prod_{j=1}^{m} \frac{(v_j-\lambda+1)}{(v_j-\lambda)} .
\end{align}
We have introduced two sets of parameters $w_i$ and $v_j$ which do not appear in the coefficients $K$, and which we may specialize in any way. We firstly change the normalization of the scalar product by defining
\begin{align}
S(\{\mu^C\},\{\lambda^C\} | \{\lambda^B\},\{\mu^B\})
=
\frac{\langle \{\mu^C\},\{\lambda^C\} | \{\lambda^B\},\{\mu^B\} \rangle}
{f(\{\lambda^C\},\{w\}) f(\{\lambda^B\},\{w\}) f(\{v\},\{\mu^C\}) f(\{v\},\{\mu^B\})}
\end{align}
and consider the limit
\begin{align}
\{w_1,\dots,w_{\ell}\} \rightarrow \{\lcii\}\cup\{\lbi\},
\qquad
\{v_1,\dots,v_m\} \rightarrow \{\mcii\}\cup\{\mbi\}
\label{spec-sl3}
\end{align}
where $\{\lcii\} \subseteq \{\lc\}$, $\{\lbi\} \subseteq \{\lb\}$, $\{\mcii\} \subseteq \{\mc\}$, 
$\{\mbi\} \subseteq \{\mb\}$ are fixed subsets of the original variables, such that 
$|\lcii|+|\lbi| = \ell$ and $|\mcii|+|\mbi| = m$. Making this choice, the only term which survives in the sum (\ref{sl3-sp-sum1}) is the one for which $\{\lcii\},\{\lbi\}$ and $\{\mcii\},\{\mbi\}$ are arguments of the $a_1$ and $a_3$ eigenfunctions, respectively. Hence we find that
\begin{multline}
S( \{\mc\}, \{\lc\} | \{\lb\}, \{\mb\} )
\Big|_{\substack{\{w\} \rightarrow \{\lcii\}\cup\{\lbi\} \\ \{v\} \rightarrow \{\mcii\}\cup\{\mbi\}}}
=
\Big( f(\lci,\lbi) f(\lbii,\lbi) f(\lci,\lcii) f(\lbii,\lcii) \Big)^{-1}
\\
\times
\Big( f(\mbi,\mci) f(\mbi,\mbii) f(\mcii,\mci) f(\mcii,\mbii) \Big)^{-1}
K\left( 
\begin{array}{c|c||c|c} 
\{\lci\} & \{\lcii\} & \{\mci\} & \{\mcii\} \\ \{\lbi\} & \{\lbii\} & \{\mbi\} & \{\mbii\} 
\end{array} 
\right) .
\label{138}
\end{multline}
To obtain a more explicit expression for $K$, it is again helpful to use the graphical representation of the scalar product of this spin chain (see, for example, \cite{fw2}). This is given in Figure {\bf\ref{fig-sl3-sp}}.
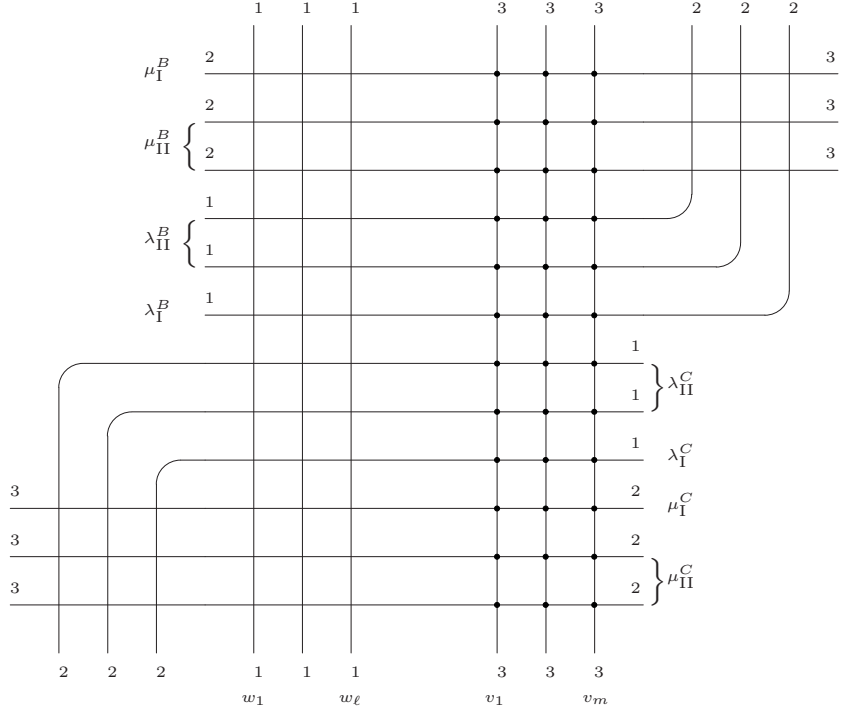
\begin{figure}

\begin{center}
\begin{minipage}{4.3in}

\setlength{\unitlength}{0.00032cm}
\begin{picture}(40000,28000)(6000,-3000)


\put(11500,22000){\tiny $\mbi$}

\put(11500,19000){\tiny $\mbii$}
\put(13000,18700){$\left\{\phantom{\frac{\prod}{\prod}}\right.$} 

\put(11500,15000){\tiny $\lbii$}
\put(13000,14700){$\left\{\phantom{\frac{\prod}{\prod}}\right.$}

\put(11500,12000){\tiny $\lbi$}

\path(14000,22000)(32000,22000)
 \put(14000,22500){\tiny 2}
 \put(39500,22500){\tiny 3}

\path(14000,20000)(32000,20000)
 \put(14000,20500){\tiny 2}
 \put(39500,20500){\tiny 3}

\path(14000,18000)(32000,18000)
 \put(14000,18500){\tiny 2}
 \put(39500,18500){\tiny 3}

\path(14000,16000)(32000,16000)
 \put(14000,16500){\tiny 1}

\path(14000,14000)(32000,14000)
\put(14000,14500){\tiny 1}

\path(14000,12000)(32000,12000)
 \put(14000,12500){\tiny 1}


\path(32000,22000)(40000,22000)
\path(32000,20000)(40000,20000)
\path(32000,18000)(40000,18000)

\path(32000,16000)(33000,16000)
\put(33000,17000){\arc{2000}{0}{1.5708}}
\path(34000,17000)(34000,24000)
\put(34000,24500){\tiny 2}

\path(32000,14000)(35000,14000)
\put(35000,15000){\arc{2000}{0}{1.5708}}
\path(36000,15000)(36000,24000)
\put(36000,24500){\tiny 2}

\path(32000,12000)(37000,12000)
\put(37000,13000){\arc{2000}{0}{1.5708}}
\path(38000,13000)(38000,24000)
\put(38000,24500){\tiny 2}


\put(33000,9000){\tiny $\lcii$}
\put(31000,8700){$\left.\phantom{\frac{\prod}{\prod}}\right\}$}

\put(33000,6000){\tiny $\lci$}

\put(33000,4000){\tiny $\mci$}

\put(33000,1000){\tiny $\mcii$}
\put(31000,0700){$\left.\phantom{\frac{\prod}{\prod}}\right\}$}

\path(14000,10000)(32000,10000)
 \put(31500,10500){\tiny 1}

\path(14000,8000)(32000,8000)
 \put(31500,8500){\tiny 1}

\path(14000,6000)(32000,6000)
 \put(31500,6500){\tiny 1}

\path(14000,4000)(32000,4000)
 \put(6000,4500){\tiny 3}
 \put(31500,4500){\tiny 2}

\path(14000,2000)(32000,2000)
 \put(6000,2500){\tiny 3}
 \put(31500,2500){\tiny 2}

\path(14000,0000)(32000,0000)
 \put(6000,500){\tiny 3}
 \put(31500,500){\tiny 2}


\path(9000,10000)(14000,10000)
\put(9000,9000){\arc{2000}{3.142}{4.712}}
\path(8000,9000)(8000,-2000)
\put(8000,-3000){\tiny 2}

\path(11000,8000)(14000,8000)
\put(11000,7000){\arc{2000}{3.142}{4.712}}
\path(10000,7000)(10000,-2000)
\put(10000,-3000){\tiny 2}

\path(13000,6000)(14000,6000)
\put(13000,5000){\arc{2000}{3.142}{4.712}}
\path(12000,5000)(12000,-2000)
\put(12000,-3000){\tiny 2}

\path(6000,4000)(14000,4000)
\path(6000,2000)(14000,2000)
\path(6000,0000)(14000,0000)


\path(16000,-2000)(16000,24000)
\put(15500,-4000){\tiny $w_1$}
\put(16000,-3000){\tiny 1}
\put(16000,24500){\tiny 1}

\path(18000,-2000)(18000,24000)
\put(18000,-3000){\tiny 1}
\put(18000,24500){\tiny 1}

\path(20000,-2000)(20000,24000)
\put(19500,-4000){\tiny $w_{\ell}$}
\put(20000,-3000){\tiny 1}
\put(20000,24500){\tiny 1}


\path(26000,-2000)(26000,24000)
\put(25500,-4000){\tiny $v_1$}
\put(26000,-3000){\tiny 3}
\put(26000,24500){\tiny 3}

\path(28000,-2000)(28000,24000)
\put(28000,-3000){\tiny 3}
\put(28000,24500){\tiny 3}

\path(30000,-2000)(30000,24000)
\put(29500,-4000){\tiny $v_m$}
\put(30000,-3000){\tiny 3}
\put(30000,24500){\tiny 3}


\put(26000,0){\circle*{200}}
\put(28000,0){\circle*{200}}
\put(30000,0){\circle*{200}}

\put(26000,2000){\circle*{200}}
\put(28000,2000){\circle*{200}}
\put(30000,2000){\circle*{200}}

\put(26000,4000){\circle*{200}}
\put(28000,4000){\circle*{200}}
\put(30000,4000){\circle*{200}}

\put(26000,6000){\circle*{200}}
\put(28000,6000){\circle*{200}}
\put(30000,6000){\circle*{200}}

\put(26000,8000){\circle*{200}}
\put(28000,8000){\circle*{200}}
\put(30000,8000){\circle*{200}}

\put(26000,10000){\circle*{200}}
\put(28000,10000){\circle*{200}}
\put(30000,10000){\circle*{200}}

\put(26000,12000){\circle*{200}}
\put(28000,12000){\circle*{200}}
\put(30000,12000){\circle*{200}}

\put(26000,14000){\circle*{200}}
\put(28000,14000){\circle*{200}}
\put(30000,14000){\circle*{200}}

\put(26000,16000){\circle*{200}}
\put(28000,16000){\circle*{200}}
\put(30000,16000){\circle*{200}}

\put(26000,18000){\circle*{200}}
\put(28000,18000){\circle*{200}}
\put(30000,18000){\circle*{200}}

\put(26000,20000){\circle*{200}}
\put(28000,20000){\circle*{200}}
\put(30000,20000){\circle*{200}}

\put(26000,22000){\circle*{200}}
\put(28000,22000){\circle*{200}}
\put(30000,22000){\circle*{200}}

\end{picture}

\end{minipage}
\end{center}

\caption{The graphical representation of the scalar product (\ref{sl3-sp-def}), in the case of the monodromy matrix (\ref{sl3-mon}). In the top half of the lattice, the top $m$ horizontal lines represent $B^{(2)}(\mu^B_i)$ operators, and the following $\ell$ lines represent $B^{(1)}(\lambda^B_i)$ operators. In the bottom half, the first $\ell$ lines represent $C^{(1)}(\lambda^C_i)$ operators, and the following $m$ horizontal lines represent $C^{(2)}(\mu^C_i)$ operators. The state variables at the ends of the vertical lines correspond with the pseudo-vacua (\ref{sl3-psvac}).}

\label{fig-sl3-sp}

\end{figure}
Unlike in the $SU(2)$ case of Appendix {\bf\ref{app:1}}, the $SU(3)$ scalar product in Figure {\bf\ref{fig-sl3-sp}} does not {\it a priori} factorize into recognizable objects. It is necessary to specialize $\{w\}$ and $\{v\}$ to the values in (\ref{spec-sl3}), which causes the lattice to simplify. We indicate this simplification in Figures {\bf\ref{fig-sl3-sp2}}--{\bf\ref{fig-sl3-sp3}}. 
\begin{figure}[H]

\begin{center}
\begin{minipage}{4.3in}

\setlength{\unitlength}{0.00028cm}
\begin{picture}(40000,30000)(4000,-4000)


\put(25000,22700){$\left\{\phantom{\frac{\prod}{\prod}}\right.$}

\put(23500,23000){\tiny $\mbii$} 

\put(11500,15000){\tiny $\lbii$}
\put(13000,14700){$\left\{\phantom{\frac{\prod}{\prod}}\right.$}

\put(11500,12000){\tiny $\lbi$}

\path(26000,24000)(32000,24000)
 \put(26000,24500){\tiny 2}
 \put(39500,24500){\tiny 3}

\path(26000,22000)(32000,22000)
 \put(26000,22500){\tiny 2}
 \put(39500,22500){\tiny 3}

\path(14000,16000)(32000,16000)
 \put(14000,16500){\tiny 1}

\path(14000,14000)(32000,14000)
\put(14000,14500){\tiny 1}

\path(14000,12000)(19000,12000)
\put(19000,13000){\arc{2000}{0}{1.5708}}
\put(21000,11000){\arc{2000}{3.142}{4.712}}
\path(21000,12000)(32000,12000)
 \put(14000,12500){\tiny 1}


\path(32000,24000)(40000,24000)
\path(32000,22000)(40000,22000)

\path(32000,16000)(33000,16000)
\put(33000,17000){\arc{2000}{0}{1.5708}}
\path(34000,17000)(34000,26000)
\put(34000,26500){\tiny 2}

\path(32000,14000)(35000,14000)
\put(35000,15000){\arc{2000}{0}{1.5708}}
\path(36000,15000)(36000,26000)
\put(36000,26500){\tiny 2}

\path(32000,12000)(37000,12000)
\put(37000,13000){\arc{2000}{0}{1.5708}}
\path(38000,13000)(38000,26000)
\put(38000,26500){\tiny 2}


\put(3500,0000){\tiny $\mci$}

\path(14000,10000)(17000,10000)
\put(17000,11000){\arc{2000}{0}{1.5708}}
\put(19000,9000){\arc{2000}{3.142}{4.712}}
\path(19000,10000)(32000,10000)
 \put(31500,10500){\tiny 1}
 
\path(14000,8000)(15000,8000)
\put(15000,9000){\arc{2000}{0}{1.5708}}
\put(17000,7000){\arc{2000}{3.142}{4.712}}
\path(17000,8000)(32000,8000)
 \put(31500,8500){\tiny 1}

\path(14000,6000)(32000,6000)
 \put(31500,6500){\tiny 1}

\path(14000,000)(28000,000)
\put(6000,500){\tiny 3}
\put(27500,500){\tiny 2}


\path(9000,10000)(14000,10000)
\put(9000,9000){\arc{2000}{3.142}{4.712}}
\path(8000,9000)(8000,-2000)
\put(8000,-3000){\tiny 2}

\path(11000,8000)(14000,8000)
\put(11000,7000){\arc{2000}{3.142}{4.712}}
\path(10000,7000)(10000,-2000)
\put(10000,-3000){\tiny 2}

\path(13000,6000)(14000,6000)
\put(13000,5000){\arc{2000}{3.142}{4.712}}
\path(12000,5000)(12000,-2000)
\put(12000,-3000){\tiny 2}

\path(6000,000)(14000,000)

\put(8300,-4600){\tiny $\lcii$}
\put(7950,-3000){$\underbrace{\phantom{...}}$}

\put(11700,-4600){\tiny $\lci$}


\path(16000,4000)(16000,7000)
\path(16000,9000)(16000,18000)
\put(16000,3000){\tiny 1}
\put(16000,18500){\tiny 1}

\path(18000,4000)(18000,9000)
\path(18000,11000)(18000,18000)
\put(18000,3000){\tiny 1}
\put(18000,18500){\tiny 1}

\path(20000,4000)(20000,11000)
\path(20000,13000)(20000,18000)
\put(20000,3000){\tiny 1}
\put(20000,18500){\tiny 1}

\put(16300,1400){\tiny $\lcii$}
\put(15950,3000){$\underbrace{\phantom{...}}$}

\put(19700,1400){\tiny $\lbi$}


\path(26000,-2000)(26000,18000)
\put(25500,-4600){\tiny $\mbi$}
\put(26000,-3000){\tiny 3}
\put(26000,18500){\tiny 2}

\path(28000,4000)(28000,26000)
\put(28000,3000){\tiny 2}
\put(28000,26500){\tiny 3}

\path(30000,4000)(30000,26000)
\put(30000,3000){\tiny 2}
\put(30000,26500){\tiny 3}

\put(28300,1400){\tiny $\mcii$}
\put(27950,3000){$\underbrace{\phantom{...}}$}


\put(26000,0000){\circle*{200}}

\put(26000,6000){\circle*{200}}
\put(28000,6000){\circle*{200}}
\put(30000,6000){\circle*{200}}

\put(26000,8000){\circle*{200}}
\put(28000,8000){\circle*{200}}
\put(30000,8000){\circle*{200}}

\put(26000,10000){\circle*{200}}
\put(28000,10000){\circle*{200}}
\put(30000,10000){\circle*{200}}

\put(26000,12000){\circle*{200}}
\put(28000,12000){\circle*{200}}
\put(30000,12000){\circle*{200}}

\put(26000,14000){\circle*{200}}
\put(28000,14000){\circle*{200}}
\put(30000,14000){\circle*{200}}

\put(26000,16000){\circle*{200}}
\put(28000,16000){\circle*{200}}
\put(30000,16000){\circle*{200}}

\put(28000,22000){\circle*{200}}
\put(30000,22000){\circle*{200}}

\put(28000,24000){\circle*{200}}
\put(30000,24000){\circle*{200}}

\end{picture}

\end{minipage}
\end{center}

\caption{Up to normalization of some of the vertices (which we suppress for simplicity, though it is obviously essential to the final answer) $S( \{\mc\}, \{\lc\} | \{\lb\}, \{\mb\} )$ simplifies to the lattice shown when $\{w\} \rightarrow \{\lcii\} \cup \{\lbi\}$ and $\{v\} \rightarrow \{\mcii\} \cup \{\mbi\}$. Due to the form of the renormalized Boltzmann weights, any vertex with the same horizontal and vertical variable is forbidden to be in certain configurations, while all its allowed configurations have weight 1. This causes the splitting of $\ell+m$ vertices in the lattice, $\ell$ of which are shown above (the remaining $m$ splittings were amongst the dotted vertices, but are no longer explicitly visible).}

\label{fig-sl3-sp2}

\end{figure}
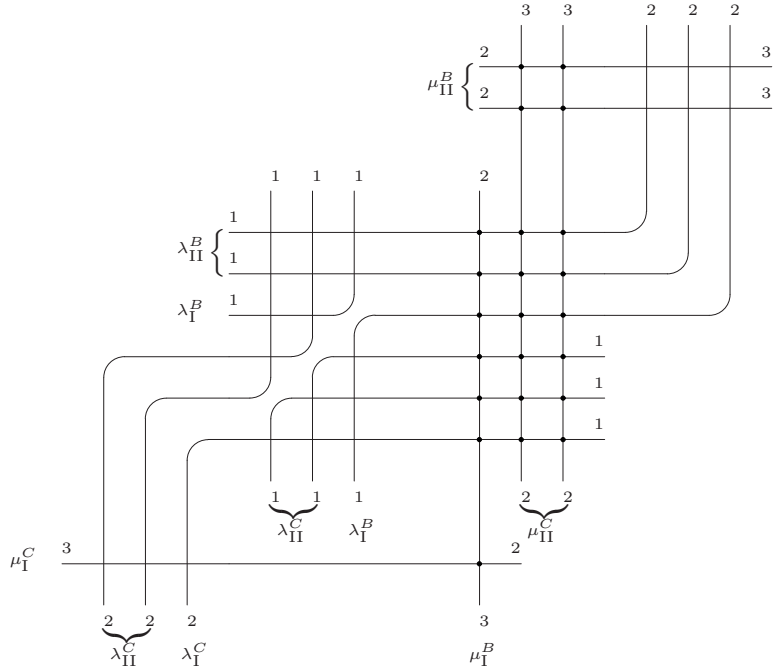
\begin{figure}[H]

\begin{center}
\begin{minipage}{4.3in}

\setlength{\unitlength}{0.00028cm}
\begin{picture}(40000,30000)(4000,-4000)


\put(25000,22700){$\left\{\phantom{\frac{\prod}{\prod}}\right.$}

\put(23500,23000){\tiny $\mbii$} 

\put(3500,15000){\tiny $\lbii$}
\put(5000,14700){$\left\{\phantom{\frac{\prod}{\prod}}\right.$}

\path(26000,24000)(32000,24000)
 \put(26000,24500){\tiny 2}
 \put(39500,24500){\tiny 3}

\path(26000,22000)(32000,22000)
 \put(26000,22500){\tiny 2}
 \put(39500,22500){\tiny 3}

\path(6000,16000)(32000,16000)
 \put(6000,16500){\tiny 1}

\path(6000,14000)(32000,14000)
\put(6000,14500){\tiny 1}


\path(32000,24000)(40000,24000)
\path(32000,22000)(40000,22000)

\path(32000,16000)(33000,16000)
\put(33000,17000){\arc{2000}{0}{1.5708}}
\path(34000,17000)(34000,26000)
\put(34000,26500){\tiny 2}

\path(32000,14000)(35000,14000)
\put(35000,15000){\arc{2000}{0}{1.5708}}
\path(36000,15000)(36000,26000)
\put(36000,26500){\tiny 2}

\path(38000,4000)(38000,26000)
\put(38000,3000){\tiny 1}
\put(38000,26500){\tiny 2}

\put(37700,1400){\tiny $\lbi$}


\put(3500,0000){\tiny $\mci$}

\path(14000,6000)(40000,6000)
 \put(39500,6500){\tiny 1}

\path(14000,000)(28000,000)
\put(6000,500){\tiny 3}
\put(27500,500){\tiny 2}


\path(8000,18000)(8000,-2000)
\put(8000,18500){\tiny 1}
\put(8000,-3000){\tiny 2}

\path(10000,18000)(10000,-2000)
\put(10000,18500){\tiny 1}
\put(10000,-3000){\tiny 2}

\path(13000,6000)(14000,6000)
\put(13000,5000){\arc{2000}{3.142}{4.712}}
\path(12000,5000)(12000,-2000)
\put(12000,-3000){\tiny 2}

\path(6000,000)(14000,000)

\put(8300,-4600){\tiny $\lcii$}
\put(7950,-3000){$\underbrace{\phantom{...}}$}

\put(11700,-4600){\tiny $\lci$}



\path(26000,-2000)(26000,18000)
\put(25500,-4600){\tiny $\mbi$}
\put(26000,-3000){\tiny 3}
\put(26000,18500){\tiny 2}

\path(28000,4000)(28000,26000)
\put(28000,3000){\tiny 2}
\put(28000,26500){\tiny 3}

\path(30000,4000)(30000,26000)
\put(30000,3000){\tiny 2}
\put(30000,26500){\tiny 3}

\put(28300,1400){\tiny $\mcii$}
\put(27950,3000){$\underbrace{\phantom{...}}$}


\put(26000,0000){\circle*{200}}

\put(26000,6000){\circle*{200}}
\put(28000,6000){\circle*{200}}
\put(30000,6000){\circle*{200}}

\put(26000,14000){\circle*{200}}
\put(28000,14000){\circle*{200}}
\put(30000,14000){\circle*{200}}

\put(26000,16000){\circle*{200}}
\put(28000,16000){\circle*{200}}
\put(30000,16000){\circle*{200}}

\put(28000,22000){\circle*{200}}
\put(30000,22000){\circle*{200}}

\put(28000,24000){\circle*{200}}
\put(30000,24000){\circle*{200}}

\end{picture}

\end{minipage}
\end{center}

\caption{The lattice shown in Figure {\bf\ref{fig-sl3-sp2}} can be simplified through successive uses of the Yang-Baxter equations (\ref{yb1}) and (\ref{yb2}) to the form shown above. Once again, this is correct up to an overall normalization, which we suppress here. Further applications of (\ref{yb2}) to this lattice cause it to manifestly factorize into two of the partition functions shown in Figure {\bf\ref{resh-pf}}, where one of these partition functions is rotated by 180\degree.}

\label{fig-sl3-sp3}

\end{figure}
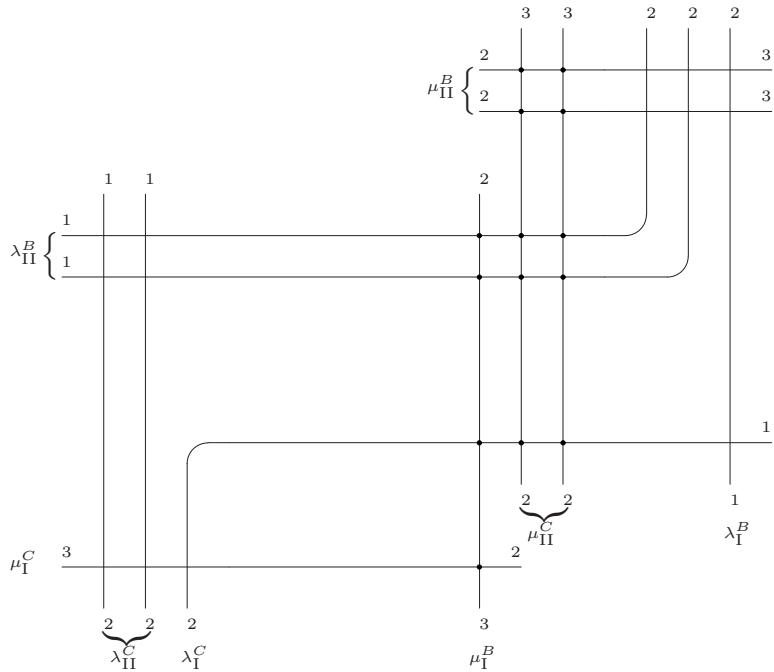
It is a slightly intricate process to prove the equality of the lattices shown in these figures, involving repeated use of the Yang-Baxter equations (\ref{yb1}) and (\ref{yb2}). This calculation is not difficult but we omit its full details since, as was remarked in \cite{res}, it is awkward and would require a lengthy digression. The final result of the calculation (keeping track of normalizations and all multiplicative terms acquired in the process) is the factorization
\begin{multline}
S( \{\mc\}, \{\lc\} | \{\lb\}, \{\mb\} )
\Big|_{\substack{\{w\} \rightarrow \{\lcii\}\cup\{\lbi\} \\ \{v\} \rightarrow \{\mcii\}\cup\{\mbi\}}}
=
\Big(
f(\lci,\lbi) f(\lbii,\lcii) f(\mbi,\mci) f(\mcii,\mbii)
\Big)^{-1}
\\
\times
f(\mci,\lci) f(\mbii,\lbii)
Z(\{\lbii\},\{\mci\} | \{\lcii\},\{\mbi\}) Z(\{\lci\},\{\mbii\} | \{\lbi\},\{\mcii\})
\label{139} 
\end{multline}
where $Z$ denotes the partition function of Subsection {\bf\ref{pf-resh}}, or a 180\degree\ rotation thereof, since all vertices are invariant under such a transformation. Comparing equation (\ref{138}) and (\ref{139}), we find that
\begin{multline}
K\left( 
\begin{array}{c|c||c|c} 
\{\lci\} & \{\lcii\} & \{\mci\} & \{\mcii\} \\ \{\lbi\} & \{\lbii\} & \{\mbi\} & \{\mbii\} 
\end{array} 
\right)
=
f(\lbii,\lbi) f(\lci,\lcii) 
f(\mbi,\mbii) f(\mcii,\mci)
\\
\times
f(\mci,\lci) f(\mbii,\lbii)
Z(\{\lbii\},\{\mci\} | \{\lcii\},\{\mbi\}) Z(\{\lci\},\{\mbii\} | \{\lbi\},\{\mcii\}) .
\end{multline}
Combining this with (\ref{sl3-sp-sum1}) completes the proof of (\ref{sum-sl3}).


\begin{thebibliography}{99}
%
\bibitem{kor}
V~E~Korepin,
\textit{Calculation of norms of Bethe wave functions,}
Comm. Math. Phys. {\bf 86} (1982),
391--418
%
\bibitem{ik}
A~G~Izergin, V~E~Korepin,
\textit{The Quantum Inverse Scattering Method approach
to correlation functions,} 
Comm. Math. Phys. {\bf 94} (1984), 
67
%
\bibitem{kbi}
V~E~Korepin, N~M~Bogoliubov, A~G~Izergin,
\textit{Quantum Inverse Scattering Method and Correlation Functions,}
Cambridge University Press (1993)
%
\bibitem{gau}
M~Gaudin,
\textit{La Fonction d'Onde de Bethe,}
Paris: Masson, (1983)
%
\bibitem{sla}
N~A~Slavnov,
\textit{Calculation of scalar products of wave functions and form factors in the framework of the algebraic Bethe Ansatz,}
Theor. Math. Phys. {\bf 79} (1989),
502--508
%
\bibitem{kmt}
N~Kitanine, J~M~Maillet, V~Terras,
{\it Form factors of the XXZ Heisenberg spin-$\frac{1}{2}$ finite chain,}
Nucl. Phys. B {\bf 554} (1999),
647--678,
{\tt arXiv:math-ph/9807020}
%
\bibitem{kmst}
N~Kitanine, J~M~Maillet, N~A~Slavnov, V~Terras,
\textit{On the algebraic Bethe Ansatz approach to the correlation functions of the XXZ spin-1/2 Heisenberg chain,}
{\tt arXiv:hep-th/0505006}
%
\bibitem{bpr}
S~Belliard, S~Pakuliak, E~Ragoucy,
\textit{Bethe Ansatz and Bethe vectors scalar products,}
SIGMA {\bf 6} (2010),
094,
{\tt arXiv:1012.1455}
%
\bibitem{tv2}
V~O~Tarasov, A~Varchenko,
\textit{Asymptotic solutions to the quantized Knizhnik--Zamolodchikov equation and Bethe vectors,}
{\tt arXiv:hep-th/9406060}
%
\bibitem{egsv}
J~Escobedo, N~Gromov, A~Sever, P~Vieira,
\textit{Tailoring three-point functions and integrability,}
JHEP Vol. 2011 No. 9 (2011),
28,
{\tt arXiv:1012.2475}
%
\bibitem{res}
N~Yu~Reshetikhin,
\textit{Calculation of the norm of Bethe vectors in models with $SU(3)$-symmetry,}
Zap. Nauchn. Sem. {\bf 150} (1986),
196--213
%
\bibitem{bprs2}
S~Belliard, S~Pakuliak, E~Ragoucy, N~A~Slavnov,
\textit{Algebraic Bethe Ansatz for scalar products in $SU(3)$-invariant integrable models,}
J. Stat. Mech. (2012) P10017,
{\tt arXiv:1207.0956} 
%
\bibitem{cae}
J~Caetano,
unpublished.
%
\bibitem{ize}
A~G~Izergin,
\textit{Partition function of the six-vertex model in a finite volume,}
Sov. Phys. Dokl. {\bf 32} (1987),
878--879
%
\bibitem{kos1}
I~Kostov,
\textit{Classical limit of the three-point function of $\mathcal{N}=4$ supersymmetric Yang--Mills theory from integrability,}
Phys. Rev. Lett. {\bf 108} (2012),
261604,
{\tt arXiv:1203.6180}
%
\bibitem{kos2}
I~Kostov,
\textit{Three-point function of semiclassical states at weak coupling,}
J. Phys. A {\bf 45} (2012),
494018,
{\tt arXiv:1205.4412} 
%
\bibitem{fw}
O~Foda, M~Wheeler,
\textit{Partial domain wall partition functions,}
JHEP Vol. 2012 No. 7 (2012),
186,
{\tt arXiv:1205.4400}
%
\bibitem{fst}
L~D~Faddeev, E~K~Sklyanin, L~A~Takhtajan,
\textit{Quantum inverse problem method. I,}
Theor. Math. Phys. {\bf 40} (1979),
688--706
%
\bibitem{fad}
L~D~Faddeev,
\textit{How algebraic Bethe Ansatz works for integrable models,}
Les-Houches lecture notes,
{\tt arXiv:hep-th/9605187} 
%
\bibitem{jim2}
M~Jimbo,
{\it Quantum $R$-matrix for the generalized Toda system,}
Comm. Math. Phys. {\bf 102} (1986),
537--547
%
\bibitem{bax1}
R~J~Baxter,
\textit{Exactly Solved Models in Statistical Mechanics,}
Dover (2008)
%
\bibitem{kor2}
V~E~Korepin,
\textit{Analysis of the bilinear relation of the six-vertex model,}
Dokl. Akad. Nauk. SSSR {\bf 265} (1982),
1361--1364
%
\bibitem{tar1}
V~O~Tarasov,
\textit{Structure of quantum $L$ operators for the $R$ matrix of the XXZ model,}
Theor. Math. Phys. {\bf 61} (1984),
1065--1072
%
\bibitem{tar2}
V~O~Tarasov,
\textit{Irreducible monodromy matrices for the $R$ matrix of the XXZ model and local lattice quantum Hamiltonians,}
Theor. Math. Phys. {\bf 63} (1985),
440--454
%
\bibitem{ft}
L~D~Faddeev, L~A~Takhtajan,
\textit{Spectrum and scattering of excitations in the one-dimensional isotropic Heisenberg model,}
J. Sov. Math. {\bf 24} (1984),
241--267
%
\bibitem{kr}
P~P~Kulish, N~Yu~Reshetikhin,
{\it Diagonalization of $gl(n)$ invariant transfer matrices and quantum $N$-wave system (Lee model),}
J. Phys. {\bf A 16} (1983),
L591--L596
%
\bibitem{br}
S~Belliard, E~Ragoucy,
{\it Nested Bethe Ansatz for ``all'' closed spin chains,}
J. Phys. {\bf A 41} (2008),
295202,
{\tt arXiv:0804.2822}
%
\bibitem{tv1}
V~O~Tarasov, A~Varchenko,
\textit{Combinatorial formulae for nested Bethe vectors,}
{\tt arXiv:math/0702277}
%
\bibitem{dri}
V~G~Drinfel'd,
\textit{A new realization of Yangians and quantized affine algebras,}
Sov. Math. Dokl. {\bf 36} (1988),
212
%
\bibitem{mol}
A~Molev,
\textit{Finite-dimensional irreducible representations of twisted Yangians,}
J. Math. Phys. {\bf 39} (1998),
5559--5600,
{\tt arXiv:q-alg/9711022}
%
\bibitem{kkmst}
N~Kitanine, K~Kozlowski, J~M~Maillet, N~A~Slavnov, V~Terras,
\textit{On correlation functions of integrable models associated to the six-vertex $R$-matrix,}
J. Stat. Mech. 0701:P01022 (2007),
{\tt arXiv:hep-th/0611142}
%
\bibitem{bprs1}
S~Belliard, S~Pakuliak, E~Ragoucy, N~A~Slavnov,
\textit{Highest coefficient of scalar products in $SU(3)$-invariant integrable models,}
J. Stat. Mech. (2012) P09003,
{\tt arXiv:1206.4931}
%
\bibitem{fjks}
O~Foda, Y~Jiang, I~Kostov, D~Serban,
\textit{A tree-level 3-point function in the $su(3)$-sector of planar $\mathcal{N}=4$ SYM,}
{\tt arXiv:1302.3539}
%
\bibitem{whe}
M~Wheeler,
\textit{An Izergin--Korepin procedure for calculating scalar products in the six-vertex model,}
Nucl. Phys. B {\bf 852} (2011),
468--507,
{\tt arXiv:1104.2113}
%
\bibitem{fw2}
O~Foda, M~Wheeler,
\textit{Colour-independent partition functions in coloured vertex models,}
Nucl. Phys. B {\bf 871} (2013), 
330--361, 
{\tt arXiv:1301.5158}
%
\end{thebibliography}
\end{document}